\DeclareMathAlphabet{\mathcal}{OMS}{cmsy}{m}{n}
\DeclareMathAlphabet{\mathrm}{OT1}{bch}{m}{n}
\DeclareMathAlphabet{\mathit}{OT1}{bch}{m}{it}
\DeclareMathOperator*{\argmax}{arg\,max}
\newtheorem{theorem}{Theorem}
\newtheorem{lemma}{Lemma}
\newcommand{\sysname}{MV-Sketch\xspace}
\newcommand{\CS}{\mathcal{S}}
\newcommand{\CD}{\mathcal{D}}
\newcommand{\para}[1]{\noindent\textbf{#1}}
\newcommand\fs@betterruled{%
  \def\@fs@cfont{\bfseries}\let\@fs@capt\floatc@ruled
  \def\@fs@pre{\vspace*{0pt}\hrule height.8pt depth0pt \kern2pt}%
  \def\@fs@post{\kern2pt\hrule\relax}%
  \def\@fs@mid{\kern2pt\hrule\kern2pt}%
  \let\@fs@iftopcapt\iftrue}
\begin{document}

%----------------------------------------------------------------------------
\title{A Fast and Compact Invertible Sketch for Network-Wide Heavy Flow
Detection}

\author{Lu Tang,
        Qun Huang,
        Patrick P. C. Lee%
\thanks{This work was supported in part by Research Grants Council of Hong
Kong (GRF 14204017), National Key R\&D Program of China (2019YFB1802600), and
National Natural Science Foundation of China (Grant No. 61802365).}
\thanks{An earlier version of this paper appeared at \cite{Tang2019}. In this
extended version, we extend \sysname for network-wide heavy flow detection and
present its implementation in programmable switches.  We also add new
evaluation results in both software and hardware environments.}
\thanks{L. Tang and P. P. C. Lee are with the Department of Computer Science
and Engineering, The Chinese University of Hong Kong, Hong Kong, China
(Emails:\{ltang,pclee\}@cse.cuhk.edu.hk).}% 
\thanks{Q. Huang is with Department of Computer Science and Technology, Peking
University, Beijing, China (Email: huangqun@pku.edu.cn).}
\thanks{Corresponding author: Patrick P. C. Lee.}
}

% The paper headers
%\markboth{
%   IEEE/ACM TRANSACTIONS ON NETWORKING}
%{Tang \MakeLowercase{\textit{et al.}}: A Fast and Compact Invertible Sketch
%for Network-Wide Heavy Flow Detection}

\maketitle

%----------------------------------------------------------------------------
% Abstract
%----------------------------------------------------------------------------
\begin{abstract} 
Fast detection of heavy flows (e.g., heavy hitters and heavy changers) in
massive network traffic is challenging due to the stringent requirements of
fast packet processing and limited resource availability.  Invertible sketches
are summary data structures that can recover heavy flows with small memory
footprints and bounded errors, yet existing invertible sketches incur high
memory access overhead that leads to performance degradation.  We present
\sysname, a fast and compact invertible sketch that supports heavy flow
detection with small and static memory allocation.  \sysname tracks candidate
heavy flows inside the sketch data structure via the idea of majority voting,
such that it incurs small memory access overhead in both update and query
operations, while achieving high detection accuracy.  We present theoretical
analysis on the memory usage, performance, and accuracy of \sysname in both
local and network-wide scenarios.  We further show how \sysname can be
implemented and deployed on P4-based programmable switches subject to hardware 
deployment constraints. We conduct evaluation in both software and hardware
environments.  Trace-driven evaluation in software shows that \sysname
achieves higher accuracy than existing invertible sketches, with up to
3.38$\times$ throughput gain. We also show how to boost the performance of
\sysname with SIMD instructions.  Furthermore, we evaluate \sysname on a
Barefoot Tofino switch and show how \sysname achieves line-rate measurement
with limited hardware resource overhead. 
\end{abstract}

\begin{IEEEkeywords}
sketch, network measurement
\end{IEEEkeywords}

%---------------------------------------------------------------------------
% Introduction
%---------------------------------------------------------------------------
\section{Introduction}
\label{sec:introduction}

\IEEEPARstart{I}{dentifying} abnormal patterns of {\em flows} (e.g., hosts,
source-destination pairs, or 5-tuples) in massive network traffic is essential
for various network management tasks, such as traffic engineering
\cite{Benson2011}, load balancing \cite{Alizadeh2014} and intrusion detection
\cite{Garcia2009}.  
Two types of abnormal flows are of particular interest: {\em heavy hitters}
(i.e., flows that generate an unexpectedly high volume of traffic) and 
{\em heavy changers} (i.e., flows that generate an unexpectedly high change of
traffic volume in a short duration).  By identifying both heavy hitters and
heavy changers (collectively referred to as {\em heavy flows}), network
operators can quickly respond to performance outliers, mis-behaved usage, and
potential DDoS attacks, so as to maintain network stability and QoS
guarantees.  

Unfortunately, the stringent requirements of fast packet processing and
limited memory availability pose challenges to practical heavy flow detection.
First, the packet processing rate of heavy flow detection must keep pace with
the ever-increasing network speed, especially in the worst case when traffic
bursts or attacks happen \cite{Huang2017}.  For example, a fully utilized
10\,Gb/s link with a minimum packet size of 64~bytes implies that the heavy
flow detection algorithm must process at least 14.88M packets per second.  In
addition, the available memory footprints are constrained in practice 
(e.g., less than 2\,MB of SRAM per stage in emerging programmable switches 
\cite{Bosshart2013,Sivaraman2017}). 
While per-flow monitoring with linear hash tables is arguably feasible in
software \cite{Alipourfard2018}, its performance degrades once
the working set grows beyond the available software cache capacity.

Given the rigid packet processing and memory requirements, many approaches
perform approximate heavy flow detection via {\em sketches}, which are summary
data structures that significantly mitigate memory footprints with bounded
detection errors.  Classical sketches
\cite{Cormode2005,Charikar2002,Krishnamurthy2003} are proven effective, but
are {\em non-invertible}: while we can query a sketch whether a specific
flow is a heavy flow, we cannot readily recover all heavy flows from only the
sketch itself.  Instead, we must check whether every possible flow is a heavy
flow. Such a brute-force approach is computationally expensive for an
extremely large flow key space (e.g., the size is $2^{104}$ for 5-tuple
flows). 

This motivates us to explore {\em invertible sketches}, which provide provable
error bounds as in classical sketches, while supporting the queries of
recovering all heavy flows.  
Invertible sketches are well studied in the literature (e.g.,
\cite{Cormode2005,Cormode2005deltoid,Schweller2007,Bu2010,Liu2012,Huang2014})
for heavy flow detection.  However, there remain limitations in existing
invertible sketches.  In particular, they either maintain heavy flows in
external DRAM-based data structures \cite{Cormode2005,Huang2014}, or track
flow keys in smaller-size bits or sub-keys
\cite{Cormode2005deltoid,Schweller2007,Bu2010,Liu2012}. We argue that both
approaches incur substantial memory access overhead that leads to degraded
processing performance (Section~\ref{subsec:sketches}).

In this paper, we present \sysname, a fast and compact invertible sketch for
heavy flow detection.  It tracks candidate heavy flow keys together with the
counters in a sketch data structure, and updates the candidate heavy flow keys
based on the {\em majority vote algorithm} \cite{Boyer1991} in an online
streaming fashion.  A key design feature of \sysname is that it maintains a
sketch data structure with {\em small} and {\em static} memory allocation
(i.e., no dynamic memory allocation is needed). This not only allows
lightweight memory access in both update and detection operations, but also
provides viable opportunities for hardware acceleration and feasible
deployment in hardware switches.  To summarize, we make the following
contributions. 
\begin{itemize}[leftmargin=*]
\item 
We design \sysname, an invertible sketch that supports both heavy hitter and
heavy changer detection and can be generalized for distributed detection,
including both scalable detection (which provides scalability) and
network-wide detection (which provides a network-wide view of measurement
results). See Section~\ref{sec:overview}. 
\item 
We present theoretical analysis on \sysname for its memory space
complexity, update/detection time complexity, and detection accuracy. See 
Section~\ref{sec:theory}. 
\item 
We present the implementation of \sysname on P4-based programmable switches
\cite{Bosshart2014} subject to various hardware deployment constraints. See
Section~\ref{sec:tofino}.
\item 
We conduct evaluation in both software and hardware environments. We show via
trace-driven evaluation in software that \sysname achieves higher detection
accuracy for most memory configurations and up to 3.38$\times$ throughput gain
over state-of-the-art invertible sketches.  We also extend \sysname
with {\em Single Instruction, Multiple Data (SIMD)} instructions to boost its
update performance.  Furthermore, we prototype \sysname in the P4 language
\cite{Bosshart2014} and compile it to the Barefoot Tofino chipset
\cite{tofino}.  \sysname achieves line-rate measurement with limited resource
overhead.  It also achieves higher accuracy and smaller resource usage than
PRECISION \cite{Ben2018precision} (a heavy hitter detection scheme in
programmable switches).  See Section~\ref{sec:evaluation}. 
\end{itemize}

The source code of \sysname (including the software implementation and the P4
code) is available for download at: 
{\bf http://adslab.cse.cuhk.edu.hk/software/mvsketch}. 

%------------------------------------------------------------------------
% Background
%------------------------------------------------------------------------
\section{Background} 
\label{sec:background}

\subsection{Heavy Flow Detection} 
\label{subsec:heavy_flow_basics}

We consider a stream of packets, each of which is denoted by a key-value pair
$(x, v_x)$, where $x$ is a key drawn from a domain $[n] = \{0, 1, \cdots,
n-1\}$ and $v_x$ is the value of $x$.  In network measurement, $x$ is the flow
identifier (e.g., source/destination address pairs or 5-tuples), while $v_x$
is either one (for packet counting) or the packet size (for byte counting).
We conduct measurement at regular time intervals called {\em epochs}.  

We formally define heavy hitters and heavy changers as follows.  Let $\phi$ be
a pre-defined fractional threshold (where $0<\phi<1$) that is used to
differentiate heavy flows from network traffic (we use the same $\phi$ for
both heavy hitter and heavy changer detection for simplicity).  Let $S(x)$ be
the sum (of all $v_x$'s) of flow $x$ in an epoch, and $D(x)$ be the absolute
change of $S(x)$ of flow $x$ across two epochs.  Let $\CS$ be the total sum of
all flows in an epoch (i.e., $\CS = \sum_{x\in[n]}S(x)$), and $\CD$ be the total
absolute change of all flows across two epochs (i.e., $\CD =
\sum_{x\in[n]}D(x)$).  Both $\CS$ and $\CD$ can be obtained in practice: for
$\CS$, we can maintain an extra counter that counts the total traffic; for
$\CD$, we can run an $l_1$-streaming algorithm and estimate $\CD$ (equivalent
to the $l_1$-distance) in one pass \cite{Nelson2010}.  Finally, flow $x$ is
said to be a {\em heavy hitter} if $S(x) \ge \phi \CS$, or a {\em heavy
changer} if $D(x) \ge \phi \CD$. 

\subsection{Sketches}
\label{subsec:sketches}

{\em Sketches} are summary data structures that track values in a fixed number
of entries called {\em buckets}.  Classical sketches on heavy flow
detection (e.g., Count Sketch \cite{Charikar2002}, K-ary Sketch
\cite{Krishnamurthy2003}, and Count-Min Sketch \cite{Cormode2005}) represent a
sketch as a two-dimensional array of buckets and provide different
theoretical trade-offs across memory usage, performance, and accuracy.

Take Count-Min Sketch \cite{Cormode2005} as an example.  We construct the
sketch as $r$ rows of $w$ buckets each.  Each bucket is
associated with a counter initialized as zero. For each tuple $(x,v_x$)
received in an epoch, we hash $x$ into a bucket in each of the $r$ rows using
$r$ pairwise independent hash functions.  We increment the counter in each of
the $r$ hashed buckets by $v_x$.  Since multiple flows can be hashed to the
same bucket, we can only provide an estimate for the sum of a flow.  Count-Min
Sketch uses the minimum counter value of all $r$ hashed buckets as the
estimated sum of a flow.  We can check if a flow is a heavy hitter by checking
if its estimated sum exceeds the threshold; similarly, we can check if a flow
is a heavy changer by checking if the absolute change of its estimated sums in
two epochs exceeds the threshold.  However, Count-Min Sketch is
non-invertible, as we must check every flow in the entire flow key space to
recover all heavy flows; note that Count Sketch and K-ary Sketch are also
non-invertible. 

Invertible sketches (e.g.,
\cite{Cormode2005deltoid,Cormode2005,Bu2010,Schweller2007,Huang2014,Liu2012}) 
allow all heavy flows to be recovered from only the sketch data structure
itself.  State-of-the-art invertible sketches can be classified into three
types.

\para{Extra data structures.} Count-Min-Heap \cite{Cormode2005} is an
augmented Count-Min Sketch that uses a heap to track all candidate heavy flows
and their estimated sums.  If any incoming flow whose estimated sum
exceeds the threshold, it is added to the heap.  LD-Sketch \cite{Huang2014}
maintains a two-dimensional array of buckets and links each bucket with
an associative array to track the candidate heavy flows that are hashed to the
bucket.  However, updating a heap or an associative array incurs high memory
access overhead, which increases with the number of heavy flows.  In
particular, LD-Sketch occasionally expands the associative array to hold more
candidate heavy flows, yet dynamic memory allocation is a costly operation and
difficult to implement in hardware \cite{Basat2016}. 

\para{Group testing.} Deltoid \cite{Cormode2005deltoid} 
comprises multiple counter groups with $1+L$ counters each (where
$L$ is the number of bits in a key), in which one counter tracks the total
sum of the group, and the remaining $L$ counters correspond to the bit
positions of a key.  It maps each flow key to a subset of groups and
increments the counters whose corresponding bits of the key are one.  To
recover heavy flows, Deltoid first identifies all groups whose total sums 
exceed the threshold.  If each such group has only one heavy flow, the heavy
flow can be recovered: the bit is one if a counter exceeds the threshold, or
zero otherwise.  Fast Sketch \cite{Liu2012} is similar to Deltoid except that
it maps the quotient of a flow key to the sketch.  However, both Deltoid and
Fast Sketch have high update overhead, as their numbers of counters increase
with the key length.

\para{Enumeration.} Reversible Sketch \cite{Schweller2007} finds heavy
flows by pruning the enumeration space of flow keys.  It divides a flow key
into smaller sub-keys that are hashed independently, and concatenates
the hash results to identify the hashed buckets.  To recover heavy flows, it
enumerates each sub-key space and combines the recovered sub-keys to form the
heavy flows.  SeqHash \cite{Bu2010} follows a similar design, yet it hashes
the key prefixes of different lengths into multiple smaller sketches.
However, the update costs of both Reversible Sketch and SeqHash increase with
the key length. 

%---------------------------------------------------------------------------
% MV-Sketch Design
%---------------------------------------------------------------------------
\section{\sysname Design}
\label{sec:overview}

\sysname is a novel invertible sketch for heavy flow detection and 
aims for the following design goals: 
\begin{itemize}[leftmargin=*]
\item
{\bf Invertibility:} \sysname is invertible and readily returns all heavy
flows (i.e., heavy hitters or heavy changers) from only the sketch data
structure itself. 
\item
{\bf High detection accuracy:} \sysname supports accurate heavy flow detection
with provable error bounds. 
\item
{\bf Small and static memory:} \sysname maintains compact data structures with
small memory footprints.  Also, it can be constructed with static memory
allocation, which mitigates memory management overhead as opposed to dynamic
memory allocation \cite{Huang2014}. 
\item
{\bf High processing speed:} \sysname processes packets at high speed by
limiting the memory access overhead of per-packet updates.  It also takes
advantage of static memory allocation to allow hardware acceleration. 
\item
{\bf Scalable detection:}  To improve performance and scalability, \sysname
can be extended for scalable detection by processing packets in multiple
\sysname instances in parallel. 
\item  
{\bf Network-wide detection:}  \sysname can provide a network-wide view of
heavy flows by aggregating the results from multiple \sysname instances
deployed in different measurement points across the whole network. 
\end{itemize}

\subsection{Main Idea}
\label{subsec:idea}

Like Count-Min Sketch \cite{Cormode2005}, \sysname is initialized as a
two-dimensional array of buckets (Section~\ref{subsec:sketches}), in
which each bucket tracks the values of the flows that are hashed to itself.
\sysname augments Count-Min Sketch to allow each bucket to also track a {\em
candidate heavy flow} that has a high likelihood of carrying the largest
amount of traffic among all flows that are hashed to the bucket.  Our
rationale is that in practice, a small number of large flows dominate in IP
traffic \cite{Zhang2002}.  Thus, the candidate heavy flow is very likely to
carry much more traffic than all other flows that are hashed to the same
bucket.  Also, by hashing each flow to multiple buckets independently, we
can significantly reduce the error due to the collision of heavy flows, so as
to accurately track multiple heavy flows.

To find the candidate heavy flow in each bucket, we apply the {\em majority
vote algorithm (MJRTY)} \cite{Boyer1991}, which enables us to track the
candidate heavy flow in an online streaming fashion.  MJRTY processes a stream
of votes (corresponding to packets in our case), each of which has a vote key
and a vote count one.  It aims to find the {\em majority vote}, defined as the
vote key that has more than half of the total vote counts, from the stream of
votes in one pass with constant memory usage. At any time, it stores (i) the
{\em candidate majority vote} that is thus far observed in a stream and (ii)
an {\em indicator counter} that tracks whether the currently stored vote
remains the candidate majority vote. Initially, it stores the first vote and
initializes the indicator counter as one.  Each time when a new vote arrives,
MJRTY compares the new vote with the candidate majority vote. If both votes
are the same (i.e., the same vote key), it increments the indicator counter by
one; otherwise, it decrements the indicator counter by one.  If the indicator
counter is below zero, MJRTY replaces the current candidate majority vote with
the new vote and resets the counter to one.  MJRTY ensures that the true
majority vote must be the candidate majority vote stored by MJRTY at the end
of the stream \cite{Boyer1991}. 

\sysname addresses the limitations of existing solutions as follows. First, it
supports static memory allocation and does not maintain any complex data
structure, thereby avoiding the high memory access overhead in Count-Min-Heap
\cite{Cormode2005} and LD-Sketch \cite{Huang2014}.  Also, \sysname stores
candidate heavy flows in buckets via MJRTY, thereby reducing the update
overhead of sketches that are based on group testing
\cite{Cormode2005deltoid,Liu2012} and enumeration \cite{Schweller2007,Bu2010}.

\begin{figure}[!t]
\centering
\includegraphics[width=3.4in]{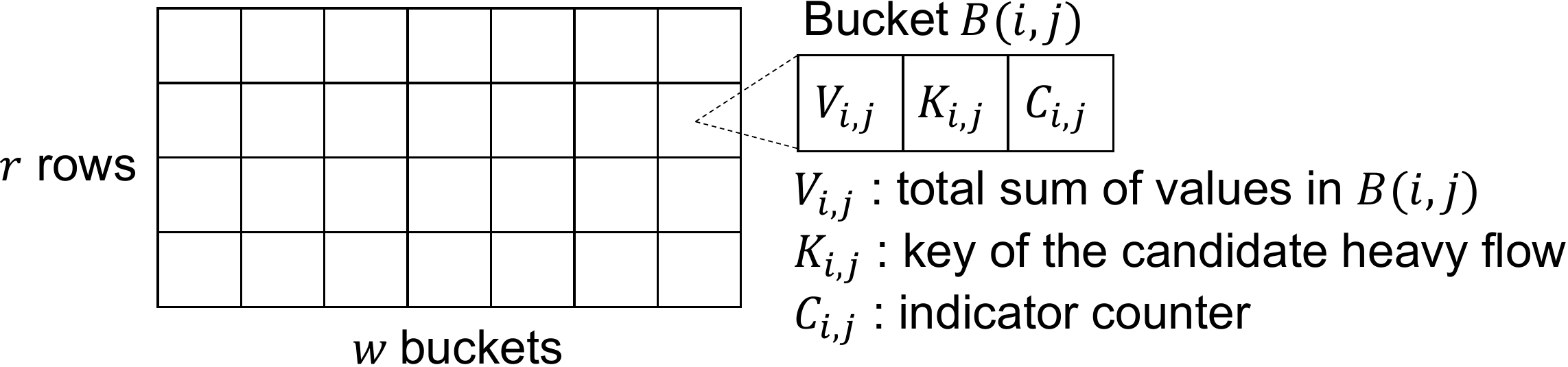}
\vspace{-6pt}
\caption{Data structure of \sysname.}
\label{fig:msketch}
\vspace{-6pt}
\end{figure}

\subsection{Data Structure of \sysname}
\label{subsec:data_structure}

Figure~\ref{fig:msketch} shows the data structure of \sysname, which is
composed of a two-dimensional array of buckets with $r$ rows and $w$ columns.
Let $B(i,j)$ denote the bucket at
the $i$-th row and the $j$-th column, where $1\le i\le r$ and $1\le j\le w$.
Each bucket $B(i,j)$ consists of three fields: (i) $V_{i,j}$, which counts the
total sum of values of all flows hashed to the bucket; (ii) $K_{i,j}$, which
tracks the key of the current candidate heavy flow in the bucket; and (iii)
$C_{i,j}$, which is the indicator counter that checks if the candidate heavy
flow should be kept or replaced as in MJRTY \cite{Boyer1991}.  In addition,
\sysname is associated with $r$ pairwise-independent hash functions, denoted
by $h_1 \ldots h_r$, such that each $h_i$ (where $1\le i\le r$) hashes the key
$x\in[n]$ of each incoming packet to one of the $w$ buckets in row $i$.  Note
that the data structure has a fixed memory size and can be pre-allocated in
advance. 

\subsection{Basic Operations}
\label{subsec:basic}

\sysname supports two basic operations: (i) {\em Update}, which inserts each
incoming packet into the sketch; (ii) {\em Query}, which returns the estimated
sum of a given flow in an epoch.  

Algorithm~\ref{alg:update} shows the Update operation. All fields 
$V_{i,j}$, $K_{i,j}$, and $C_{i,j}$ are initialized as zero for $B(i,j)$,
where $1\le i\le r$ and $1\le j\le w$.  For each $(x, v_x)$, we hash $x$ into
$B(i,j)$ in the $i$-th row with $j = h_i(x)$ for $1\le i\le r$. We first
increment $V_{i,j}$ by $v_{x}$ (Line~2). We then check if $x$ is stored in
$K_{i,j}$ based on the MJRTY algorithm: if $K_{i,j} $ equals $ x$, we increment
$C_{i,j}$ by $v_x$ (Lines~3-4). Otherwise, we decrement $C_{i,j}$ by $v_x$
(Lines~5-6); if $C_{i,j}$ drops below zero, we replace $K_{i,j}$ by $x$ and
reset $C_{i,j}$ with its absolute value (Lines~7-10).  Note that the Update
operation differs from MJRTY as it supports general value counts (or the
number of bytes) with any non-negative value $v_x$, while MJRTY considers only
vote counts (or the number of packets) with $v_x$ always being one. 

\begin{algorithm}[t]
\caption{Update}
\label{alg:update}
\begin{small}
\begin{algorithmic}[1]
\item[\textbf{Input:}{$(x, v_x)$}]
\For {$i = 1$ to $r$}
  \State $V_{i,h_{i}(x)} \leftarrow V_{i,h_{i}(x)} + v_{x}$
  \If {$K_{i,h_{i}(x)} = x$} 
    \State $C_{i,h_{i}(x)} \leftarrow C_{i,h_{i}(x)} + v_{x}$
  \Else
    \State $C_{i,h_{i}(x)} \leftarrow C_{i,h_{i}(x)} - v_{x}$
    \If {$C_{i,h_{i}(x)} < 0$}
      \State $K_{i,h_{i}(x)} \leftarrow x$
      \State $C_{i,h_{i}(x)} \leftarrow -C_{i,h_{i}(x)}$
    \EndIf 
  \EndIf
\EndFor
\end{algorithmic}
\end{small}
\end{algorithm}

Algorithm~\ref{alg:query} shows the Query operation.  For each hashed bucket
in row $i$ (where $1\le i\le r$), we calculate a row estimate $\hat{S}_i(x)$
of flow $x$ (Lines~1-7): if $x$ and $K_{i,j}$ are the same, we set
$\hat{S}_i(x) = (V_{i,j}+C_{i,j})/2$; otherwise, we set $\hat{S}_i(x) =
(V_{i,j}-C_{i,j})/2$.  
The row estimate $\hat{S}_i(x)$ is actually the upper bound of $x$ in the
hashed bucket $B(i,j)$ (Lemma~\ref{lem:bucketbound}).
Finally, we return the final estimate $\hat{S}(x)$ as
the minimum of all row estimates (Lines~8-9). 

\begin{algorithm}[!t]
\caption{Query}
\label{alg:query}
\begin{small}    
\begin{algorithmic}[1]
\item[\textbf{Input:}{ flow key $x$}]
\item[\textbf{Output:}{ estimate $\hat{S}(x)$ of flow $x$}]
\For {$i = 1$ to $r$}
  \If {$K_{i, h_{i}(x)} = x $}      
    \State $\hat{S}_{i}(x) \leftarrow (V_{i,h_{i}(x)} + C_{i,h_{i}(x)})/2$
  \Else
    \State $\hat{S}_{i}(x) \leftarrow (V_{i,h_{i}(x)} - C_{i,h_{i}(x)})/2$
  \EndIf
\EndFor
\State $\hat{S}(x) \leftarrow \min_{1\le i\le r}\{\hat{S}_{i}(x)\}$
\State \Return $\hat{S}(x)$
\end{algorithmic}
\end{small}
\end{algorithm}

\subsection{Heavy Flow Detection}
\label{subsec:detection}

\para{Heavy hitter detection.}
To detect heavy hitters, we check every bucket $B(i,j)$ ($1\le i\le r$ and
$1\le j\le w$) at the end of an epoch.  For each $B(i,j)$, if $V_{i,j}\ge
\phi\CS$, we let $x = K_{i,j}$ and query $\hat{S}(x)$ via
Algorithm~\ref{alg:query}; if $\hat{S}(x)\ge \phi\CS$, we report $x$ as a
heavy hitter.  

\para{Heavy changer detection.}
To detect heavy changers, we compare two sketches at the ends of two epochs.
One possible detection approach is to exploit the linear property of sketches
as in prior studies \cite{Cormode2005deltoid,Liu2012,Schweller2007}, in which
we compute the differences of $V_{i,j}$'s of the buckets at the same positions
across the two sketches and recover the flows from the buckets whose
differences exceed the threshold $\phi \CD$
(Section~\ref{subsec:heavy_flow_basics}).  However, such an approach
can return many false negatives, since the hash collisions of two heavy
changers, one with a high incremental change and another with a high
decremental change, can cancel out the changes of each other. 

To reduce the number of false negatives, we instead use the {\em estimated 
maximum change} of a flow for heavy changer detection.  Specifically, let
$U(x)$ and $L(x)$ be the upper and lower bounds of $S(x)$, respectively.  We
set $U(x) = \hat{S}(x)$ returned by Algorithm~\ref{alg:query}.  Also,
we set $L(x) = \max_{1\le i\le r}\{L_i(x)\}$, where $L_i(x)$ is set as
follows: for each hashed bucket $B(i,j)$ of $x$ (where $1\le i\le r$ and
$j=h_i(x)$), if $K_{i,j}$ equals $ x$, we set $L_i(x) = C_{i,j}$; otherwise,
we set $L_i(x)=0$.  Note that both $U(x)$ and $L(x)$ are the {\em true} upper
and lower bounds of $S(x)$, respectively (Lemma~\ref{lem:bucketbound} in
Section~\ref{subsec:theory_hh}).  Now, let $U^1(x)$ and $L^1(x)$ (resp.
$U^2(x)$ and $L^2(x)$) be the upper and lower bounds of $S(x)$ in the previous
(resp. current) epoch, respectively.  Then the estimated maximum change of flow
$x$ is given by $\hat{D}(x)=\max\{|U^1(x)- L^2(x)|, |L^1(x)-U^2(x)|\}$.  

We now detect heavy changers as follows.  We check every bucket $B(i,j)$
($1\le i\le r$ and $1\le j\le w$) of two sketches of the previous and current
epochs.  For each $B(i,j)$ in each of the sketches, if $V_{i,j}\ge\phi\CD$, we
let $x = K_{i,j}$ and estimate $\hat{D}(x)$; if $\hat{D}(x)\ge\phi\CD$, we
report $x$ as a heavy changer (note that a necessary condition of
a heavy changer is that its sum must exceed the threshold in at least one
epoch).  

Currently, \sysname focuses on the values (e.g., packet or byte counts) of a
flow.  We can extend \sysname to monitor hosts with a high number of distinct
connections in DDoS or superspreader detection by either associating the
buckets with approximate distinct counters \cite{Cormode2005graph} or
filtering duplicate connections with a Bloom filter \cite{Zhao2005}.  

\subsection{Scalable Heavy Flow Detection}
\label{subsec:distributed}

We can improve the performance and scalability of \sysname by performing heavy
flow detection on multiple packet streams in parallel based on a distributed
streaming architecture \cite{Huang2014}.  Specifically, we deploy $q \ge 1$
detectors, each of which deploys an \sysname instance to monitor packets from
multiple streaming sources.  Suppose that each streaming source maps a flow to
a subset $d$ out of $q$ detectors, where $d\le q$, and dispatches each packet
of the flow {\em uniformly} to one of the $d$ selected detectors.  At the end
of each epoch, each detector sends the local detection results to a
centralized controller for final heavy flow detection. 

For heavy hitter detection, each detector checks every bucket $B(i,j)$ in
\sysname.  Let $x = K_{i,j}$, and if $\hat{S}(x) \ge \tfrac{\phi}{d}\CS$, the
detector sends the tuple ($x$, $\hat{S}(x)$) of flow $x$ to the controller.
After collecting all results from $q$ detectors, the controller adds the
estimates of each flow.  If the added estimate of a flow exceeds $\phi\CS$,
the flow is reported as a heavy hitter. 

For heavy changer detection, each detector checks every bucket $B(i,j)$ of two
sketches of the previous and current epochs.  If
$V_{i,j}\ge\tfrac{\phi}{d}\CD$, it lets $x = K_{i,j}$ and  estimates
$\hat{D}(x)$; if $\hat{D}(x) \ge\tfrac{\phi}{d}\CD$, the detector sends the
tuple ($x$, $\hat{D}(x)$) of flow $x$ to the controller.  
The controller adds the estimates of each flow from $q$ detectors.
If the added estimate of a flow exceeds $\phi\CD$, the flow is reported as a
heavy changer. 

\subsection{Network-Wide Heavy Flow Detection}
\label{subsec:networkwide}

\begin{algorithm}[t]
\caption{Merge}
\label{alg:merge}
\begin{small}
\begin{algorithmic}[1]
\item[\textbf{Input:}{ $q$ \sysname instances}]
\item[\textbf{Output:}{ the merged \sysname}]
\For {$i = 1$ to $r$}
  \For {$j = 1$ to $w$}
    \State $V_{i,j} \leftarrow \sum_{1 \le k\le q}V^{k}_{i,j}$ 
    \State $ T \leftarrow \{K^{1}_{i,j}, K^{2}_{i,j}, \dots , K^{q}_{i,j}\}$
    \For {$x \in T$}
      \State $e(x) \leftarrow 0$ 
      \For {$k = 1$ to $q$}
        \If {$K^{k}_{i,j} = x$}
            \State $e(x) \leftarrow  e(x) + (V^k_{i,j}+C^k_{i,j})/2 $
        \Else
            \State $e(x) \leftarrow  e(x) + (V^k_{i,j}-C^k_{i,j})/2 $
        \EndIf 
      \EndFor
    \EndFor
    \State $K_{i,j} \leftarrow x^*$, where $x^* = \argmax_{x\in T} \{e(x)\}$
    \State $C_{i,j} \leftarrow \max\{2e(x^*) - V_{i,j}, 0\}$
  \EndFor
\EndFor
\end{algorithmic}
\end{small}
\end{algorithm}

We can also perform network-wide heavy-flow detection via \sysname by
deploying multiple \sysname instances in multiple detectors (e.g., end-hosts
or programmable switches) that span across the whole network and aggregating
the measurement results from all detectors in a centralized controller, as in
recent sketch-based network-wide measurement systems
\cite{Yu2013,Moshref2015,Liu2016,Li2016flowradar,Huang2017,Huang2018,Yang2018}.
While scalable detection (Section~\ref{subsec:distributed}) focuses on
improving scalability by processing multiple packet streams in parallel,
network-wide detection aims to provide an accurate network-wide measurement
view as if all traffic were measured in one big detector \cite{Liu2016}. 

In network-wide heavy flow detection, we deploy an \sysname instance in each
of the detectors, such that all \sysname instances share the same hash
functions and parameter settings.  We assume that each packet being monitored
appears at only one \sysname instance to avoid duplicate measurement (e.g., by
monitoring only the ingress or egress traffic).  We deploy a centralized
controller that collects and merges the \sysname instances from all detectors.
Note that we do not make any assumption on the traffic size distribution of a
heavy flow in each detector (e.g., a heavy flow may have small traffic size in
some detectors); this is in contrast to scalable detection
(Section~\ref{subsec:distributed}), in which we assume that the traffic size
of a flow is uniformly distributed across detectors. 

Algorithm~\ref{alg:merge} shows how the controller merges multiple \sysname
instances.  Suppose that there are $q\ge 1$ detectors. Let $B^k(i,j)$ denote
the bucket in the $i$-th row and $j$-th column of \sysname in the $k$-th
detector, where $1\le i\le r$, $1\le j \le w$, and $1\le k\le
q$.  Let $V^k_{i,j}$, $K^k_{i,j}$, and $C^k_{i,j}$ denote the total sum of
values hashed to the bucket, the candidate heavy flow key, and the indicator
counter of $B^k(i,j)$, respectively.  Also, let $B(i,j)$ (with the
corresponding fields $V_{i,j}$, $K_{i,j}$, and $C_{i,j}$) be the bucket in the
$i$-th row and $j$-th column in the merged sketch.  In
Algorithm~\ref{alg:merge}, the controller constructs each bucket $B(i,j)$ of
the merged sketch by merging all $B^k(i,j)$ that have the same $i$ and $j$ in
all $q$ detectors.  The controller first sets $V_{i,j}$ as the sum of
$V^k_{i,j}$'s of all $q$ detectors (Line~3). It then calculates a network-wide
estimate $e(x)$ for each candidate heavy flow key $x\in T =
\{K^k_{i,j}\}_{1\le k\le q}$ (note that $x$ is hashed to every bucket
$B^k(i,j)$ for all $k$, where $1\le k\le q$).  First, the controller
initializes $e(x)$ as zero.  For each $B^k(i,j)$ ($1\le k \le q$), if
$K^k_{i,j}$ equals $x$, the controller increments $e(x)$ by
$(V^k_{i,j}+C^k_{i,j})/2$; otherwise, it increments $e(x)$ by
$(V^k_{i,j}-C^k_{i,j})/2$ (Lines~4-14). After that, the controller stores the
key $x^*$ that has the maximum estimate among all candidate heavy flow keys
into $K_{i,j}$ (Line~15).  It also sets $C_{i,j}$ as the maximum value of
$2e(x^*)-V_{i,j}$ and zero (Line~16).  By Lemma~\ref{lem:bucketbound} in
Section~\ref{subsec:theory_hh}, we can show that the network-wide estimate
$e(x)$ after Line~14 is an upper-bound of $S(x)$ (i.e., $e(x) \ge S(x)$). 

Once the controller finishes the merge operation, it performs heavy flow
detection on the merged \sysname as in Section~\ref{subsec:detection}. We show
that the merged \sysname achieves the same theoretical guarantee on accuracy
as in a single \sysname (Section~\ref{subsec:nwbound}).

%--------------------------------------------------------------------------
% Theoretical Analysis
%--------------------------------------------------------------------------
\section{Theoretical Analysis}
\label{sec:theory}

We present theoretical analysis on \sysname in heavy flow detection. We also
compare \sysname with several state-of-the-art invertible sketches. 

\subsection{Space and Time Complexities}

Our analysis assumes that \sysname is configured with $r =
\log{\tfrac{1}{\delta}}$ and $w = \tfrac{2}{\epsilon}$, where $\epsilon$
($0<\epsilon< 1$) is the approximation parameter, $\delta$ ($0<\delta<1$) is
the error probability, and the logarithm base is 2.
Theorem~\ref{the:complexity} states the space and time complexities of
\sysname. 

\begin{theorem}
\label{the:complexity}
The space usage is $O(\tfrac{1}{\epsilon}\log{\tfrac{1}{\delta}}\log{n})$. The
update time per packet is $O(\log\tfrac{1}{\delta})$, while the detection time
of returning all heavy flows is $O(\tfrac{1}{\epsilon}\log^2\tfrac{1}{\delta})$.
\end{theorem}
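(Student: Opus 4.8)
The plan is to prove the three bounds separately, each by a direct accounting of the data-structure size and of the operations in Algorithms~\ref{alg:update} and~\ref{alg:query}; no probabilistic argument is needed here (accuracy is handled elsewhere).

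\emph{Space.} I would first observe that \sysname stores an $r\times w$ array of buckets, i.e., $rw=\tfrac{2}{\epsilon}\log\tfrac{1}{\delta}$ buckets in total. Each bucket $B(i,j)$ holds one key $K_{i,j}\in[n]$, which fits in $O(\log n)$ bits, together with the two counters $V_{i,j}$ and $C_{i,j}$. Since $0\le C_{i,j}\le V_{i,j}\le\CS$ and all counter values are polynomially bounded in the stream parameters, each counter also fits in $O(\log n)$ bits (equivalently, a constant number of machine words under the standard word-RAM assumption); the $r$ hash-function seeds contribute only $O(r\log n)$ extra bits, which is dominated. Multiplying the per-bucket cost $O(\log n)$ by the number of buckets gives $O(\tfrac{1}{\epsilon}\log\tfrac{1}{\delta}\log n)$.

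\emph{Update time.} I would argue that processing a packet $(x,v_x)$ via Algorithm~\ref{alg:update} performs one iteration per row, and within row $i$ computing $h_i(x)$ is $O(1)$ (constant-time hashing) and the remaining steps — incrementing $V_{i,j}$, the comparison $K_{i,j}=x$, the increment or decrement of $C_{i,j}$, and the possible replacement of $K_{i,j}$ with its sign flip — are all $O(1)$ word operations. Hence the per-packet cost is $O(r)=O(\log\tfrac{1}{\delta})$.

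\emph{Detection time.} For heavy hitter detection (Section~\ref{subsec:detection}) I would scan all $rw$ buckets, test $V_{i,j}\ge\phi\CS$ in $O(1)$ time per bucket, and on each surviving bucket run the Query of Algorithm~\ref{alg:query} on $x=K_{i,j}$; Query touches the $r$ hashed buckets of $x$ with $O(1)$ work each and then takes a minimum of $r$ row estimates, so it costs $O(r)$. Bounding the number of Query invocations crudely by the number of buckets, $rw$, the total is $O(rw)+rw\cdot O(r)=O(r^2w)=O(\tfrac{1}{\epsilon}\log^2\tfrac{1}{\delta})$; optional deduplication of flows reported from several buckets adds only an $O(rw\log(rw))$ term, which is dominated. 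For heavy changer detection I would repeat the scan over the buckets of the two epoch sketches and, for each surviving bucket, compute $\hat{D}(x)$, which requires only $O(1)$ Query-style lookups per row in each of the two sketches, i.e., $O(r)$ time, leaving the bound unchanged. The argument is essentially a line-by-line cost count, so there is no genuine obstacle; the only points that need care are the bit-width assumption for the counters in the space bound and the observation that bounding the number of Query calls by $rw$ (rather than by the sharper $O(r/\phi)$, which would be $\phi$-dependent) already yields the stated $O(\tfrac{1}{\epsilon}\log^2\tfrac{1}{\delta})$ detection time.
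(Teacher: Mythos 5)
Your proposal is correct and follows essentially the same route as the paper's proof: a direct count of $rw$ buckets at $O(\log n)$ bits each for space, $r$ constant-time row updates per packet, and a scan of all $rw$ buckets with an $O(r)$ query per candidate for detection, giving $O(r^2w)$. The extra care you take about counter bit-widths and deduplication is reasonable but not needed beyond what the paper already assumes implicitly.
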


\begin{proof}
Each bucket of \sysname stores a $\log{n}$-bit candidate heavy flow and two
counters, so the space usage of \sysname is $O(rw\log n) =
O(\tfrac{1}{\epsilon}\log{\tfrac{1}{\delta}}\log{n})$.  

Each per-packet update accesses $r$ buckets and requires $r =
\log\tfrac{1}{\delta}$ hash operations, thereby taking
$O(\log\tfrac{1}{\delta})$ time. 

Returning all heavy flows requires to traverse all $rw$ buckets. For each
bucket whose $V_{i,j}$ is above the threshold, we check $r$ buckets to obtain
the estimate (either $\hat{S}(x)$ or $\hat{D}(x)$) for $x = K_{i,j}$.  This
takes $O(r^2w) = O(\tfrac{1}{\epsilon}\log^2\tfrac{1}{\delta})$ time.  
\end{proof}

\subsection{Error Bounds for Heavy Hitter Detection}
\label{subsec:theory_hh}

Suppose that for all flows hashed to
a bucket $B(i,j)$, flow $x$ is said to be a {\em majority flow} of $B(i,j)$ if
its sum $S(x)$ is more than half of the total value count $V_{i,j}$.  Then
Lemma~\ref{lem:majority} states that the majority flow must be tracked; note
that it is a generalization of the main result of MJRTY \cite{Boyer1991}. 

\begin{lemma} 
\label{lem:majority}
If there exists a majority flow $x$ in $B(i,j)$, then it must be stored in
$K_{i,j}$ at the end of an epoch. 
\end{lemma}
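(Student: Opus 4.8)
The plan is to adapt the classical MJRTY correctness argument to the weighted setting used in Algorithm~\ref{alg:update}. Fix a bucket $B(i,j)$ and consider only the substream of packets $(x, v_x)$ that hash to this bucket in row $i$; all other packets leave the three fields $V_{i,j}$, $K_{i,j}$, $C_{i,j}$ untouched, so it suffices to reason about this substream in isolation. Suppose $x$ is a majority flow, i.e.\ $S(x) > \tfrac{1}{2} V_{i,j}$ where $V_{i,j}$ is the final total weight. I would first establish the invariant that at every point in time, $C_{i,j} = |\,(\text{weight accumulated so far in favor of } K_{i,j}) - (\text{weight accumulated so far against } K_{i,j})\,|$, and more precisely that $C_{i,j}$ equals the current ``lead'' of the stored key; this follows by induction on the stream, checking the three cases in the Update algorithm (match: $+v_x$ to the lead of $K_{i,j}$; mismatch without reset: $-v_x$ to the lead; mismatch with reset: the new key $x$ inherits lead $-C_{i,j}+v_x \geq 0$ after taking absolute value, which is exactly the amount by which the incoming weight overshoots the old residual).

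Next I would run the standard ``pairing/cancellation'' argument. Think of each unit of weight $v_x$ as potential votes; the decrement operations pair off a unit of weight for $K_{i,j}$ against a unit of weight for some other key, and a reset pairs off the entire residual $C_{i,j}$ (all belonging to the old stored key, or to whatever it had last absorbed) against part of the incoming weight of the new key. The key observation is that every decrement or reset event cancels an equal amount of weight belonging to two \emph{distinct} flows. Hence the total weight that ever gets cancelled against flow $x$'s weight is at most $V_{i,j} - S(x) < S(x)$. Therefore not all of $x$'s weight can be cancelled: at the end of the stream there is a strictly positive amount of $x$'s weight that was never cancelled, which forces $x$ to be the stored key $K_{i,j}$ with $C_{i,j} > 0$. (If $x$ were not the stored key at the end, then all of $x$'s contributed weight must have been either cancelled by decrements/resets or never even ``counted'' because $x \neq K_{i,j}$ at the time of arrival and was absorbed into a decrement — in all cases it was cancelled against the weight of some other flow, a contradiction.)

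To make the last step rigorous I would introduce a potential/accounting bijection: maintain a multiset of ``active'' weight tokens, all currently labelled by the stored key $K_{i,j}$, whose total size is $C_{i,j}$; on a match, add $v_x$ tokens labelled $x$; on a mismatch, remove $\min(v_x, C_{i,j})$ tokens (cancelling them with an equal amount of the incoming flow's weight) and, if weight remains, relabel the now-empty token set with the new key and add the leftover. Every cancellation removes one token and one unit of incoming weight of \emph{different} flows, so summing over the whole stream, flow $x$ loses at most $\sum_{y \neq x} S(y) = V_{i,j} - S(x)$ of its own weight to cancellations; since $S(x) > V_{i,j} - S(x)$, some token labelled $x$ survives to the end, and surviving tokens are exactly those labelled $K_{i,j}$, so $K_{i,j} = x$.

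The main obstacle is handling the weighted resets cleanly: unlike unit-weight MJRTY, a single incoming packet of weight $v_x$ can simultaneously annihilate the entire residual $C_{i,j}$ \emph{and} leave a surplus that both becomes the new residual and carries the new key, so a naive ``one vote cancels one vote'' pairing needs to be replaced by the token-accounting scheme above, and one must be careful that the leftover weight $v_x - C_{i,j}$ (after a reset) is correctly attributed to $x$ and not double-counted as cancelled. Once the bijection is set up so that each cancellation step is manifestly between two distinct flows and conserves weight, the inequality $S(x) > V_{i,j} - S(x)$ closes the argument immediately.
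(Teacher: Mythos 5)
Your token-accounting argument is correct and is essentially the same cancellation idea as the paper's proof, which argues by contradiction that if $K_{i,j}\ne x$ at the end of the epoch then every unit of $x$'s weight must have been offset by weight of \emph{other} flows hashed to the bucket, forcing $V_{i,j}-S(x)\ge S(x)$ and hence $V_{i,j}\ge 2S(x)>V_{i,j}$; your version simply makes that pairing explicit. One caveat: the invariant you state at the outset --- that $C_{i,j}$ equals the absolute difference between all weight so far in favor of and against the current $K_{i,j}$ --- is false once a reset has occurred (e.g., arrivals $a$ with weight $10$, $b$ with $15$, $a$ with $3$, $c$ with $4$ leave $K_{i,j}=c$ and $C_{i,j}=2$, whereas $|4-28|=24$); fortunately your main argument never uses it, since the token scheme correctly tracks $C_{i,j}$ as the surviving uncancelled weight of the currently stored key.
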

\begin{proof}
We prove by contradiction.  By definition, the majority flow $x$ has
$S(x)>\tfrac{1}{2}V_{i,j}$.  Suppose that $K_{i,j}\ne x$.  Then the
increments (resp. decrements) of $C_{i,j}$ due to $x$ must be offset by the
decrements (resp. increments) of other flows that are also hashed to $B(i,j)$.
This requires that $V_{i,j} - S(x) \ge S(x)$ (i.e., the total value count of
other flows is larger than $S(x)$).  Thus, $V_{i,j} \ge 2S(x) > V_{i,j}$,
which is a contradiction. 
\end{proof}

Lemma~\ref{lem:bucketbound} next bounds the sum $S(x)$ of flow $x$. 

\begin{lemma}
\label{lem:bucketbound}
Consider a bucket $B(i,j)$ that flow $x$ is hashed to. If $K_{i,j} $ equals $ x$, then 
$C_{i,j}\le S(x)\le \tfrac{V_{i,j}+C_{i,j}}{2}$; otherwise, $0\le S(x) \le
\tfrac{V_{i,j}-C_{i,j}}{2}$.
\end{lemma}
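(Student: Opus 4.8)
The plan is to prove a loop invariant on the state of a single bucket $B(i,j)$ and then specialize it to the end of the epoch. Fix $B(i,j)$ and consider only the substream of packets whose keys hash to this bucket; write $S_t(y)$ for the partial sum of flow $y$ over the first $t$ such packets, and let $V,K,C$ denote the values of the fields $V_{i,j},K_{i,j},C_{i,j}$ after $t$ such updates. I claim the following holds for all $t$: (i) $C\ge 0$; (ii) $C\le S_t(K)$; and (iii) for every flow $y$ hashed to $B(i,j)$, $S_t(y)\le\tfrac{V-C}{2}$ if $y\ne K$, and $S_t(y)\le\tfrac{V+C}{2}$ if $y=K$. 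Granting this invariant, the lemma is immediate: at the end of the epoch, if $K_{i,j}=x$ then (ii) and (iii) give $C_{i,j}\le S(x)\le\tfrac{V_{i,j}+C_{i,j}}{2}$, and if $K_{i,j}\ne x$ then (iii) gives $S(x)\le\tfrac{V_{i,j}-C_{i,j}}{2}$, while $S(x)\ge 0$ is trivial.

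The real work is the induction over the substream. After the first packet one checks (i)--(iii) directly (with equality in (ii) and in the relevant case of (iii)). For the inductive step I would follow the three branches of Algorithm~\ref{alg:update} for the incoming packet $(x,v_x)$: (a) $K=x$, in which $V$ and $C$ each increase by $v_x$ and $S_t(K)$ increases by $v_x$; (b) $K\ne x$ and $C-v_x\ge 0$, in which $V$ increases by $v_x$ and $C$ decreases by $v_x$; and (c) $K\ne x$ and $C-v_x<0$, in which $V$ increases by $v_x$, $K$ becomes $x$, and $C$ becomes $v_x-C$. In each branch, using $v_x\ge 0$ together with the inductive hypothesis, one restores (i)--(iii); within (iii) the cases $y=K$ (old candidate), $y=x$ (incoming flow), and $y\notin\{K,x\}$ must be treated separately, and one uses $C\ge 0$ when comparing $\tfrac{V-C}{2}$ with $\tfrac{V+C}{2}$. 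The only mildly delicate point is branch (c): one uses that the incoming flow $x$ previously obeyed the non-candidate bound $S_t(x)\le\tfrac{V-C}{2}$, and that the old candidate $K$ previously obeyed $S_t(K)\le\tfrac{V+C}{2}$, which is exactly the non-candidate bound on $K$ after the swap.

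I expect the main obstacle to be bookkeeping rather than any genuine difficulty: condition (iii) ranges over all flows in the bucket, so each branch forces a short case split, and one must not forget $C\ge 0$. An alternative is a direct token-pairing argument: at the end of the epoch, every unit of a flow hashed to $B(i,j)$ is either among the $C_{i,j}$ units currently held by the surviving candidate (all belonging to $K_{i,j}$) or has been cancelled in one of $\tfrac{V_{i,j}-C_{i,j}}{2}$ pairs, each pair consisting of one unit of whatever flow was the candidate at cancellation time and one unit of a non-candidate, so each pair contains at most one unit of any fixed flow $x$. Hence if $x\ne K_{i,j}$ then $S(x)$ is at most the number of pairs, i.e., $\tfrac{V_{i,j}-C_{i,j}}{2}$; and if $x=K_{i,j}$ then $C_{i,j}\le S(x)\le C_{i,j}+\tfrac{V_{i,j}-C_{i,j}}{2}=\tfrac{V_{i,j}+C_{i,j}}{2}$. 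Formalizing the pairing needs the same care as the induction, so I would write up the loop-invariant proof.
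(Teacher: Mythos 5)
Your proof is correct, but it formalizes the claim by a different route than the paper. The paper's proof is a short global accounting argument: for $K_{i,j}=x$ it posits an ``offset amount'' $\Delta$ with $S(x)=C_{i,j}+\Delta$ and $V_{i,j}\ge S(x)+\Delta$, and for $K_{i,j}\ne x$ it argues that the mass of the other flows minus the mass of $x$ used in cancellations is at least $C_{i,j}$, giving $V_{i,j}\ge C_{i,j}+2S(x)$. That is essentially the token-pairing argument you sketch as an alternative and decline to write up; the paper leaves the underlying conservation claim (every unit of mass either survives in $C_{i,j}$, attributed to the current candidate, or is cancelled exactly once in a pair of units from two distinct flows) implicit. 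Your primary write-up instead runs an induction over the bucket's substream with the invariant $C\ge 0$, $C\le S_t(K)$, and the per-flow bounds $S_t(y)\le\tfrac{V-C}{2}$ ($y\ne K$) and $S_t(K)\le\tfrac{V+C}{2}$, and I checked that all three branches of Algorithm~\ref{alg:update} preserve it --- in particular the change-point branch, where $\tfrac{V'-C'}{2}=\tfrac{V+C}{2}$ turns the old candidate bound into the new non-candidate bound, and the incoming flow's old non-candidate bound becomes its new candidate bound. What your version buys is rigor: the ``offset'' bookkeeping is replaced by a mechanically checkable invariant, and you get $C_{i,j}\ge 0$ and the lower bound $C_{i,j}\le S(x)$ for free (both are used elsewhere in the paper, e.g.\ in the definition of $L_i(x)$). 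What the paper's version buys is brevity and a direct link to the MJRTY intuition. Either is acceptable; yours is the more airtight of the two.
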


\begin{proof}
Suppose that $K_{i,j} $ equals $ x$. Let $\Delta$ be the offset amount of $x$ from
$C_{i,j}$ due to other flows. Then we have $S(x)= C_{i,j} + \Delta \ge
C_{i,j}$.  Also, since $V_{i,j} \ge S(x)+\Delta = C_{i,j}+2\Delta$, we have
$\Delta\le\tfrac{V_{i,j}-C_{i,j}}{2}$.  Thus, $S(x)=C_{i,j}+\Delta\le
\tfrac{V_{i,j}+C_{i,j}}{2}$.
   
Suppose now that $K_{i,j}\ne x$.  Then the increments (resp. decrements) of
$C_{i,j}$ due to $x$ must be offset by the decrements (resp. increments) made
by other flows that are also hashed to the same bucket (see the proof of
Lemma~\ref{lem:majority}).  The total value count of all flows other than $x$
(i.e., $V_{i,j}-S(x)$)  minus the offset amount $S(x)$ is at least $C_{i,j}$.
Thus, we have $V_{i,j} \ge C_{i,j}+2S(x)$, implying that $0\le
S(x)\le\tfrac{V_{i,j}-C_{i,j}}{2}$.
\end{proof}

We now study the bounds of the estimated sum $\hat{S}(x)$ of flow $x$ returned
by Algorithm~\ref{alg:query}.  From Lemma~\ref{lem:bucketbound} and the
definition of $\hat{S}(x)$ in Algorithm~\ref{alg:query}, we see that
$\hat{S}(x) \ge S(x)$.  Also, Lemma~\ref{lem:pointquery} states the upper
bound of $\hat{S}(x)$ in terms of $\epsilon$ and $\delta$. 
\begin{lemma}
\label{lem:pointquery}
$\hat{S}(x) \le S(x)+\tfrac{\epsilon \CS}{2}$ with a probability at least
$1-\delta$.
\end{lemma}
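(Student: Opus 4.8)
The plan is to run the standard Count--Min--style argument, adapted to the majority--vote fields, by first isolating a single \emph{deterministic} per-row inequality that reduces the whole statement to a tail bound on the hash--collision noise. First I would fix a row $i$ and let $j=h_i(x)$, writing $V_{i,j}=S(x)+N_i$ where $N_i=\sum_{y\ne x:\,h_i(y)=j}S(y)$ is the noise contributed by flows colliding with $x$ in row $i$. The key step is to show, using Lemma~\ref{lem:bucketbound}, that $\hat{S}_i(x)\le S(x)+\tfrac{N_i}{2}$ regardless of whether $K_{i,j}=x$. If $K_{i,j}=x$, then $\hat{S}_i(x)=\tfrac{V_{i,j}+C_{i,j}}{2}=S(x)+\tfrac{N_i}{2}+\tfrac{C_{i,j}-S(x)}{2}\le S(x)+\tfrac{N_i}{2}$ because $C_{i,j}\le S(x)$. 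If $K_{i,j}\ne x$, then $\hat{S}_i(x)=\tfrac{V_{i,j}-C_{i,j}}{2}\le\tfrac{V_{i,j}}{2}=\tfrac{S(x)+N_i}{2}\le S(x)+\tfrac{N_i}{2}$ (using $C_{i,j}\ge0$ and $S(x)\ge0$). Together with $\hat{S}_i(x)\ge S(x)$, which is already noted from Lemma~\ref{lem:bucketbound}, the only random quantity left to control is $N_i$.

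Second, I would bound $N_i$ in expectation. By pairwise independence of $h_i$ we have $\Pr[h_i(y)=j]=\tfrac{1}{w}$ for every $y\ne x$, so $\mathbb{E}[N_i]=\tfrac{1}{w}\sum_{y\ne x}S(y)\le\tfrac{\CS}{w}=\tfrac{\epsilon\CS}{2}$, using $w=\tfrac{2}{\epsilon}$. Markov's inequality then gives $\Pr[N_i\ge\epsilon\CS]\le\tfrac{\mathbb{E}[N_i]}{\epsilon\CS}\le\tfrac{1}{2}$, and hence, by the deterministic inequality above, $\Pr\!\big[\hat{S}_i(x)>S(x)+\tfrac{\epsilon\CS}{2}\big]\le\tfrac{1}{2}$.

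Third, since $\hat{S}(x)=\min_{1\le i\le r}\hat{S}_i(x)$, the event $\hat{S}(x)>S(x)+\tfrac{\epsilon\CS}{2}$ forces every one of the $r$ row estimates to exceed the bound simultaneously; as the hash functions $h_1,\dots,h_r$ are independent across rows, this has probability at most $2^{-r}=2^{-\log(1/\delta)}=\delta$, and taking complements yields the claim. The main obstacle, and the only place the argument genuinely departs from the textbook Count--Min proof, is the first step: one must write the upper bound on $\hat{S}_i(x)$ so that \emph{only} $N_i$ is random, i.e.\ confirm that the extra majority--vote term can only help --- precisely what $C_{i,j}\le S(x)$ (when $K_{i,j}=x$) and $C_{i,j}\ge0$ (when $K_{i,j}\ne x$) from Lemma~\ref{lem:bucketbound} deliver; everything after that is the routine Markov-plus-independence-over-rows calculation.
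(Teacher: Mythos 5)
Your proposal is correct and follows essentially the same route as the paper's proof: the same case analysis on $K_{i,j}$ via Lemma~\ref{lem:bucketbound} to get the deterministic per-row bound $\hat{S}_i(x)-S(x)\le\tfrac{V_{i,j}-S(x)}{2}$, then Markov's inequality on the collision noise and independence across the $r$ rows. The only difference is presentational (you isolate the noise term $N_i$ first and apply Markov second, while the paper does the reverse), which does not change the argument.
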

\begin{proof}
Consider the expectation of the total sum of all flows except $x$ in each
bucket $B(i,j)$.  It is given by $E[V_{i,j}-S(x)] = 
E[\textstyle\sum_{y\ne x, h_i(y) = h_i(x)}S(y)] \le \tfrac{\CS-S(x)}{w} \le
\tfrac{\epsilon \CS}{2}$ due to the pairwise independence of $h_i$ and the
linearity of expectation. By Markov's inequality, we have
\begin{equation}
\label{eq:markov}
\Pr[V_{i,j}-S(x)\ge \epsilon \CS]\le \tfrac{1}{2}.
\end{equation}

We now consider the row estimate $\hat{S}_i(x)$ (Algorithm~\ref{alg:query}).
If $K_{i,j}$ equals $ x$, then $\hat{S}_i(x)-S(x) =
\tfrac{V_{i,j}+C_{i,j}}{2}-S(x) \le \tfrac{V_{i,j}-S(x)}{2}$ due to
Lemma~\ref{lem:bucketbound}; if $K_{i,j} \ne x$, then $\hat{S}_i(x)-S(x) =
\tfrac{V_{i,j}-C_{i,j}}{2}-S(x) \le \tfrac{V_{i,j}}{2}-S(x) \le
\tfrac{V_{i,j}-S(x)}{2}$. 

Combining both cases, we have 
$\Pr[\hat{S}_i(x)-S(x)\ge\tfrac{\epsilon \CS}{2}] \le
\Pr[\tfrac{V_{i,j}-S(x)}{2}\ge\tfrac{\epsilon \CS}{2}] \le \tfrac{1}{2}$ due
to Equation~(\ref{eq:markov}). 

Since $\hat{S}(x)$ is the minimum of all row estimates, we have
$\Pr[\hat{S}(x) \le  S(x) + \tfrac{\epsilon \CS}{2}] 
= 1- \Pr[\hat{S}(x) - S(x) \ge \tfrac{\epsilon \CS}{2}]
= 1- \Pr[\hat{S}_i(x)-S(x) \ge \tfrac{\epsilon \CS}{2}, \forall i]
\ge 1 - (\tfrac{1}{2})^{r} = 1 - \delta$.
\end{proof}

Theorem~\ref{the:hhbound} summarizes the error bounds for heavy hitter
detection in \sysname.

\begin{theorem}
\label{the:hhbound}
\sysname reports every heavy hitter with a probability at least $1-\delta$
(provided that $\phi\CS \ge \epsilon\CS$), and falsely reports a non-heavy
hitter with sum no more than $(\phi - \tfrac{\epsilon}{2})\CS$ with a
probability at most $\delta$.  
\end{theorem}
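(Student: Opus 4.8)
The plan is to treat the two assertions of Theorem~\ref{the:hhbound} separately, reusing the per-bucket bounds of Lemmas~\ref{lem:majority}--\ref{lem:pointquery}. For the completeness part (every heavy hitter is reported), fix a flow $x$ with $S(x)\ge\phi\CS$. The key observation is that $x$ is reported as soon as $x$ becomes the stored candidate $K_{i,j}=x$ in some hashed bucket $B(i,j)$ with $V_{i,j}\ge\phi\CS$: the detection rule then queries $x$ and, by Lemma~\ref{lem:bucketbound}, obtains $\hat S(x)\ge S(x)\ge\phi\CS$, so $x$ is reported. By Lemma~\ref{lem:majority} it therefore suffices that $x$ is a majority flow of at least one of its $r$ hashed buckets. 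So I would bound, for a fixed row $i$, the probability that $x$ is \emph{not} a majority flow of $B(i,h_i(x))$, i.e.\ that $V_{i,h_i(x)}-S(x)\ge S(x)$: exactly as in the proof of Lemma~\ref{lem:pointquery}, $E[V_{i,h_i(x)}-S(x)]\le\tfrac{\epsilon\CS}{2}$, so Markov's inequality gives a bound of $\tfrac{\epsilon\CS}{2S(x)}\le\tfrac{\epsilon}{2\phi}\le\tfrac12$, where the last step is precisely where the hypothesis $\phi\CS\ge\epsilon\CS$ is consumed. Since the $r$ hash functions are independent, the probability that $x$ fails to be a majority flow in \emph{every} row is at most $(\tfrac12)^r=\delta$, so $x$ is reported with probability at least $1-\delta$. (Note $V_{i,h_i(x)}\ge S(x)\ge\phi\CS$ holds automatically in each hashed bucket, so the $V_{i,j}\ge\phi\CS$ precondition of the detection rule is never an obstacle.)

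For the false-positive part, fix a flow $x$ with $S(x)\le(\phi-\tfrac{\epsilon}{2})\CS$. A necessary condition for \sysname to report $x$ is $\hat S(x)\ge\phi\CS$, which forces $\hat S(x)-S(x)\ge\phi\CS-(\phi-\tfrac{\epsilon}{2})\CS=\tfrac{\epsilon\CS}{2}$. But the proof of Lemma~\ref{lem:pointquery} in fact bounds $\Pr[\hat S(x)-S(x)\ge\tfrac{\epsilon\CS}{2}]$ by $(\tfrac12)^r=\delta$, so $x$ is falsely reported with probability at most $\delta$.

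The only step that is not pure bookkeeping is the completeness argument, and the difficulty there is conceptual rather than computational: a heavy hitter need not be the majority flow of any particular bucket, so one must argue that independent hashing into $r$ rows makes it the majority flow of \emph{some} row with probability $1-\delta$, and it is exactly this step that relies on the side condition $\phi\ge\epsilon$. Once that is in place, invoking Lemmas~\ref{lem:majority} and~\ref{lem:bucketbound} to conclude that the detection rule actually fires on that row, and invoking Lemma~\ref{lem:pointquery} for the false-positive bound, is routine.
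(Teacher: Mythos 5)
Your proposal is correct and follows essentially the same route as the paper's proof: Lemma~\ref{lem:majority} plus the Markov bound of Equation~(\ref{eq:markov}) applied independently across the $r$ rows for completeness, and Lemma~\ref{lem:pointquery} for the false-positive bound. The only (harmless) difference is that you apply Markov directly to the event $V_{i,j}-S(x)\ge S(x)$ to get $\tfrac{\epsilon}{2\phi}\le\tfrac12$, whereas the paper first replaces $S(x)$ by the lower bound $\epsilon\CS$ and then invokes Equation~(\ref{eq:markov}); you are also slightly more explicit than the paper in verifying that the detection rule's precondition $V_{i,j}\ge\phi\CS$ is automatically met.
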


\begin{proof}
We first prove that \sysname reports each heavy hitter (say $x$) with a high
probability.  If flow $x$ is the majority flow 
in any one of its hashed buckets, it will be reported due to
Lemma~\ref{lem:majority}.  \sysname fails to report $x$ only if $x$ is not the
majority flow of any of its $r$ hashed buckets, i.e., $S(x) \le
\tfrac{V_{i,j}}{2}$ for $1\le i\le r$ and $j=h_i(x)$.  The probability that it
occurs (denoted by $P$) is $P = \Pr[S(x) \le \tfrac{V_{i,j}}{2}, \forall i] =
\Pr[V_{i,j} - S(x) \ge S(x), \forall i]$. Since $S(x) \ge \phi\CS \ge
\epsilon\CS$, we have $P \le \Pr[V_{i,j} - S(x) \ge \epsilon\CS, \forall i]
\le (\tfrac{1}{2})^r = \delta$ due to Equation~(\ref{eq:markov}).  Thus, a
heavy hitter is reported with a probability at least $1-\delta$. 

We next prove that \sysname reports a non-heavy hitter (say $y$) with $S(y)\le 
(\phi - \tfrac{\epsilon}{2})\CS$ with a small probability. 
A necessary condition is that $y$ has its estimate $\hat{S}(y)
\ge \phi\CS$.  Thus, $\hat{S}(y)-S(y)\ge\phi\CS-(\phi -
\tfrac{\epsilon}{2})S=\tfrac{\epsilon\CS}{2}$.  From
Lemma~\ref{lem:pointquery}, we have
$\Pr[\hat{S}(y)-S(y)\ge\tfrac{\epsilon\CS}{2}]\le\delta$.  In other words, $y$
is reported as a heavy hitter with a probability at most $\delta$.
\end{proof}

\begin{table*}[!t]
\centering
\caption{Comparison of \sysname with state-of-the-art invertible sketches.}
\label{tab:compare}
\renewcommand{\arraystretch}{1.1}
\vspace{-1em}
\scriptsize
\begin{tabular}{|c||c|c|c|c|c|c|}
\hline
  Sketch & $r$ & $w$ & FN Prob. & Space & Update time & Detection time \\
\hline
\hline
    Count-Min-Heap & $\log{\tfrac{1}{\delta}}$ &
  $\tfrac{2}{\epsilon}$ & 0 & 
	$O(\tfrac{1}{\epsilon}\log{\tfrac{1}{\delta}}+H\log{n})$ &
    $O(\log{\tfrac{H}{\delta}})$ & $O(H)$ \\
\hline
    LD-Sketch & $\log{\tfrac{1}{\delta}}$ & $\tfrac{2H}{\epsilon}$ & 0 &
	$O(\frac{H}{\epsilon}\log{\tfrac{1}{\delta}}\log{n})$ &
	$O(\log{\tfrac{1}{\delta}})$&
	$O(\frac{H}{\epsilon}\log{\frac{1}{\delta}})$ \\
\hline
    Deltoid & $\log{\tfrac{1}{\delta}}$ & $\tfrac{2}{\epsilon}$ & $\delta$ &
	$O(\frac{1}{\epsilon}\log{\tfrac{1}{\delta}}\log{n})$ &
	$O(\log\tfrac{1}{\delta}\log{n})$ &
	$O(\frac{1}{\epsilon}\log^2{\tfrac{1}{\delta}}\log{n})$ \\
\hline
    Fast Sketch & $4H\log{\tfrac{4}{\delta}}$ &
    $1\!+\!\log\tfrac{n}{4H\log{(4/\delta)}}$ & $\delta$ &
    $O(H\log{\tfrac{1}{\delta}}\log\tfrac{n}{H\log(1/\delta)})$ &
    $O(\log{\tfrac{1}{\delta}}\log\tfrac{n}{H\log{(1/\delta)}})$ &
    $O(H\log^3{\tfrac{1}{\delta}}\log(\tfrac{n}{H\log{(1/\delta)}}))$ \\
\hline
\hline
    \sysname & $\log{\tfrac{1}{\delta}}$ & $\tfrac{2}{\epsilon}$ & $\delta$ &
	$O(\frac{1}{\epsilon}\log{\tfrac{1}{\delta}}\log{n})$ &
	$O(\log{\tfrac{1}{\delta}})$ &
	$O(\frac{1}{\epsilon}\log^2{\tfrac{1}{\delta}})$ \\
\hline
\end{tabular}
\end{table*}

\subsection{Error Bounds for Heavy Changer Detection}

Recall that heavy changer detection relies on the upper bound $U(x)$ and
the lower bound $L(x)$ of $S(x)$ (Section~\ref{subsec:detection}).  From
Lemma~\ref{lem:bucketbound}, both $U(x)$ and $L(x)$ are the true upper and
lower bounds of $S(x)$, respectively.  Lemma~\ref{lem:pointquery} has shown
that $U(x)$, which equals $\hat{S}(x)$, differs from $S(x)$ by a small range
with a high probability.  Now, Lemma~\ref{lem:lowerbound} shows that $L(x)$
and $S(x)$ also differ by a small range with a high probability. 

\begin{lemma}
\label{lem:lowerbound}
$S(x) - L(x) \le \epsilon \CS$ with a probability at least $1-\delta$.
\end{lemma}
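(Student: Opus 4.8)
The plan is to mirror the structure of the proof of Lemma~\ref{lem:pointquery}, but working with the lower bound $L(x)$ instead of the upper bound $\hat S(x)$.  Recall that $L(x) = \max_{1\le i\le r}\{L_i(x)\}$, where $L_i(x)=C_{i,j}$ if $K_{i,j}=x$ and $L_i(x)=0$ otherwise, with $j = h_i(x)$.  First I would fix a single row $i$ and bound the row-level gap $S(x)-L_i(x)$.  When $x$ is stored in its hashed bucket, Lemma~\ref{lem:bucketbound} gives $S(x)-L_i(x) = S(x) - C_{i,j} \le \tfrac{V_{i,j}+C_{i,j}}{2} - C_{i,j} = \tfrac{V_{i,j}-C_{i,j}}{2} \le \tfrac{V_{i,j}-S(x)}{2}$, where the last inequality uses $C_{i,j}\le S(x)$ from the same lemma.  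When $x$ is not stored, $L_i(x)=0$, so $S(x)-L_i(x) = S(x) \le V_{i,j}-S(x)$ by the second half of Lemma~\ref{lem:bucketbound} (which gives $V_{i,j}\ge C_{i,j}+2S(x)\ge 2S(x)$).  In either case $S(x)-L_i(x) \le V_{i,j}-S(x)$.

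Next I would invoke the concentration bound already established in the proof of Lemma~\ref{lem:pointquery}, namely Equation~(\ref{eq:markov}): $\Pr[V_{i,j}-S(x)\ge \epsilon\CS]\le\tfrac12$ for each row $i$, using the pairwise independence of the $h_i$'s.  Combined with the per-row bound above, this yields $\Pr[S(x)-L_i(x)\ge \epsilon\CS]\le \Pr[V_{i,j}-S(x)\ge\epsilon\CS]\le\tfrac12$ for each fixed $i$.  Since $L(x)$ is the \emph{maximum} of the $L_i(x)$ over the $r$ rows, the event $S(x)-L(x)\ge\epsilon\CS$ requires $S(x)-L_i(x)\ge\epsilon\CS$ for \emph{every} $i$ simultaneously, and these $r$ events are independent because the hash functions are independent.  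Hence $\Pr[S(x)-L(x)\ge\epsilon\CS] \le (\tfrac12)^r = \delta$, giving $S(x)-L(x)\le\epsilon\CS$ with probability at least $1-\delta$, as claimed.

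The one subtlety worth handling carefully is the direction of the $\max$: for the upper bound $\hat S(x)$ in Lemma~\ref{lem:pointquery} one takes a \emph{minimum} and the bad event is that the min is large, which forces \emph{all} rows to be bad; here $L(x)$ is a \emph{maximum} and the bad event is that the max is small, which again forces all rows to be bad, so the same ``$(\tfrac12)^r$'' argument applies — but I would state this explicitly so the reader sees why the maximum does not weaken the bound.  I expect the only real work is verifying the per-row inequality $S(x)-L_i(x)\le V_{i,j}-S(x)$ in the $K_{i,j}\ne x$ case, where one must remember to use $V_{i,j}\ge C_{i,j}+2S(x)$ rather than just $V_{i,j}\ge 2S(x)$; everything else is a direct transcription of the earlier argument.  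Note also that the resulting slack here is $\epsilon\CS$, twice the $\tfrac{\epsilon\CS}{2}$ slack of the upper bound, which is why the eventual heavy-changer error bound will be stated with the looser constant.
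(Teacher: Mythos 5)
Your route is the same as the paper's: bound the per-row gap $S(x)-L_i(x)$ by $V_{i,j}-S(x)$ in both the $K_{i,j}=x$ and $K_{i,j}\ne x$ cases, then apply Equation~(\ref{eq:markov}) to all $r$ independent rows to get the $(\tfrac{1}{2})^r=\delta$ failure probability. The probabilistic half and the $K_{i,j}\ne x$ case are fine. However, the last link of your chain in the $K_{i,j}=x$ case is reversed: you claim $\tfrac{V_{i,j}-C_{i,j}}{2}\le\tfrac{V_{i,j}-S(x)}{2}$ ``using $C_{i,j}\le S(x)$,'' but $C_{i,j}\le S(x)$ gives exactly the opposite inequality, and the stronger per-row bound $S(x)-L_i(x)\le\tfrac{V_{i,j}-S(x)}{2}$ that you derive from it is false in general. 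For a concrete witness consistent with Lemma~\ref{lem:bucketbound}, take $C_{i,j}=0$ and $S(x)=\tfrac{V_{i,j}}{2}$: then $S(x)-L_i(x)=\tfrac{V_{i,j}}{2}$ while $\tfrac{V_{i,j}-S(x)}{2}=\tfrac{V_{i,j}}{4}$.

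Fortunately the only statement you actually carry forward is the weaker one, $S(x)-L_i(x)\le V_{i,j}-S(x)$, and that is true; it is what the paper proves, in one step, by rearranging $2S(x)\le V_{i,j}+C_{i,j}$ (Lemma~\ref{lem:bucketbound}) into $S(x)-C_{i,j}\le V_{i,j}-S(x)$. So drop the final inequality of your chain and substitute this rearrangement; with that repair your argument coincides with the paper's. Your closing observations (the max-versus-min direction of the union over rows, and the $\epsilon\CS$ slack being twice that of Lemma~\ref{lem:pointquery}) are both correct and consistent with how the lemma is later used in Lemma~\ref{lem:changebound}.
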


\begin{proof}
Consider the lower bound estimate $L_i(x)$ given by the hashed bucket $B(i,j)$
of flow $x$ (where $1\le i\le r$) (Section~\ref{subsec:detection}).  If
$K_{i,j}$ equals $x$, $L_i(x)=C_{i,j}$. By Lemma~\ref{lem:bucketbound}, we
have $S(x) \le \tfrac{V_{i,j}+C_{i,j}}{2}$, implying that $S(x) - L_i(x) =
S(x) - C_{i,j} \le V_{i,j} - S(x)$.

If $K_{i,j} \ne x$, $L_i = 0$ and $x$ is not the majority flow for bucket
$B(i,j)$.  We have $S(x) - L_i(x) = S(x) \le V_{i,j}-S(x)$.  

Combining both cases, we have $\Pr[S(x)-L(x)\ge\epsilon\CS] =
\Pr[S(x)-L_i(x)\ge\epsilon\CS, \forall i] \le 
\Pr[V_{i,j}-S(x)\ge\epsilon\CS, \forall i] \le (\tfrac{1}{2})^r = \delta$ due
to Equation~(\ref{eq:markov}).  
\end{proof}

Lemma~\ref{lem:changebound} provides an upper bound of the estimated maximum
change $\hat{D}(x) = \max\{|U^1(x) - L^2(x)|, |U^2(x) - L^1(x)|\}$ in terms of
$\CS^1$ and $\CS^2$, which are the total sums of all flows in the previous and
current epochs, respectively. 

\begin{lemma}
\label{lem:changebound}
$\hat{D}(x) \le D(x) + \epsilon(\CS^1+\CS^2)$ with a probability at least
$(1-\delta)^2$. 
\end{lemma}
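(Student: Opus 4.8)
The plan is to bound each of the two terms in $\hat{D}(x)=\max\{|U^1(x)-L^2(x)|,\,|U^2(x)-L^1(x)|\}$ separately, treating the two epochs symmetrically, and then take the maximum. Writing $D(x)=|S^1(x)-S^2(x)|$ for the true change across epochs, I would split
\[
U^1(x)-L^2(x)=\bigl(S^1(x)-S^2(x)\bigr)+\bigl(U^1(x)-S^1(x)\bigr)+\bigl(S^2(x)-L^2(x)\bigr).
\]
By Lemma~\ref{lem:bucketbound}, $U^t(x)$ and $L^t(x)$ are a true upper and a true lower bound of $S^t(x)$ for $t\in\{1,2\}$, so the last two summands are non-negative, and the triangle inequality gives
\[
|U^1(x)-L^2(x)|\le D(x)+\bigl(U^1(x)-S^1(x)\bigr)+\bigl(S^2(x)-L^2(x)\bigr).
\]
I would then invoke Lemma~\ref{lem:pointquery} for epoch~$1$, giving $U^1(x)-S^1(x)\le\tfrac{\epsilon\CS^1}{2}$, and Lemma~\ref{lem:lowerbound} for epoch~$2$, giving $S^2(x)-L^2(x)\le\epsilon\CS^2$, so that $|U^1(x)-L^2(x)|\le D(x)+\tfrac{\epsilon\CS^1}{2}+\epsilon\CS^2\le D(x)+\epsilon(\CS^1+\CS^2)$. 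The identical argument with the two epochs interchanged bounds $|U^2(x)-L^1(x)|$ by $D(x)+\epsilon(\CS^1+\CS^2)$ as well; taking the maximum yields $\hat{D}(x)\le D(x)+\epsilon(\CS^1+\CS^2)$ on the event that all four invoked lemma conclusions hold.

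For the probability I would \emph{not} simply union-bound the four inequalities (that only gives $1-4\delta$), but instead argue at the level of the underlying collision event. As seen in the proofs of Lemmas~\ref{lem:pointquery} and~\ref{lem:lowerbound}, for a fixed epoch $t$ the per-row failure event of \emph{both} lemmas is contained in $\{\,V^t_{i,j}-S^t(x)\ge\epsilon\CS^t\,\}$ with $j=h_i(x)$. Hence if $G^t$ denotes the event that at least one of the $r$ hashed rows of $x$ in epoch $t$ satisfies $V^t_{i,j}-S^t(x)<\epsilon\CS^t$, then on $G^t$ \emph{both} $U^t(x)-S^t(x)\le\tfrac{\epsilon\CS^t}{2}$ and $S^t(x)-L^t(x)\le\epsilon\CS^t$ hold, and $\Pr[\neg G^t]\le(\tfrac12)^r=\delta$ by the independence of the hash functions across the $r$ rows (cf.\ Equation~(\ref{eq:markov})). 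Assuming, as is standard, that the sketches of the two epochs use independently drawn hash functions, $G^1$ and $G^2$ are independent, so all four lemma conclusions hold simultaneously with probability at least $\Pr[G^1]\Pr[G^2]\ge(1-\delta)^2$, which is the claimed bound.

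The main obstacle is exactly this last bit of probability bookkeeping: the target $(1-\delta)^2$ is strictly sharper than the $1-2\delta$ a two-event union bound would give, let alone the $1-4\delta$ of the naive accounting, so the proof must (i) recognize that within a single epoch the two error guarantees both follow from one common ``no heavy collision'' event $G^t$, so that each epoch contributes only a single failure event, and (ii) be explicit about the independence of the two epochs' random choices so that these events multiply rather than merely add. Everything else reduces to the triangle-inequality estimate above, which is routine once Lemmas~\ref{lem:bucketbound}, \ref{lem:pointquery}, and~\ref{lem:lowerbound} are in hand.
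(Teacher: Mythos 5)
Your proof is correct and follows essentially the same route as the paper's: the same triangle-inequality decomposition of $U^1(x)-L^2(x)$ into $D(x)$ plus the two single-epoch errors, bounded via Lemmas~\ref{lem:pointquery} and~\ref{lem:lowerbound}, with the $(1-\delta)^2$ coming from independence of the two epochs' randomness. Your probability bookkeeping is in fact slightly more careful than the paper's, which treats only one term of the maximum ``without loss of generality'' and multiplies the two lemma probabilities directly; your observation that both the upper- and lower-bound guarantees within an epoch follow from the single collision event underlying Equation~(\ref{eq:markov}) is exactly what lets the bound hold simultaneously for both terms of the maximum without resorting to a union bound.
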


\begin{proof}
Without loss of generality, we consider $\hat{D}(x)=|U^1(x)-L^2(x)|$.  Let
$S^1(x)$ and $S^2(x)$ be the sums of $x$ in the previous and current
epochs, respectively. Let $e_u^1(x) = U^1(x) - S^1(x)$ and $e_l^2 = S^2(x) -
L^2(x)$.  Then $\hat{D}(x)= |S^1(x)+e_u^1(x)-(S^{2}(x)-e_l^2(x))| \le
D(x)+e_u^1(x)+e_l^2(x)$.  Since $e_u^1(x)$ and $e_l^2(x)$ are independent,
we have $\Pr[e_u^1(x)+ e_l^2(x) \le \epsilon(\CS^1+\CS^2)] \ge \Pr[e_u^1(x)
\le \epsilon\CS^1]\cdot\Pr[e_l^2(x) \le \epsilon\CS^2] \ge (1-\delta)^2$,
where the last inequality is due to Lemmas~\ref{lem:pointquery} and
\ref{lem:lowerbound}.  Thus, $\Pr[\hat{D}(x)-D(x)\le\epsilon(\CS^1+\CS^2)]
\ge \Pr[e_u^1(x)+e_l^2(x)\le\epsilon(\CS^1+\CS^2)]\ge(1-\delta)^2$. 
\end{proof}

Theorem~\ref{the:hcbound} summarizes the error bounds for heavy changer
detection in \sysname. 
 
\begin{theorem}
\label{the:hcbound}
\sysname reports every heavy changer with a probability at least $1-\delta$
(provided that $\tfrac{\phi\CD}{\epsilon}\ge\max\{\CS^1,\CS^2\}$), and falsely
reports any non-heavy changer with change no more than $\phi \CD -
\epsilon(\CS^1+\CS^2)$ with a probability at most $1-(1-\delta)^2$.  
\end{theorem}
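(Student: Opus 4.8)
The plan is to mimic the two-part structure of the proof of Theorem~\ref{the:hhbound}: a completeness argument that every heavy changer is reported with probability at least $1-\delta$, and a soundness argument bounding the false-positive probability by $1-(1-\delta)^2$. Throughout I would treat Lemmas~\ref{lem:majority}, \ref{lem:bucketbound}, \ref{lem:pointquery}, \ref{lem:lowerbound}, and \ref{lem:changebound}, together with Equation~(\ref{eq:markov}), as black boxes.

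For completeness, let $x$ be a heavy changer, so $D(x)=|S^1(x)-S^2(x)|\ge\phi\CD$. First I would note that this forces $\max\{S^1(x),S^2(x)\}\ge\phi\CD$; let $t^\star\in\{1,2\}$ be the epoch attaining the maximum. Next I would establish the deterministic inequality $\hat{D}(x)\ge D(x)$: since $U(x)\ge S(x)\ge L(x)$ in each epoch by Lemma~\ref{lem:bucketbound}, one of $U^1(x)-L^2(x)$ and $U^2(x)-L^1(x)$ has the same sign as, and dominates in magnitude, $S^1(x)-S^2(x)$, so the maximum of their absolute values is at least $D(x)$. Then the only source of randomness is whether $x$ is tracked in some bucket in epoch $t^\star$. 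Exactly as in Theorem~\ref{the:hhbound}, \sysname fails to track $x$ in epoch $t^\star$ only if $x$ is not the majority flow of any of its $r$ hashed buckets, i.e.\ $V_{i,j}-S^{t^\star}(x)\ge S^{t^\star}(x)$ for all $i$; since the hypothesis $\tfrac{\phi\CD}{\epsilon}\ge\max\{\CS^1,\CS^2\}$ gives $S^{t^\star}(x)\ge\phi\CD\ge\epsilon\CS^{t^\star}$, applying Equation~(\ref{eq:markov}) to the sketch of epoch $t^\star$ bounds this probability by $(\tfrac{1}{2})^r=\delta$. Whenever $x$ is the majority flow of a bucket $B(i,j)$ in epoch $t^\star$, Lemma~\ref{lem:majority} gives $K_{i,j}=x$, and $V_{i,j}\ge S^{t^\star}(x)\ge\phi\CD$ guarantees that this bucket passes the detection filter; since $\hat{D}(x)\ge D(x)\ge\phi\CD$, flow $x$ is then reported. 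Hence a heavy changer is reported with probability at least $1-\delta$.

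For soundness, let $y$ be a non-heavy changer with $D(y)\le\phi\CD-\epsilon(\CS^1+\CS^2)$. A necessary condition for $y$ to be reported is $\hat{D}(y)\ge\phi\CD$, which would imply $\hat{D}(y)-D(y)\ge\epsilon(\CS^1+\CS^2)$; Lemma~\ref{lem:changebound} bounds the probability of this event by $1-(1-\delta)^2$, which completes the soundness claim.

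I expect the main obstacle to lie in the completeness direction, specifically two points that are easy to state imprecisely: (i) the deterministic bound $\hat{D}(x)\ge D(x)$, which needs the true-bound property of $U$ and $L$ and a short sign-case analysis; and (ii) confirming that the bucket that actually tracks $x$ in epoch $t^\star$ also survives the $V_{i,j}\ge\phi\CD$ filter, for which I rely on $V_{i,j}\ge S^{t^\star}(x)$. I would also highlight the asymmetry between the $1-\delta$ completeness guarantee (a single one-sided tail in one epoch) and the $1-(1-\delta)^2$ soundness guarantee (two independent one-sided tails across both epochs); the handling of the $\max$ over the two change directions is already internal to Lemma~\ref{lem:changebound}, so I would simply cite it rather than redo that union bound.
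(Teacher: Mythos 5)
Your proposal is correct and follows essentially the same route as the paper's proof: the completeness half reduces to the majority-flow/Markov argument applied to the epoch $t^\star$ in which $S^{t^\star}(x)\ge\phi\CD$ (the paper compresses this by citing Theorem~\ref{the:hhbound}, whereas you unroll it through Equation~(\ref{eq:markov}) and Lemma~\ref{lem:majority}, and you additionally justify the deterministic bound $\hat{D}(x)\ge D(x)$ and the $V_{i,j}\ge\phi\CD$ filter, both of which the paper merely asserts), and the soundness half is the identical one-line application of Lemma~\ref{lem:changebound}.
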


\begin{proof}
We first prove that \sysname reports each heavy changer (say $x$) with a high
probability. If flow $x$ is the majority
flow in any one of its hashed buckets, it must be reported, as its
estimate $\hat{D}(x) \ge D(x) \ge \phi\CD$.  Flow $x$ is not reported only if
it is not stored as a candidate heavy flow in both sketches.  Since there must
exist one sketch (either in the previous or current epoch) with
$S(x)\ge\phi\CD$, by Theorem~\ref{the:hhbound}, the probability that $x$ is
not reported in that sketch is at most $\delta$ (assuming that 
$\tfrac{\phi\CD}{\epsilon} \ge \max\{\CS^1, \CS^2\}$).  Thus, a heavy changer
is reported with a probability at least $1-\delta$. 

We next prove that \sysname reports a non-heavy changer (say $y$) with 
$D(y)\le\phi \CD-\epsilon(\CS^1+\CS^2)$ with a small probability.
Let $\hat{D}(y) = D(y) + \Delta$ for some $\Delta$; hence,
$\hat{D}(y) \le \phi\CD - \epsilon(\CS^1+\CS^2) + \Delta$.  If $y$ is
reported as a heavy changer, it requires that $\Delta \ge
\epsilon(\CS^1+\CS^2)$ and such a probability is at most $1-(1-\delta)^2$ due
to Lemma~\ref{lem:changebound}.
\end{proof}

\subsection{Error Bounds for Scalable Heavy Flow Detection}

We generalize the analysis for a single detector in Theorems~\ref{the:hhbound}
and \ref{the:hcbound} for scalable heavy flow detection under \sysname.
Our analysis assumes that the stream of packets of each flow is uniformly
distributed to $d \le q$ detectors.  

\begin{theorem}
\label{the:dis_hhbound}
The controller reports every heavy hitter with a probability at least
$(1-\delta)^d$, and falsely reports a non-heavy hitter with sum no more than
$\tfrac{d}{q}(\phi - \tfrac{\epsilon}{2})\CS$ with a probability at most
$1-(1-\delta)^d$. 
\end{theorem}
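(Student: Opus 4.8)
The plan is to reduce everything to the single-detector guarantees already established (Theorem~\ref{the:hhbound}, with Lemma~\ref{lem:majority} and Lemma~\ref{lem:bucketbound} underneath), by treating each of the $q$ detectors as an independent \sysname instance on its own substream and then combining over detectors using the independence of their hash functions. First I would record two consequences of the uniform-distribution assumption: (i) a flow $x$ mapped to its $d$-subset contributes exactly $S_k(x)=S(x)/d$ to each of those $d$ detectors; and (ii) by symmetry each detector carries a $1/q$ share of the total volume, $\CS_k=\CS/q$. Together these turn the local threshold $\tfrac{\phi}{d}\CS$ into the usual ``$\phi$-fraction-of-the-local-total'' threshold, since $\tfrac{\phi}{d}\CS\ge\tfrac{\phi}{q}\CS=\phi\CS_k$ (equality when $d=q$). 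I would also note the deterministic fact $\hat S^k(x)\ge S_k(x)$ (immediate from Lemma~\ref{lem:bucketbound}, exactly as in the single-detector case), so that the forwarding test $\hat S^k(x)\ge\tfrac{\phi}{d}\CS$ is automatically passed whenever $x$ is stored as a candidate at detector $k$.

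For the recall bound, let $x$ be a heavy hitter. Then $S_k(x)=S(x)/d\ge\tfrac{\phi}{d}\CS\ge\phi\CS_k$, so $x$ is a local heavy hitter at each of its $d$ detectors. I would rerun the recall half of the proof of Theorem~\ref{the:hhbound} (the argument via Lemma~\ref{lem:majority} and Equation~(\ref{eq:markov})) on the substream of detector $k$: with probability at least $1-\delta$, $x$ is stored as a candidate in some bucket of detector $k$, where the side condition $\phi\CS\ge\epsilon\CS$ is replaced by its local analogue $S_k(x)\ge\epsilon\CS_k$. By the deterministic fact above, a stored $x$ is then forwarded to the controller with value $\hat S^k(x)\ge S_k(x)$. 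Since the $d$ detectors draw their hash functions independently, all $d$ forward $x$ with probability at least $(1-\delta)^d$, and on that event the controller's aggregate is $\sum_k\hat S^k(x)\ge\sum_k S_k(x)=S(x)\ge\phi\CS$, so $x$ is reported.

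For the false-positive bound, let $y$ satisfy $S(y)\le\tfrac{d}{q}(\phi-\tfrac{\epsilon}{2})\CS$. Dividing by $d$ and using $\CS_k=\CS/q$ yields $S_k(y)=S(y)/d\le(\phi-\tfrac{\epsilon}{2})\CS_k$ at each of the (at most $d$) detectors that can ever hold $y$ — precisely the ``safe'' regime of the false-positive half of Theorem~\ref{the:hhbound} applied to detector $k$ with threshold $\phi\CS_k$. Hence $\hat S^k(y)\ge\phi\CS_k$ with probability at most $\delta$, and since a detector forwards $y$ only when $\hat S^k(y)\ge\tfrac{\phi}{d}\CS\ge\phi\CS_k$, detector $k$ forwards $y$ with probability at most $\delta$. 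By independence, no detector forwards $y$ with probability at least $(1-\delta)^d$, and then the controller's aggregate for $y$ is $0<\phi\CS$ and $y$ is not reported; so $y$ is falsely reported with probability at most $1-(1-\delta)^d$.

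The delicate step will be the bookkeeping that makes the constant $\tfrac{d}{q}(\phi-\tfrac{\epsilon}{2})\CS$ fall out exactly: it hinges on correctly translating among the three ``scales'' in play — the global threshold $\phi\CS$ used by the controller, the rescaled threshold $\tfrac{\phi}{d}\CS$ used by each detector, and each detector's own volume $\CS_k=\CS/q$ — and on verifying $\tfrac{\phi}{d}\CS\ge\phi\CS_k$ so the forwarding step is never easier than the local heavy-hitter test of Theorem~\ref{the:hhbound}. A second point easy to overlook is why the exponent is $d$ rather than $q$: only the $d$ detectors assigned to a flow can ever store it, so the per-detector events need only be combined over those $d$, and the last ingredient is simply to make explicit that those $d$ instances use independently drawn hash functions.
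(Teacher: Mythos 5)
Your proposal is correct and follows essentially the same route as the paper's proof: reduce to the single-detector guarantee of Theorem~\ref{the:hhbound} applied to each detector's substream (local sum $S(x)/d$, local threshold $\tfrac{\phi}{d}\CS$), then combine over the $d$ assigned detectors using the independence of their hash functions to get the $(1-\delta)^d$ and $1-(1-\delta)^d$ bounds. The paper's version is terser — it does not spell out the local-volume accounting ($\CS_k=\CS/q$) or the verification that the forwarding threshold dominates the local heavy-hitter threshold — so your explicit bookkeeping only makes rigorous what the paper leaves implicit.
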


\begin{proof}
We first study the probability of reporting each heavy hitter (say
$x$).  Recall that the estimate of $x$ at each detector is at least
$\frac{\phi}{d}\CS$ (Section~\ref{subsec:theory_hh}). If all $d$ detectors
report flow $x$ to the controller, flow $x$ must be reported as a heavy hitter
since its added estimate is at least $d\times\frac{\phi}{d}\CS = \phi\CS$.
Such a probability is at least $(1-\delta)^d$ by Theorem~\ref{the:hhbound}. 
	
We next study the probability of reporting a non-heavy hitter (say
$y$). It happens if at least one detector reports flow $y$ to the controller.
If $S(y)\le\tfrac{d}{q}(\phi-\tfrac{\epsilon}{2})\CS$, the sum of flow $y$ at
each detector is at most $\tfrac{1}{q}(\phi-\tfrac{\epsilon}{2})\CS$.  From
Theorem~\ref{the:hhbound}, the probability that a detector reports flow $y$ is
at most $\delta$.  Thus, it is falsely reported as a heavy hitter by the
controller with a probability at most $1- (1-\delta)^d$. 
\end{proof}

\begin{theorem}
\label{the:dis_hcbound}
The controller reports every heavy changer with a probability at least
$(1-\delta)^d$, and falsely reports a non-heavy changer with change no more
than $\tfrac{d}{q}(\phi\CD - \epsilon(\CS^1+\CS^2))$ with a probability at
most $1-(1-\delta)^{2d}$. 
\end{theorem}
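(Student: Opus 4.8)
The plan is to mirror the scalable heavy hitter argument (Theorem~\ref{the:dis_hhbound}) and reduce everything to a per-detector application of the single-detector heavy changer bounds (Theorem~\ref{the:hcbound}). Under the uniform-dispatch assumption of Section~\ref{subsec:distributed}, a flow $x$ that maps to a size-$d$ subset of the $q$ detectors has, at each such detector $k$, local per-epoch sums $S^k(x)=S(x)/d$, local change $D^k(x)=D(x)/d$, local total change $\CD^k=\CD/d$, and local per-epoch totals $\CS^{k,t}=\CS^t/d$ for $t\in\{1,2\}$. The key observation I would record is that the detector threshold $\tfrac{\phi}{d}\CD$ equals $\phi\CD^k$, so running heavy changer detection at detector $k$ is exactly single-detector heavy changer detection with the \emph{same} fractional threshold $\phi$, and the admissibility hypothesis $\tfrac{\phi\CD^k}{\epsilon}\ge\max\{\CS^{k,1},\CS^{k,2}\}$ collapses to the global $\tfrac{\phi\CD}{\epsilon}\ge\max\{\CS^1,\CS^2\}$. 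As in Theorem~\ref{the:dis_hhbound} I would treat the $d$ detectors as using independent randomness, so their success/failure events multiply; I would also carry along the deterministic fact (from Lemma~\ref{lem:bucketbound}, as already used inside the proof of Theorem~\ref{the:hcbound}) that any computed estimate satisfies $\hat{D}^k(\cdot)\ge D^k(\cdot)$.

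For the positive direction, I would take a heavy changer $x$, so $D(x)\ge\phi\CD$ and hence $D^k(x)=D(x)/d\ge\phi\CD^k$ at each of its $d$ detectors; Theorem~\ref{the:hcbound} then says detector $k$ sends $(x,\hat{D}^k(x))$ with probability at least $1-\delta$. When all $d$ detectors do so, an event of probability at least $(1-\delta)^d$ by independence, the controller's aggregate obeys $\sum_k\hat{D}^k(x)\ge\sum_k D^k(x)=D(x)\ge\phi\CD$, so $x$ is reported; this yields the claimed $(1-\delta)^d$ guarantee.

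For the false-positive direction, I would fix $y$ with $D(y)\le\tfrac{d}{q}\bigl(\phi\CD-\epsilon(\CS^1+\CS^2)\bigr)$ (the claim being vacuous when this bound is negative) and reduce via the inclusion ``the controller reports $y$'' $\subseteq$ ``some detector sends $y$'' (otherwise the aggregate is zero) to a per-detector estimate. At detector $k$, $D^k(y)=D(y)/d\le\tfrac{1}{q}\bigl(\phi\CD-\epsilon(\CS^1+\CS^2)\bigr)\le\tfrac{1}{d}\bigl(\phi\CD-\epsilon(\CS^1+\CS^2)\bigr)=\phi\CD^k-\epsilon(\CS^{k,1}+\CS^{k,2})$, using $d\le q$ in the middle step, so $y$ falls under the false-positive clause of Theorem~\ref{the:hcbound} at detector $k$ and is sent with probability at most $1-(1-\delta)^2$; multiplying the complementary events over the $d$ detectors gives probability at most $1-\bigl((1-\delta)^2\bigr)^d=1-(1-\delta)^{2d}$ that any detector sends $y$, hence the same bound for the controller. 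I expect the only non-mechanical points to live in this last step: recognizing that it suffices to control ``some detector sends $y$'' rather than the aggregated estimate, and checking that the $d\le q$ slack is precisely what converts the global hypothesis on $D(y)$ into the per-detector hypothesis demanded by Theorem~\ref{the:hcbound}; the uniform-split bookkeeping and the deterministic inequality $\hat{D}^k(\cdot)\ge D^k(\cdot)$ are routine.
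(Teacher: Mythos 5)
Your proof is correct and is exactly the argument the paper intends: the paper's own ``proof'' of this theorem is a one-line remark that it is analogous to Theorem~\ref{the:dis_hhbound}, and you have carried out precisely that analogy (per-detector application of Theorem~\ref{the:hcbound}, independence across the $d$ detectors so the failure probabilities compound to $(1-\delta)^d$ and $1-(1-\delta)^{2d}$, and the $d\le q$ slack converting the global change bound into the per-detector one). One minor bookkeeping quibble that does not affect validity: under uniform dispatch a detector sees only the flows mapped to it, so the local totals satisfy $\CS^{k,t}\approx\CS^t/q$ and $\CD^k\approx\CD/q$ rather than the $\CS^t/d$, $\CD/d$ you wrote --- but since the true local totals are no larger than your stated values, every inequality you need (the admissibility proviso and the false-positive margin $\epsilon(\CS^{k,1}+\CS^{k,2})$) only becomes easier to satisfy.
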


\begin{proof}
It is similar to that in Theorem~\ref{the:dis_hhbound} and omitted. 
\end{proof}

\subsection{Error Bounds for Network-Wide Heavy Flow Detection}
\label{subsec:nwbound}

For network-wide heavy flow detection, we can readily check that the
complexity of the merge operation in Algorithm~\ref{alg:merge} is $O(rwq^2) =
O(\tfrac{q^2}{\epsilon}\log{\tfrac{1}{\delta}})$.  In the following, we 
analyze the error bounds for network-wide heavy flow detection. 

Lemma~\ref{lem:nw_majority} shows that if there exists a majority flow
(defined in Section~\ref{subsec:theory_hh}) in a bucket of the merged sketch,
then the bucket can track the majority flow, even though each of the detectors
only sees a portion of traffic of the majority flow. 

\begin{lemma}
\label{lem:nw_majority}
After the $q$ MV-Sketch instances are merged, if there exists a majority flow
$x$ in $B(i,j)$ in the merged sketch, then it must be stored in $K_{i,j}$.     
\end{lemma}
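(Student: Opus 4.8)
The plan is to show that the candidate key the controller installs in the merged bucket, namely $x^* = \argmax_{z \in T}\{e(z)\}$ with $T = \{K^1_{i,j}, \dots, K^q_{i,j}\}$, must coincide with the majority flow $x$. Let $S^k(x)$ denote the portion of flow $x$'s traffic that is hashed to $B^k(i,j)$; since each monitored packet appears at exactly one \sysname instance, $S(x) = \sum_{k=1}^{q} S^k(x)$ and $V_{i,j} = \sum_{k=1}^{q} V^k_{i,j}$, so the majority hypothesis is $S(x) > \tfrac{1}{2}V_{i,j}$. It is convenient to write, for an arbitrary key $z$, $e_k(z)$ for the amount detector $k$ contributes to $e(z)$ in Algorithm~\ref{alg:merge}, i.e., $e_k(z) = \tfrac{V^k_{i,j}+C^k_{i,j}}{2}$ when $K^k_{i,j} = z$ and $e_k(z) = \tfrac{V^k_{i,j}-C^k_{i,j}}{2}$ otherwise, so that $e(z) = \sum_{k=1}^{q} e_k(z)$; recall also that every indicator counter satisfies $C^k_{i,j} \ge 0$.

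First I would record two consequences of Lemma~\ref{lem:bucketbound} applied detector by detector. (i) $x \in T$: if $x$ were not any $K^k_{i,j}$, then the ``otherwise'' case of Lemma~\ref{lem:bucketbound} gives $S^k(x) \le \tfrac{V^k_{i,j}-C^k_{i,j}}{2} \le \tfrac{V^k_{i,j}}{2}$ for every $k$, hence $S(x) \le \tfrac{1}{2}V_{i,j}$, contradicting majority; so $x$ is a candidate key in at least one detector and is eligible to be selected. (ii) $e(x) \ge S(x)$: whether or not $K^k_{i,j} = x$, Lemma~\ref{lem:bucketbound} gives $e_k(x) \ge S^k(x)$, and summing over $k$ yields the claim (this is the upper-bound property stated after Algorithm~\ref{alg:merge}).

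The heart of the argument is to bound, for every competitor $y \in T$ with $y \ne x$, the estimate $e(y)$ by $V_{i,j} - S(x)$, which I would obtain from the per-detector inequality $e_k(y) \le V^k_{i,j} - S^k(x)$, proved by splitting on how $K^k_{i,j}$ relates to $x$ and $y$. If $K^k_{i,j} = y$ (hence $K^k_{i,j}\ne x$), then $e_k(y) = \tfrac{V^k_{i,j}+C^k_{i,j}}{2}$ and the desired inequality rearranges to $S^k(x) \le \tfrac{V^k_{i,j}-C^k_{i,j}}{2}$, which is exactly the ``$K_{i,j}\ne x$'' branch of Lemma~\ref{lem:bucketbound}. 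If $K^k_{i,j} = x$ (hence $K^k_{i,j}\ne y$), then $e_k(y) = \tfrac{V^k_{i,j}-C^k_{i,j}}{2}$ and the inequality rearranges to $S^k(x) \le \tfrac{V^k_{i,j}+C^k_{i,j}}{2}$, which is the ``$K_{i,j}= x$'' branch of Lemma~\ref{lem:bucketbound}. If $K^k_{i,j}$ is neither $x$ nor $y$, then $e_k(y) = \tfrac{V^k_{i,j}-C^k_{i,j}}{2}$ while $S^k(x) \le \tfrac{V^k_{i,j}-C^k_{i,j}}{2}$, so $V^k_{i,j} - S^k(x) \ge \tfrac{V^k_{i,j}+C^k_{i,j}}{2} \ge \tfrac{V^k_{i,j}-C^k_{i,j}}{2} = e_k(y)$, using $C^k_{i,j}\ge 0$. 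Summing over $k$ then gives $e(y) \le V_{i,j} - S(x)$.

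Putting the pieces together, for any $y \in T$ with $y\ne x$ we obtain $e(y) \le V_{i,j} - S(x) < V_{i,j} - \tfrac{1}{2}V_{i,j} = \tfrac{1}{2}V_{i,j} < S(x) \le e(x)$, so $x$ is the strict maximizer of $e(\cdot)$ over $T$ and Algorithm~\ref{alg:merge} sets $K_{i,j} = x^* = x$, as claimed. I expect the third paragraph to be the main obstacle, and within it the sub-case $K^k_{i,j} = x$: in that detector $x$ may itself be a local majority, so the crude bound $e_k(y) \le \tfrac{V^k_{i,j}}{2}$ is false and one must instead invoke the tighter estimate $S^k(x) \le \tfrac{V^k_{i,j}+C^k_{i,j}}{2}$ from the ``$K_{i,j}=x$'' branch of Lemma~\ref{lem:bucketbound}; making each of the three sub-cases line up with the correct branch of that lemma is the delicate part of the bookkeeping.
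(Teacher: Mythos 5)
Your proof is correct and follows essentially the same route as the paper's: both first show $x \in T$ via Lemma~\ref{lem:bucketbound} and then show $x$ maximizes $e(\cdot)$ by playing a competitor's estimate off against the majority condition $S(x) > \tfrac{1}{2}V_{i,j}$ and the upper bound $e(x) \ge S(x)$. The only cosmetic difference is that you prove $e(y) \le V_{i,j} - S(x)$ directly via a three-way per-detector case split, whereas the paper establishes the algebraic pairing $e(x) + e(y) \le V_{i,j}$ and derives a contradiction from the assumption $e(y) \ge e(x)$.
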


\begin{proof}
We first show that $x$ is stored in $K^k_{i,j}$ in at least one $B^k(i,j)$ for
$1\le k \le q$.  Suppose that the contrary holds. Then by
Lemma~\ref{lem:bucketbound}, 
$S(x)\le \sum_{k}\tfrac{V^k_{i,j} - C^k_{i,j}}{2}
\le \tfrac{V_{i,j}}{2}$,
which contradicts the definition of a majority flow. 

We next show that $x$ must be the key with the maximum network-wide estimate
among all $K^k_{i,j}$'s for $1\le k \le q$.  Suppose the contrary that 
    $y\neq x$ is the maximum key being returned.  Without loss of generality, let
$K_{i,j}^k = x$ for $1\le k\le t$ for some $t\ge 1$.  By
Algorithm~\ref{alg:merge}, the network-wide estimates of $x$ and $y$ are
$e(x)=\textstyle\sum_{1\le k\le t}\!\tfrac{V^k_{i,j} + C^k_{i,j}}{2} +
\textstyle\sum_{t+1\le k\le q}\!\tfrac{V^k_{i,j} - C^k_{i,j}}{2}$ and 
    $e(y)\le$ $\textstyle\sum_{1\le k\le t}\!\tfrac{V^k_{i,j}- C^k_{i,j}}{2} + 
\textstyle\sum_{t+1\le k\le q}\!\tfrac{V^k_{i,j} + C^k_{i,j}}{2}$, respectively.
    Thus, $e(x) + e(y) \le \sum_k V^k_{i,j} = V_{i,j}$.

This implies that $V_{i,j} \ge 2e(x)$ as $y$ is the maximum key.   By
Lemma~\ref{lem:bucketbound}, $e(x)$ is an upper bound of $S(x)$, but $2S(x) >
V_{i,j}$ as $x$ is a majority flow.  This leads to a contradiction. 
\end{proof}

Lemma~\ref{lem:nw_bound} bounds the sum $S(x)$ of flow $x$ in the merged
sketch, where $S(x)$ now corresponds to the network-wide sum. 

\begin{lemma}
\label{lem:nw_bound}
Consider a bucket $B(i,j)$ that flow $x$ is hashed to in the merged sketch. If
$K_{i,j}$ equals $x$, then $C_{i,j}\le S(x)\le \tfrac{V_{i,j}+C_{i,j}}{2}$;
otherwise, $0\le S(x)\le \tfrac{V_{i,j}-C_{i,j}}{2}.$
\end{lemma}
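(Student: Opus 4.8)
The plan is to mirror the structure of the proof of Lemma~\ref{lem:bucketbound}, but now working with the fields $V_{i,j}$, $K_{i,j}$, $C_{i,j}$ that were produced by the merge procedure in Algorithm~\ref{alg:merge} rather than by the online Update. First I would record the two facts that the merge sets: $V_{i,j}=\sum_{1\le k\le q}V^k_{i,j}$ (Line~3), and $C_{i,j}=\max\{2e(x^*)-V_{i,j},\,0\}$ with $K_{i,j}=x^*$ where $x^*=\argmax_{x\in T}e(x)$ (Lines~15--16). I would also invoke, as already noted in the text right after Algorithm~\ref{alg:merge}, that for any candidate key $x\in T$ the network-wide estimate $e(x)$ computed in Lines~4--14 satisfies $e(x)\ge S(x)$; this follows by applying Lemma~\ref{lem:bucketbound} to each $B^k(i,j)$ and summing (the $k$-th term of $e(x)$ is exactly the per-bucket upper bound for $x$ in detector $k$, whether or not $K^k_{i,j}=x$).

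For the case $K_{i,j}=x$ (so $x=x^*$): the upper bound $S(x)\le\tfrac{V_{i,j}+C_{i,j}}{2}$ splits according to whether the $\max$ in Line~16 is active. If $C_{i,j}=2e(x^*)-V_{i,j}$, then $\tfrac{V_{i,j}+C_{i,j}}{2}=e(x^*)\ge S(x)$ by the estimate bound above. If instead $C_{i,j}=0\ge 2e(x^*)-V_{i,j}$, then $e(x^*)\le\tfrac{V_{i,j}}{2}=\tfrac{V_{i,j}+C_{i,j}}{2}$, and again $S(x)\le e(x^*)$ gives the bound. For the lower bound $C_{i,j}\le S(x)$: if $C_{i,j}=0$ this is immediate since $S(x)\ge 0$; if $C_{i,j}=2e(x^*)-V_{i,j}$, I would lower-bound $e(x^*)$ by splitting its defining sum into the detectors with $K^k_{i,j}=x$ and those without, and use $S(x)=\sum_k S^k(x)$ together with the per-detector bounds from Lemma~\ref{lem:bucketbound} ($S^k(x)\ge C^k_{i,j}$ when $K^k_{i,j}=x$, and $S^k(x)\ge 0$ always) to show $2e(x^*)-V_{i,j}\le S(x)$ — essentially the difference $2e(x^*)-V_{i,j}$ telescopes into a sum of quantities each dominated by the corresponding $S^k(x)$.

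For the case $K_{i,j}\ne x$: here $x\in T$ need not hold, but the estimate-bound argument still applies per detector. I would argue $S(x)=\sum_k S^k(x)\le\sum_k\tfrac{V^k_{i,j}-C^k_{i,j}}{2}$ whenever $K^k_{i,j}\ne x$ for all $k$ (Lemma~\ref{lem:bucketbound}); if $x=K^k_{i,j}$ for some $k$ then it is in $T$ but was not chosen as $x^*$, so $e(x)\le e(x^*)=\tfrac{V_{i,j}+C_{i,j}}{2}$ (when $C_{i,j}>0$), and combining with $e(x)+e(x^*)\le\sum_k V^k_{i,j}=V_{i,j}$ (the pairing identity from the proof of Lemma~\ref{lem:nw_majority}) gives $e(x)\le\tfrac{V_{i,j}-C_{i,j}}{2}$; since $S(x)\le e(x)$, the bound follows. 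When $C_{i,j}=0$ the claimed inequality is just $S(x)\le\tfrac{V_{i,j}}{2}$, which again follows from $S(x)\le e(x)\le e(x^*)\le\tfrac{V_{i,j}}{2}$.

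The main obstacle I anticipate is the case analysis forced by the $\max\{\cdot,0\}$ in Line~16: one has to handle the clamped ($C_{i,j}=0$) and unclamped branches separately in every sub-case, and in the unclamped lower-bound sub-case verify carefully that $2e(x^*)-V_{i,j}$ is genuinely bounded above by $S(x)$ rather than merely by $e(x^*)$ — this is where the decomposition of $e(x^*)$ over the detectors, and the per-detector inequalities from Lemma~\ref{lem:bucketbound}, must be combined tightly. Everything else is bookkeeping parallel to Lemma~\ref{lem:bucketbound}.
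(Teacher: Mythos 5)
Your proposal is correct and follows essentially the same route as the paper's proof: decompose $e(\cdot)$ over the $q$ detectors, apply Lemma~\ref{lem:bucketbound} per detector to get $e(x)\ge S(x)$ and $2e(x)-V_{i,j}=\sum_{k:K^k_{i,j}=x}C^k_{i,j}-\sum_{k:K^k_{i,j}\ne x}C^k_{i,j}\le S(x)$, and handle the $\max\{\cdot,0\}$ clamp by cases. The only cosmetic difference is that for $K_{i,j}\ne x$ the paper expands $V_{i,j}-e(x^*)$ directly and bounds it below by $S(x)$ term by term, whereas you route through the pairing identity $e(x)+e(x^*)\le V_{i,j}$; these are the same computation.
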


\begin{proof}
Suppose $K_{i,j}\! =\! x$.  Without loss of generality, let $K_{i,j}^k = x$
for $1\!\le\! k\!\le\! t$ for some $t\!\ge\! 1$.  By
Algorithm~\ref{alg:merge}, 
$2e(x)-V_{i,j}\!=\!
2(\textstyle\sum_{1\le k\le t}\!\tfrac{V^k_{i,j} + C^k_{i,j}}{2}$ 
$+ \textstyle\sum_{t+1\le k\le q}\!\tfrac{V^k_{i,j}\! -\! C^k_{i,j}}{2}) -
V_{i,j}
\!=\! \textstyle\sum_{1\le k\le t}C^k_{i,j}$ 
$- \textstyle\sum_{t+1\le k\le q}C^k_{i,j}$
$\le \textstyle\sum_{1\le k\le t}C^k_{i,j} \le S(x)$ (the last inequality is
due to Lemma~\ref{lem:bucketbound}).  Thus, $C_{i,j} = \max\{2e(x)-V_{i,j},
0\}\le S(x)$.  

Also, by Lemma~\ref{lem:bucketbound}, $e(x)\ge S(x)$.
Since $V_{i,j}$ is the total sum of values in $B(i,j)$, $V_{i,j}\ge S(x)$.
Thus, $\tfrac{V_{i,j}+C_{i,j}}{2}\ge\tfrac{V_{i,j}+2e(x)-V_{i,j}}{2}=e(x)\ge
S(x)$. 
	
Suppose now $K_{i,j}\neq x$. Let $K_{i,j} = y$, meaning that $e(y)$ is the
maximum network-wide estimate among all keys hashed to $B(i,j)$.  Without loss
of generality, let $K_{i,j}^k = y$ for $1\!\le\! k\!\le\! t$ for some
$t\!\ge\! 1$.  Note that $C_{i,j}=\max\{2e(y)-V_{i,j}, 0\}$.  If $C_{i,j}=0$,
then $2e(y)-V_{i,j}\le 0$. Thus, $S(x)\le e(x) \le e(y) \le \tfrac{V_{i,j}}{2}
= \tfrac{V_{i,j}-C_{i,j}}{2}$.  

If $C_{i,j} \neq 0$, then $\tfrac{V_{i,j}-C_{i,j}}{2} \!=\! V_{i,j} \!-\! e(y)
\!=\! \textstyle\sum_{k}\!V^{k}_{i,j}
- \textstyle\sum_{1\le k\le t}\!\tfrac{V^k_{i,j}+C^k_{i,j}}{2}$ 
$- \textstyle\sum_{t+1\le k\le q}\!\tfrac{V^k_{i,j}-C^k_{i,j}}{2}$
$= \textstyle\sum_{1\le k\le t}\!\tfrac{V^k_{i,j}-C^k_{i,j}}{2} $
$+ \textstyle\sum_{t+1\le k\le q}\!\tfrac{V^k_{i,j}+C^k_{i,j}}{2} \ge S(x)$ 
(by Lemma~\ref{lem:bucketbound}). 
\end{proof}

Theorem~\ref{the:nw_summary} summarizes the error bounds for network-wide
heavy hitter and heavy changer detection in the merged \sysname.

\begin{theorem}
\label{the:nw_summary}
The merged \sysname reports every heavy hitter with a probability at least
$1-\delta$ (provided that $\phi\CS \ge \epsilon\CS$), and falsely reports a
non-heavy hitter with sum no more than $(\phi - \tfrac{\epsilon}{2})\CS$ with
a probability at most $\delta$; it reports every heavy changer with a
probability at least $1-\delta$ (provided that
$\tfrac{\phi\CD}{\epsilon}\ge\max\{\CS^1,\CS^2\}$), and falsely reports any
non-heavy changer with change no more than $\phi \CD - \epsilon(\CS^1+\CS^2)$
with a probability at most $1-(1-\delta)^2$.  
\end{theorem}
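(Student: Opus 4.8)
The plan is to show that, for the purpose of the analysis, the merged \sysname behaves exactly like a single \sysname applied to the aggregate (network-wide) traffic, so that Theorems~\ref{the:hhbound} and~\ref{the:hcbound} transfer almost verbatim. The starting observation is that all $q$ instances share the same $r$ hash functions and the same dimensions $r\times w$; hence a flow $x$ is mapped to the bucket $B(i,j)$ with $j=h_i(x)$ in every detector and in the merged sketch, and by Line~3 of Algorithm~\ref{alg:merge} we have $V_{i,j}=\sum_{k}V^k_{i,j}$, which equals the aggregate sum of all flows hashed to that bucket. Consequently, writing $\CS$ (resp. $S(x)$) for the network-wide total (resp. per-flow) sum, $V_{i,j}-S(x)=\sum_{y\ne x,\,h_i(y)=h_i(x)}S(y)$, so the expectation bound $E[V_{i,j}-S(x)]\le\tfrac{\CS-S(x)}{w}\le\tfrac{\epsilon\CS}{2}$ and therefore Equation~(\ref{eq:markov}) hold for the merged sketch without change. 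The row-wise independence used throughout also survives merging, since the only randomness used in row $i$ is $h_i$, and the $h_i$'s remain pairwise independent.

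Next I would invoke the two structural lemmas already established for the merged sketch: Lemma~\ref{lem:nw_majority} replaces Lemma~\ref{lem:majority} (a majority flow of a merged bucket is stored in $K_{i,j}$), and Lemma~\ref{lem:nw_bound} replaces Lemma~\ref{lem:bucketbound} (the merged bucket gives the true bounds $C_{i,j}\le S(x)\le\tfrac{V_{i,j}+C_{i,j}}{2}$ when $K_{i,j}=x$, and $0\le S(x)\le\tfrac{V_{i,j}-C_{i,j}}{2}$ otherwise). With these, the row estimate $\hat{S}_i(x)$ of Algorithm~\ref{alg:query} applied to the merged sketch still satisfies $\hat{S}_i(x)-S(x)\le\tfrac{V_{i,j}-S(x)}{2}$ in both cases, and the lower-bound estimate $L_i(x)$ still satisfies $S(x)-L_i(x)\le V_{i,j}-S(x)$; combining with Equation~(\ref{eq:markov}) and taking the min (resp. max) over the $r$ rows reproduces Lemma~\ref{lem:pointquery} ($\hat{S}(x)\le S(x)+\tfrac{\epsilon\CS}{2}$ with probability at least $1-\delta$) and Lemma~\ref{lem:lowerbound} ($S(x)-L(x)\le\epsilon\CS$ with probability at least $1-\delta$) for the merged sketch.

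The heavy-hitter half of Theorem~\ref{the:nw_summary} then follows exactly as in Theorem~\ref{the:hhbound}: a true heavy hitter $x$ fails to be reported only if it is not a majority flow in any of its $r$ merged buckets, which by Equation~(\ref{eq:markov}) has probability at most $\delta$; and a flow $y$ with $S(y)\le(\phi-\tfrac{\epsilon}{2})\CS$ is reported only if $\hat{S}(y)-S(y)\ge\tfrac{\epsilon\CS}{2}$, which by the merged-sketch version of Lemma~\ref{lem:pointquery} has probability at most $\delta$. For the heavy-changer half I would repeat the arguments of Lemma~\ref{lem:changebound} and Theorem~\ref{the:hcbound} on the two epochs' merged sketches: the errors $e_u^1(x)=U^1(x)-S^1(x)$ and $e_l^2(x)=S^2(x)-L^2(x)$ are independent (they depend on disjoint epochs' traffic), so $\hat{D}(x)\le D(x)+\epsilon(\CS^1+\CS^2)$ with probability at least $(1-\delta)^2$; a true heavy changer is stored as a candidate in at least one epoch's merged sketch (since $S(x)\ge\phi\CD$ there and $V_{i,j}\ge S(x)$ triggers the check), giving reporting probability at least $1-\delta$ by the heavy-hitter bound; and a flow with $D(y)\le\phi\CD-\epsilon(\CS^1+\CS^2)$ is falsely reported only if the combined error exceeds $\epsilon(\CS^1+\CS^2)$, of probability at most $1-(1-\delta)^2$.

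The main obstacle is conceptual rather than computational: one must confirm that replacing the online MJRTY-maintained $C_{i,j}$ by the merge-time quantity $\max\{2e(x^*)-V_{i,j},0\}$ (Line~16 of Algorithm~\ref{alg:merge}) does not break any inequality used in the single-sketch proofs. This is precisely what Lemma~\ref{lem:nw_bound} guarantees, so once its statement is granted the remaining steps are routine substitutions of Lemmas~\ref{lem:nw_majority}--\ref{lem:nw_bound} for Lemmas~\ref{lem:majority}--\ref{lem:bucketbound}. A secondary point worth stating explicitly is that the merge operates row- and column-wise with shared hash functions, so the ``as if measured in one big detector'' interpretation is \emph{exact} for the $V_{i,j}$ field and only approximate (but within the bounds of Lemma~\ref{lem:nw_bound}) for $K_{i,j}$ and $C_{i,j}$, which is all the analysis requires.
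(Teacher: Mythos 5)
Your proposal is correct and follows essentially the same route as the paper: the paper's own proof simply observes that Lemmas~\ref{lem:nw_majority} and~\ref{lem:nw_bound} give the merged sketch the same per-bucket bounds on $S(x)$ as Lemmas~\ref{lem:majority} and~\ref{lem:bucketbound} give a single sketch, so Theorems~\ref{the:hhbound} and~\ref{the:hcbound} carry over unchanged. You have merely spelled out the substitutions (exactness of the aggregated $V_{i,j}$, survival of Equation~(\ref{eq:markov}), and the re-derivation of Lemmas~\ref{lem:pointquery}--\ref{lem:changebound}) that the paper leaves implicit.
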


\begin{proof}
By Lemmas~\ref{lem:nw_majority} and \ref{lem:nw_bound}, the bounds of $S(x)$
in the merged \sysname are the same as if all traffic were processed by a
single \sysname. Thus, the network-wide detection of \sysname achieves the
same accuracy as in local detection. 
\end{proof}

\subsection{Comparison with State-of-the-art Invertible Sketches}
\label{subsec:comparison}

We present a comparative analysis on \sysname and state-of-the-art invertible
sketches, including Count-Min-Heap \cite{Cormode2005}, LD-Sketch
\cite{Huang2014}, Deltoid \cite{Cormode2005deltoid}, and Fast Sketch
\cite{Liu2012}.  In the interest of space, we focus on heavy hitter detection
using a single sketch.  Table~\ref{tab:compare} shows the false negative
probability, and the space and time complexities, in terms of $\epsilon$,
$\delta$, $n$, and $H$ (the maximum number of heavy hitters in an epoch). 

We first study the false negative probability (i.e, the maximum probability of
not reporting a heavy hitter); we study other accuracy metrics in
Section~\ref{sec:evaluation}.  Both Count-Min-Heap and LD-Sketch guarantee
zero false negatives as they are configured to keep all heavy hitters in extra
structures, while \sysname can miss a heavy hitter with a probability at most
$\delta$.  Nevertheless, \sysname achieves almost zero false negatives in our
evaluation based on real traces (Section~\ref{sec:evaluation}).  

Regarding the space complexity, all sketches have a $\log n$ term.  However,
it refers to $\log n$ bits (i.e., the key length) in Count-Min-Heap,
LD-Sketch, and \sysname, while it refers to $\log n$ integer counters in
Deltoid and Fast Sketch. 

Regarding the (per-packet) update time complexity, Count-Min-Heap updates the
sketch ($O(\log\tfrac{1}{\delta})$ time) and accesses its heap if the
packet is from a heavy flow ($O(\log H)$ time), and its update time increases
with $H$.  Both Deltoid and Fast Sketch have high time complexities, which
increase with the key length $\log n$.  Both \sysname and LD-Sketch have the
same update time complexities, yet LD-Sketch may need to expand its
associative arrays on-the-fly and this decreases the overall throughput from
our evaluation (Section~\ref{sec:evaluation}).  

We also present the detection time complexity.  However, our evaluation shows
that the detection time of recovering all heavy flows is very small (within
milliseconds) for all sketches shown in Table~\ref{tab:compare}. 

%--------------------------------------------------------------------------
% Implementation in Programmable Switches
%--------------------------------------------------------------------------
\section{Implementation in Programmable Switches}
\label{sec:tofino}

We study how to deploy \sysname in programmable switches to support heavy flow
detection in the data plane.  However, realizing \sysname with high
performance in programmable switches is non-trivial, due to various
restrictions in the switch programming model.  In this section, we introduce
PISA (Protocol-Independent Switch Architecture), and discuss the challenges of
realizing \sysname in PISA switches.  Finally, we show how we overcome the
challenges to make \sysname deployable. 

\subsection{Basics}
\label{subsec:pisa}

We target a family of programmable switches based on PISA
\cite{Bosshart2013,Sivaraman2016}. A PISA switch consists of a programmable
parser, followed by an ingress/egress pipeline of stages, and finally a
de-parser. Packets are first parsed by the parser, which extracts header
fields and custom metadata to form a {\em packet header vector (PHV)}. The PHV
is then passed to the ingress/egress pipeline of stages that comprises 
{\em match-action tables}. Each stage matches some fields of the PHV with a
list of entries and applies a matched action (e.g., modifying PHV fields,
updating persistent states, or performing routing) to the packet.  Finally,
the de-parser reassembles the modified PHV with the original packet and sends
the packet to an output port. PISA switches are fast in packet forwarding, by
limiting the complexity of stages in the pipeline.  Each stage has its own
dedicated resources, including SRAM and multiple arithmetic logic units (ALUs)
that run in parallel. 

PISA switches achieve programmability by supporting multiple customizable
match-action tables in the same stage and connecting many stages into a
pipeline. Programmers can write a program using a domain-specific language
(e.g., P4 \cite{Bosshart2014}) to define packet formats, build custom
processing pipelines, and configure the match-action tables.  

\subsection{Challenges}
\label{subsec:base}

Supporting heavy flow detection in PISA switches must address the hardware
resource constraints \cite{Sapio2017,Sivaraman2017, Gupta2018}: (i) the SRAM
of each stage is of small and identical size (e.g., few megabytes); (ii) the
number of available ALUs per stage is limited; (iii) the pipeline contains a
fixed number of physical stages (e.g., 1-32); and (iv) only a limited size of
a PHV can be passed across stages (e.g., few kilobits). Nevertheless, the
small and static memory design feature of \sysname makes it a good fit to
address the limited resources in PISA switches. 

However, realizing \sysname in PISA switches still faces programming
challenges, due to the restrictions in the switch programming model.  Consider
the update operation of \sysname in Algorithm~\ref{alg:update}.  For
simplicity, we focus on $r=1$ row in the sketch in the following discussion,
yet we can generalize our analysis for multiple rows by duplicating the
single-row implementation.  Intuitively, we can create three register arrays,
namely $V$, $K$ and $C$, to track the total sum, the candidate flow key, and
the indicator counter in \sysname, respectively.  However, there are several
programming challenges. 

\para{Challenge 1 (C1): Limited computation capability for handling
flow keys with more than 32 bits.}  ALUs of PISA switches now only support
primitive arithmetic (e.g., addition and subtraction) on the variables of up
to 32 bits.  While \sysname is also designed based on primitive arithmetic
only, updating the candidate flow key in $K$ is beyond the capability of the
ALUs since the size of a flow key is typically more than 32 bits (e.g., a
5-tuple flow key has 104 bits). 

\para{Challenge 2 (C2): Limited memory access for managing dependent
fields.}  PISA switches support a limited memory access model.  First, the
time budget for each memory access is limited, as only one read-modify-write
is allowed for each variable.  Also, each memory block can only be accessed in
the stage to which it belongs, meaning that we can only access a memory region
{\em once} as a packet traverses the pipeline.  Note that PISA switches allow
concurrent memory accesses to mutually exclusive memory blocks in a single
stage. However, in Algorithm~\ref{alg:update}, the operations on $K$ and $C$
are dependent on each other: the write to $C$ is conditioned on $K$ (Line~4),
while the write to $K$ is conditioned on $C$ (Line~8).  If we want to update
$K$ and $C$ in one stage, we need to perform multiple reads and writes
sequentially, which breaks the time budget of a stage; however, if we place
$K$ and $C$ in different stages, $K$ needs to be accessed twice (the first
access is to check the content of $K$ in Line~3, and the second access is to
update $K$ in Line~8). 

\para{Challenge 3 (C3): Limited branching for updates.}  To simplify
processing, PISA switches design their ALUs with a small circuit depth (e.g.,
3) that hinders complicated predicted operations.  The packet processing in a stage
typically supports an if-else chain with at most two levels
\cite{Sivaraman2016}. In Algorithm~\ref{alg:update}, updating $C$ requires a
three-level if-else chain (Lines~4, 6, and 9), which makes it difficult to
perform the update operation within one stage.  While complex branching is
allowed across stages, updating $C$ in different stages is infeasible with the
memory access model of PISA switches (C2).  

\subsection{Implementation}
\label{subsec:generalize}

\begin{algorithm}[t]
\caption{Implementation for 5-tuple flow keys}
\label{alg:mvgen}
\begin{small}
\begin{algorithmic}[1]
\If {Metadata.repass = 0}
\State // Stage 1: update $V_{h(x)}$ and access $(K^1_{h(x)}, K^2_{h(x)})$ 
\State $V_{h(x)} \leftarrow V_{h(x)} + v_{x}$    
    \If {$(K^1_{h(x)}, K^2_{h(x)})\ne (x_1, x_2)$}
    \State Metadata.flag $\leftarrow$ 1
\EndIf
    \State  // Stage 2: access $(K^3_{h(x)}, K^4_{h(x)})$ 
    \If {$(K^3_{h(x)}, K^4_{h(x)})\ne (x_3, x_4)$}
\State Metadata.flag $\leftarrow$ 1
\EndIf
    \State // Stage 3: update $C_{h(x)}$  
    \If {Metadata.flag $\ne 1$ }
    \State $C_{h(x)} \leftarrow C_{h(x)} + v_{x}$
    \ElsIf {Metadata.flag $=1$ and  $C_{h(x)} \ge v_{x}$}
    \State $C_{h(x)} \leftarrow C_{h(x)} - v_{x}$
    \EndIf 
    \If {Metadata.flag $=1$ and $C_{h(x)} < v_{x}$}
        \State Metadata.repass $\leftarrow$ 1
    \EndIf
    \State // Stage 4: recirculate $x$ 
    \If {Metadata.repass $=1$}
        \State recirculate $x$ 
    \EndIf
\Else 
    \State // Stage 1: update $(K^1_{h(x)}, K^2_{h(x)})$
    \State $(K^1_{h(x)}, K^2_{h(x)}) \leftarrow (x_1, x_2)$
    \State // Stage 2: update $(K^3_{h(x)}, K^4_{h(x)})$ 
    \State $(K^3_{h(x)}, K^4_{h(x)}) \leftarrow (x_3, x_4)$
    \State // Stage 3: update $C$ 
    \State $C_{h(x)} \leftarrow v_x - C_{h(x)}$
\EndIf
\end{algorithmic}
\end{small}
\end{algorithm}

We elaborate how we address the challenges of implementing \sysname in PISA
switches.  To address Challenge~C1, we split a long flow key into multiple
sub-keys and use multiple stages and ALUs to process the sub-keys. For
example, we can split a 104-bit 5-tuple flow key into three 32-bit sub-keys
and one 8-bit sub-key, and access each sub-key in one stage with a single ALU. 
To reduce the number of stages, we can use {\em paired atoms}
\cite{Sivaraman2016} (an atom refers to a packet-processing unit) to update a
pair of sub-keys in one stage.  Specifically, in paired updates, the ALUs of
PISA switches can read two 32-bit elements from the register memory, set up
conditional branching based on both elements, perform primitive arithmetic,
and write back the final results. 

To address Challenges~C2 and C3, we leverage the {\em recirculation} feature
\cite{Sivaraman2017,Ben2018precision} of PISA switches to eliminate the
inter-dependency between $K$ and $C$ and the complex branching for updating
$C$.  We define the {\em change point} as the point where we need to update
the candidate heavy flow key and negate the indicator counter during the
update process (i.e., Lines~8-9 in Algorithm~\ref{alg:update}).  Our idea is
to put the operations at the change point in the second pass of a packet, such
that the operations are carried out if and only if the packet is recirculated
to the second pass of the switch pipeline.  More concretely, in the first
pass, we just read $K$, update $C$, and recirculate the packet if the change
point appears; in the second pass, we update $K$ and negate $C$. 

Algorithm~\ref{alg:mvgen} shows the pseudo-code of implementing \sysname in
PISA switches for 5-tuple flow keys.  Let $V_{h(x)}$, $K_{h(x)}$, and
$C_{h(x)}$ be the entries that $x$ is hashed into in the register arrays $V$,
$K$, and $C$ via the hash function $h$, respectively.  We split a 104-bit flow
key $x$ into four sub-keys: source IP $x_1$, destination IP $x_2$,
source-destination ports $x_3$, and protocol $x_4$.  We use two register
arrays, $(K^1, K^2)$ and $(K^3, K^4)$, to track the candidate heavy flow
sub-keys, such that each element in the arrays is a pair of 32-bit variables.
We use the metadata {\em Metadata.repass}, initialized as zero, in the first
pass of each packet to control the execution of the pipeline. In the first
pass of $x$, we perform the following operations. In Stage~1 and Stage~2, we
update $V_{h(x)}$ and compare each sub-key with the candidate heavy flow
sub-keys; if either sub-key is not matched, we set {\em Metadata.flag} as one
(Lines~4-10).  We update $C_{h(x)}$ based on the value of {\em Metadata.flag}
in Stage~3 (Lines~11-16). Finally, in Stage~4, we check the value of 
{\em Metadata.repass}; if it is one, we recirculate the packet together with
{\em Metadata.repass} to the switch pipeline (Lines~21-23). In the second
pass of $x$, we update the two register arrays $(K^1, K^2)$ and $(K^3, K^4)$,
as well as $C_{h(x)}$ (Lines~25-30). 

Note that our evaluation (Section~\ref{subsec:evaluation_hw}) shows that
\sysname requires a second pass only on a small fraction of packets (e.g.,
less than 5\%), meaning that the recirculation overhead is limited. 

\subsection{Optimizations}
\label{subsec:smallkey}

\begin{algorithm}[t]
\caption{Size counting on 32-bit flow keys}
\label{alg:mvsc}
\begin{small}
\begin{algorithmic}[1]
\If {Metadata.repass = 0}
\State // Stage 1: update $V_{h(x)}$ and $(K_{h(x)}, C_{h(x)})$
\State $V_{h(x)} \leftarrow V_{h(x)} + v_{x}$    
\If {$K_{h(x)} = x$}
    \State $C_{h(x)} \leftarrow C_{h(x)} + v_{x}$
\ElsIf {$K_{h(x)} \neq x$ and  $C_{h(x)} \ge v_{x}$}
    \State $C_{h(x)} \leftarrow C_{h(x)} - v_{x}$
\EndIf
    \If {$K_{h(x)} \neq x$ and $C_{h(x)} < v_{x}$}
    \State $K_{h(x)} \leftarrow x$
    \State Metadata.repass = 1
\EndIf
\State // Stage 2: recirculate $x$
\If {Metadata.repass = 1} 
    \State recirculate $x$ 
\EndIf
\Else 
\State // Stage 1: this is the second pass
\State $C_{h(x)} \leftarrow v_x - C_{h(x)}$
\EndIf
\end{algorithmic}
\end{small}
\end{algorithm}

\begin{algorithm}[t]
\caption{Packet counting on 32-bit flow keys}
\label{alg:mvpc}
\begin{small}
\begin{algorithmic}[1]
\State // Stage 1: update $V_{h(x)}$ and $(K_{h(x)}, C_{h(x)})$ 
\State $V_{h(x)} \leftarrow V_{h(x)} + v_{x}$    
\If {$K_{h(x)} = x$ or $C_{h(x)} = 0$}
    \State $C_{h(x)} \leftarrow C_{h(x)} + 1$
\Else 
    \State $C_{h(x)} \leftarrow C_{h(x)} - 1$
\EndIf
\If {$K_{h(x)} \neq x$ and $C_{h(x)} = 0$}
    \State $K_{h(x)} \leftarrow x$
\EndIf
\end{algorithmic}
\end{small}
\end{algorithm}

We can optimize the implementation of \sysname if the flow keys have no more
than 32 bits (e.g, the source or destination IPv4 address).  This allows us to
access a flow key via a single ALU (i.e., C1 addressed), and update $K$ and
$C$ atomically via paired atoms to address their dependency (i.e., C2
addressed).  Specifically, we can place $K$ and $C$ in a 64-bit register
array, in which the high 32 bits of each entry store the key field, while the
low 32 bits store the indicator counter field.  The paired atom packs the
operations of reading, conditional branching, primitive arithmetic, and
writing for both $K$ and $C$ atomically. 

We show how to update the pair ($K$, $C$) via limited branching in two cases:
(i) {\em size counting}, which counts the total bytes of each flow; and (ii)
{\em packet counting}, which counts the number of packets of each flow. For
size counting, similar to Algorithm~\ref{alg:mvgen}, we put the update of
$C$ at the change point in the second pass of the packet.  For packet
counting, the size $v_{x}$ of flow $x$ is the constant one.  We observe that
when the packet processing reaches the change point, the state of $C_{h(x)}$
is zero and should change from zero to one (Lines~6-9 in
Algorithm~\ref{alg:update}). It is equivalent to incrementing $C_{h(x)}$ by
one as in the case when $K_{h(x)}$ equals $x$ (Line~4 in
Algorithm~\ref{alg:update}). By merging these two branches, we can change and
reorganize the if-conditions in Algorithm~\ref{alg:update} to reduce the
three-level if-else chain to a two-level one, as well as eliminate the
recirculation operations (i.e., C3 addressed). 

Algorithms~\ref{alg:mvsc} and \ref{alg:mvpc} summarize our optimized
implementation of \sysname for size counting and packet counting in PISA
switches, respectively.  Note that Lines~4-12 of Algorithm~\ref{alg:mvsc} and
Lines~3-10 of Algorithm~\ref{alg:mvpc} can be done in one paired atom. 

%---------------------------------------------------------------------------
% Evaluation
%---------------------------------------------------------------------------
\section{Evaluation}
\label{sec:evaluation} 

We conduct evaluation in both software and hardware environments. Our
trace-driven evaluation in software shows that \sysname achieves (i) high
accuracy in heavy flow detection with small and static memory space, (ii) high
processing speed, and (iii) high accuracy in scalable detection, compared to
state-of-the-art invertible sketches.  We also show how SIMD instructions can
further boost the update performance of \sysname in software.  Furthermore, 
our evaluation in a Barefoot Tofino switch \cite{tofino} shows that \sysname
achieves (i) line speed for packet counting and incurs slight (e.g., less than
5\%) performance degradation for size counting, and (ii) incurs only limited
switch resource overhead.  

\subsection{Evaluation in Software}
\label{subsec:evaluation_sw}

\para{Simulation testbed.} We conduct our evaluation on a server equipped with
an eight-core Intel Xeon E5-1630 3.70\,GHz CPU and 16\,GB RAM. The CPU has
64\,KB of L1 cache per core, 256\,KB of L2 cache per core, and 10\,MB of
shared L3 cache.  The server runs Ubuntu 14.04.5.  To exclude the I/O overhead
on performance, we load all datasets into memory prior to all experiments. 

\para{Dataset.} 
We use the anonymized Internet traces from CAIDA \cite{caida}, captured on an
OC-192 backbone link in April 2016.  The original traces are one hour long,
and we focus on the first five minutes of the traces in our evaluation.  We
divide the traces into five one-minute epochs and obtain the average results.
We measure IPv4 packets only.  Each epoch contains 29\,M packets, 1\,M flows,
and 6\,M unique IPv4 addresses on average. 

\para{Methodology.} We take the source/destination address pairs as flow
keys (64 bits long).  For evaluation purposes, we generate the ground truths
by finding $\CS$ and $\CD$, and hence the true heavy flows, for different
epochs.   We use MurmurHash \cite{murmurhash} as the hash function in all
sketches.

We compare \sysname (MV) with state-of-the-art invertible sketches, including
Count-Min-Heap (CMH) \cite{Cormode2005}, LD-Sketch (LD) \cite{Huang2014},
Deltoid (DEL) \cite{Cormode2005deltoid}, and Fast Sketch (FAST)
\cite{Liu2012}. We do not consider Reversible Sketch \cite{Schweller2007} and
SeqHash \cite{Bu2010} (Section~\ref{subsec:sketches}) due to their high update
costs (which increase with the key length) and high enumeration costs in
recovering heavy flows.

We consider various memory sizes for each sketch in our evaluation.  We fix
$r=4$ and vary $w$ according to the specified memory size.  By default, we
choose the threshold that keeps the number of heavy flows detected in each
epoch as 80 on average.  For CMH, we allocate an extra 4\,KB of memory for its
heap data structure to store heavy flows.  For LD, since it dynamically
expands the associative arrays of its buckets (Section~\ref{subsec:sketches}),
we adjust its expansion parameter so that it has comparable memory size to
other sketches. 

\para{Metrics.} We consider the following metrics.
\begin{itemize}[leftmargin=*]
\item \textit{Precision}: fraction of true heavy flows reported over all
reported flows; 
\item \textit{Recall}: fraction of true heavy flows reported over all true
heavy flows; 
\item \textit{F1-score}: $\tfrac{2\times\text{precision}\times\text{recall}}{\text{precision}+\text{recall}}$; 
\item \textit{Relative error}: $\tfrac{1}{|R|}\textstyle\sum_{x, x\in R}\frac{|S(x) - \hat{S}(x)|}{S(x)}$, where $R$ is the set of true heavy
flows reported; and 
\item \textit{Update throughput}: number of packets processed per second
(in units of pkts/s). 
\end{itemize}

\begin{figure}[!t]
\centering
\begin{tabular}{c@{\ }c}
\multicolumn{2}{c}{\includegraphics[width=1.8in]{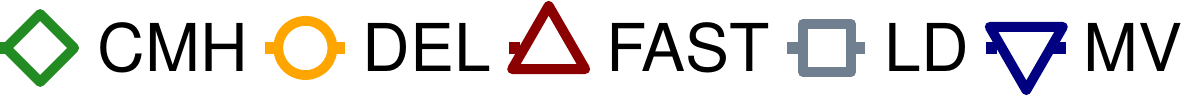}} \\
\includegraphics[width=1.65in]{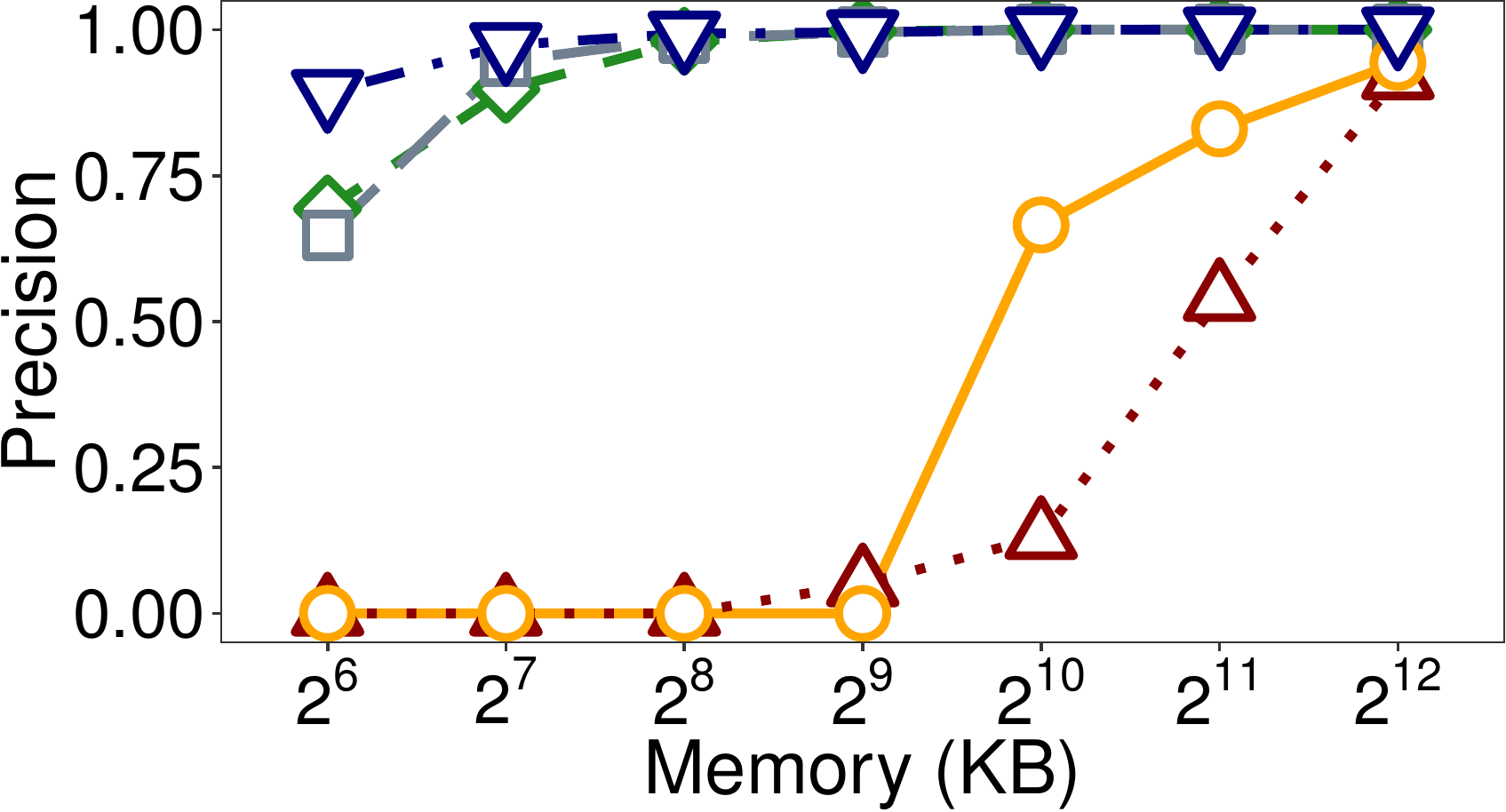} &
\includegraphics[width=1.65in]{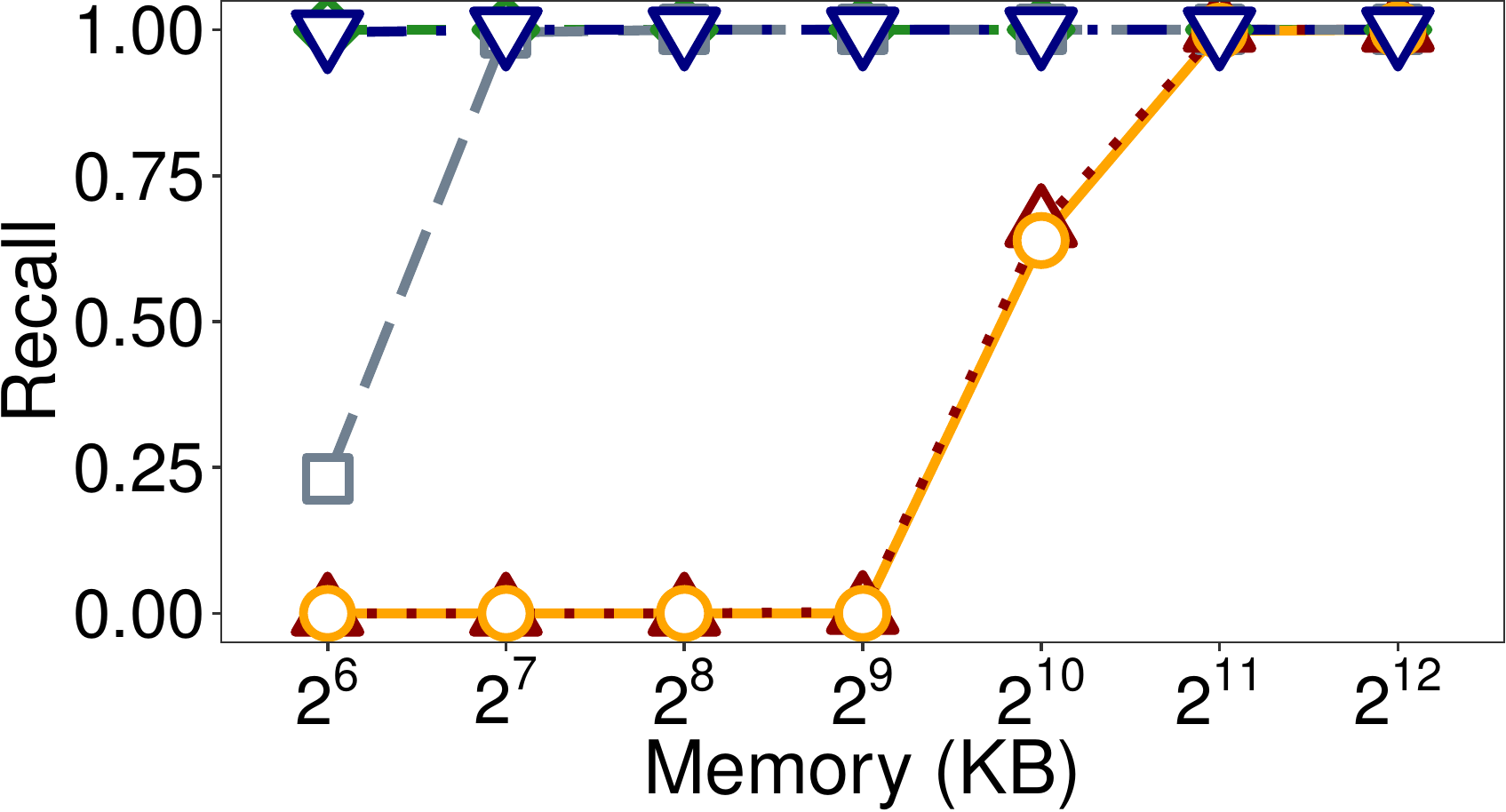} 
\vspace{-3pt}\\
{\footnotesize (a) Precision} & 
{\footnotesize (b) Recall}
\vspace{3pt}\\
\includegraphics[width=1.65in]{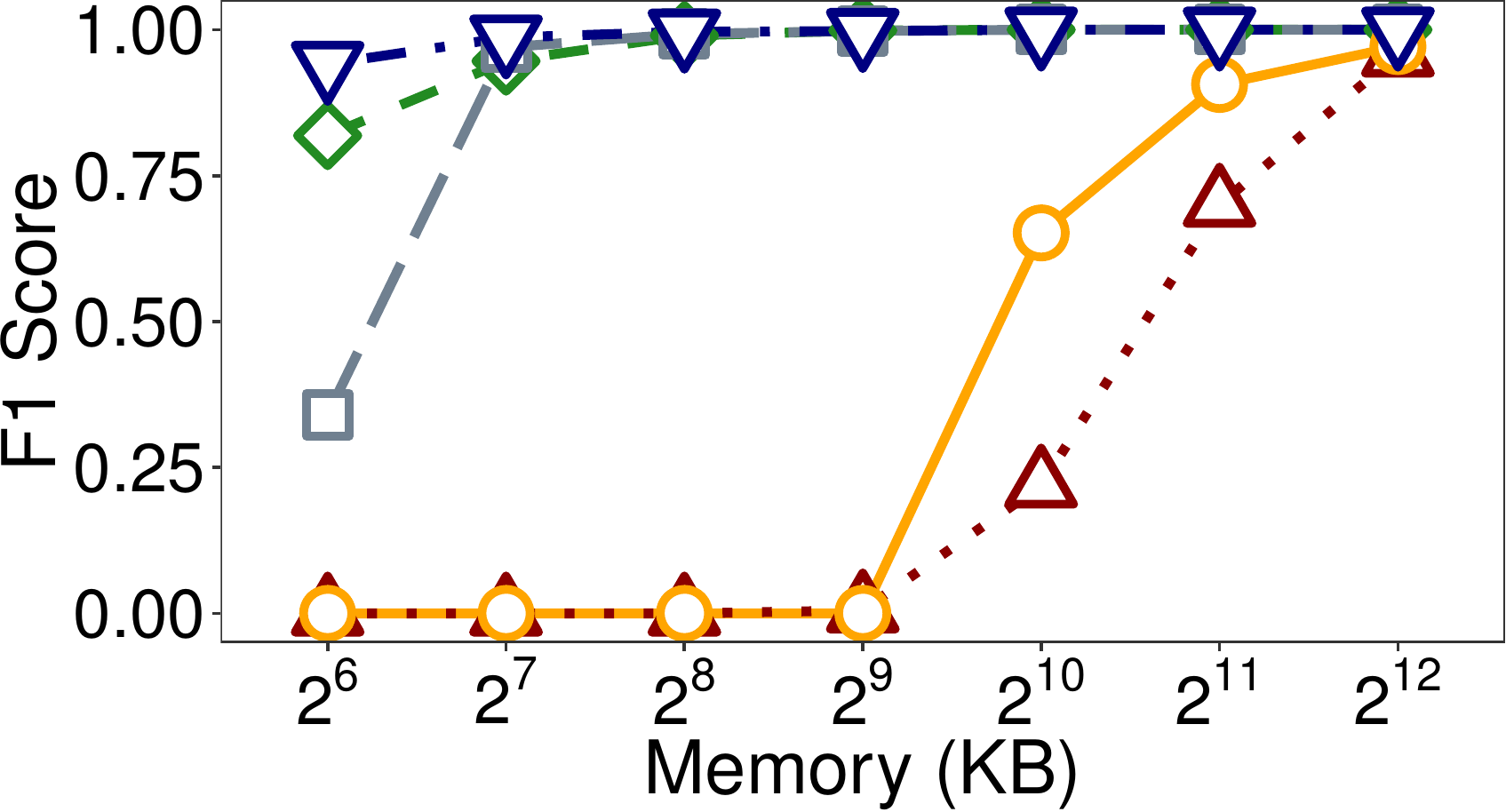} &
\includegraphics[width=1.65in]{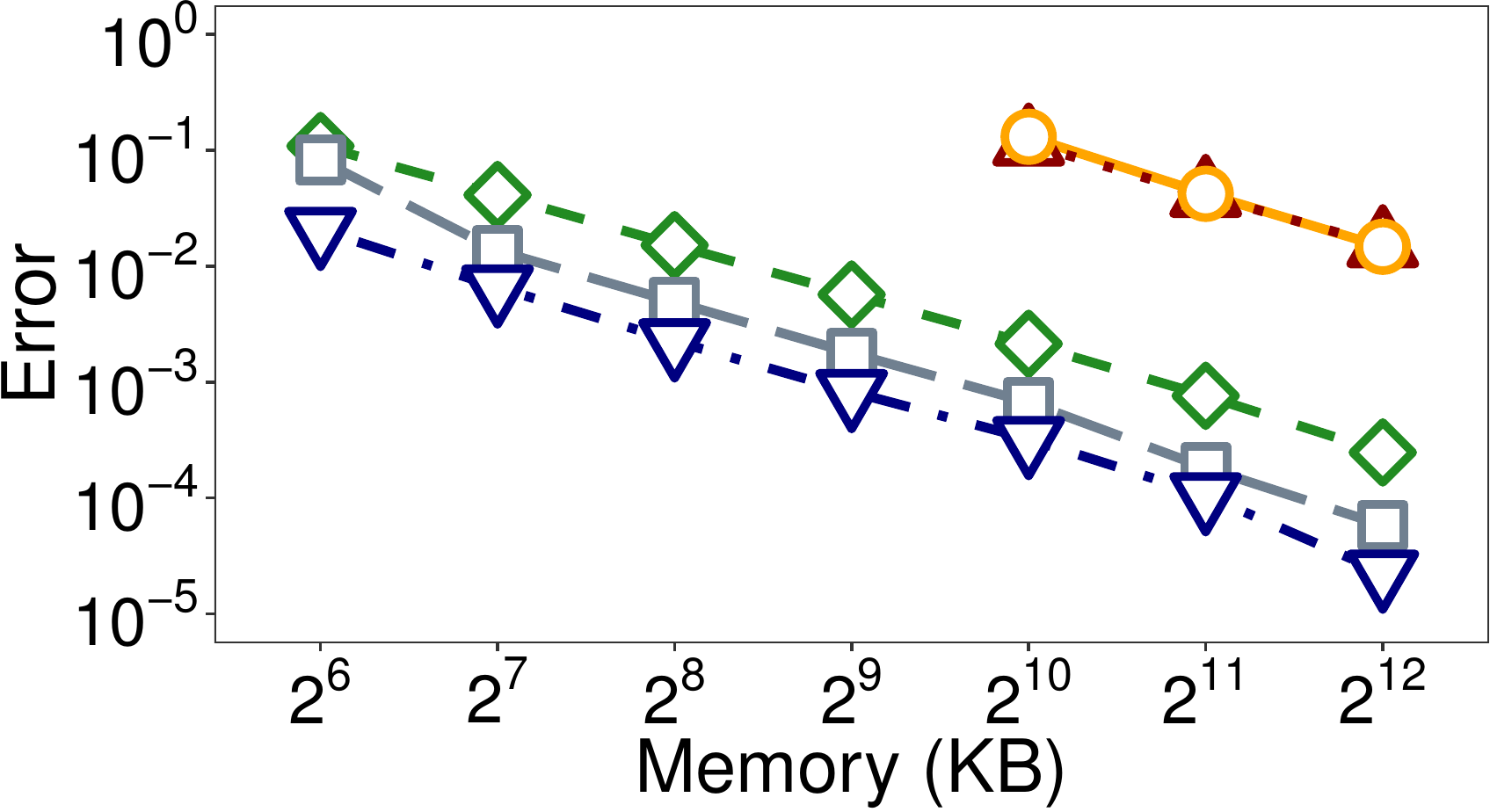} 
\vspace{-3pt}\\
{\footnotesize (c) F1 score} & 
{\footnotesize (d) Relative error} 
\end{tabular}
\vspace{-6pt}
\caption{Experiment 1 (Accuracy for heavy hitter detection). }
\label{fig:exp_hh}
\vspace{-8pt}
\end{figure}

\para{Experiment 1 (Accuracy for heavy hitter detection).}  
Figure~\ref{fig:exp_hh} compares the accuracy of \sysname with that of other
sketches in heavy hitter detection.  Both DEL and FAST have precision and recall
near zero when the amount of memory is 512\,KB or less, as they need more memory
to recover all heavy hitters.  Both CMH and LD have high accuracy except when
the memory size is only 64\,KB, as they incur many false positives
in limited memory. Overall, MV-Sketch achieves high accuracy; for example,
its relative error is on average 55.8\% and 87.2\% less than those of LD and
CMH, respectively. 

\begin{figure}[!t]
\centering
\begin{tabular}{c@{\ }c}
\multicolumn{2}{c}{\includegraphics[width=1.8in]{fig/legend.pdf}} \\
\includegraphics[width=1.65in]{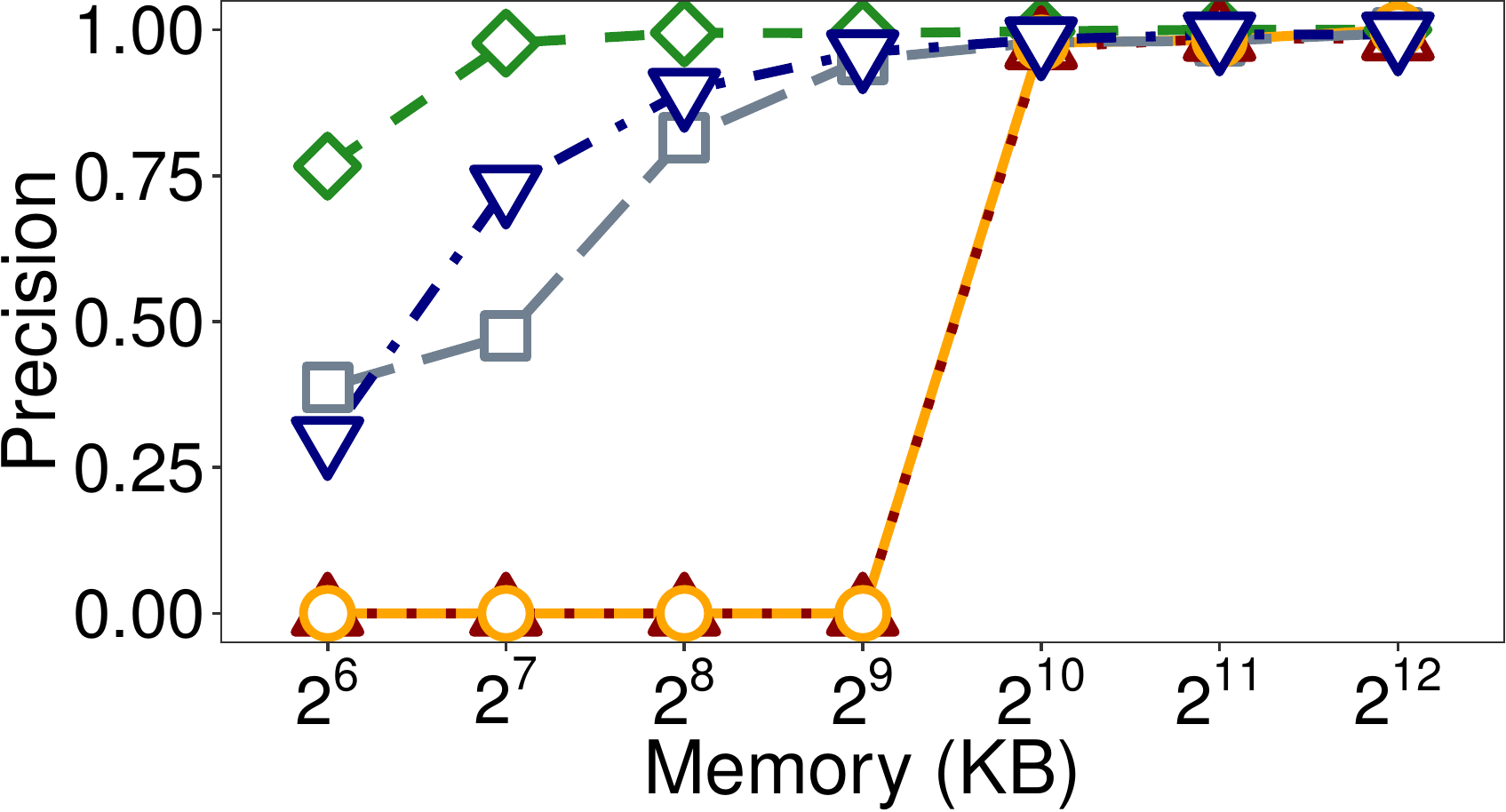} &
\includegraphics[width=1.65in]{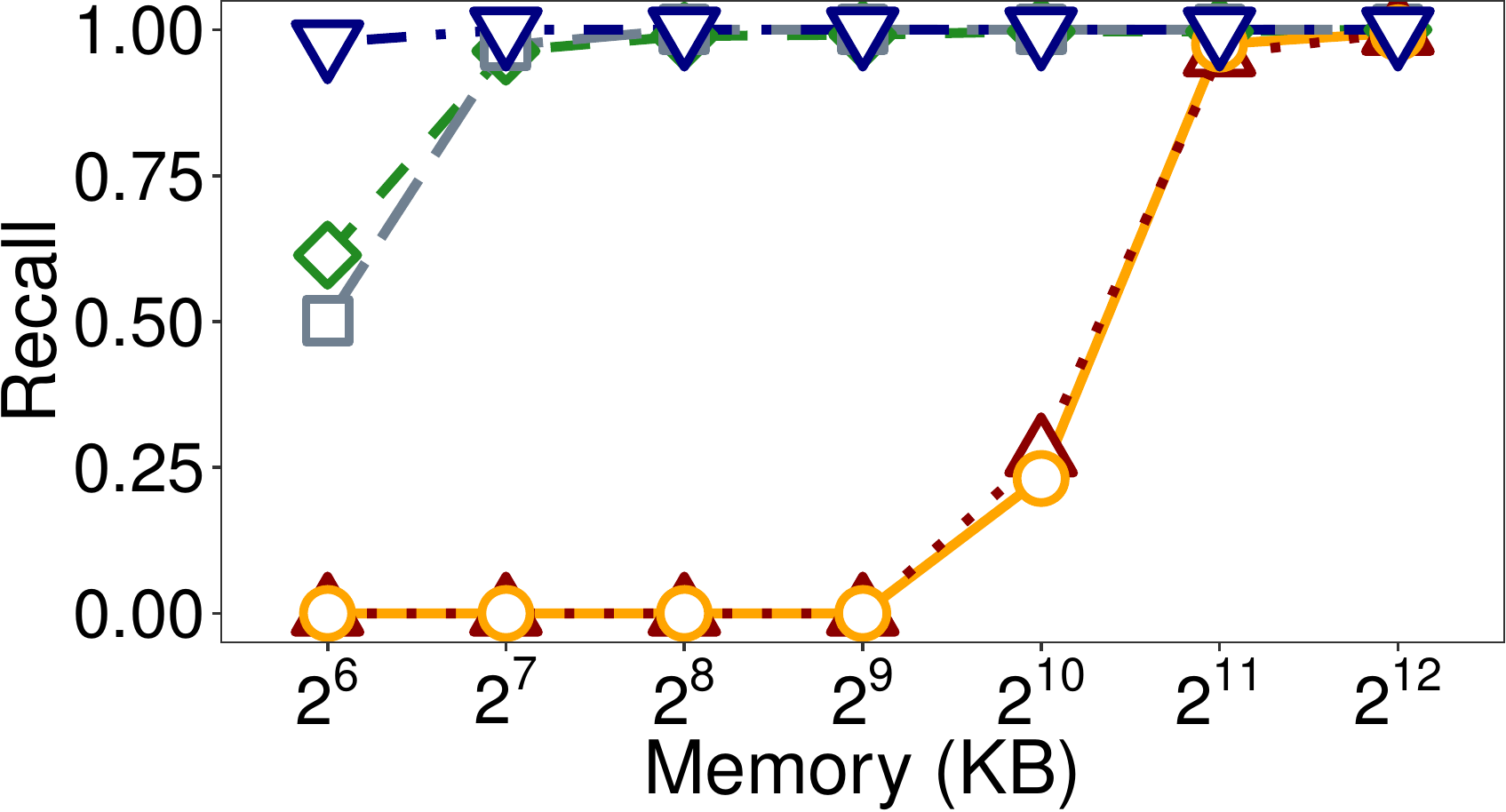} 
\vspace{-3pt}\\
{\footnotesize (a) Precision} & 
{\footnotesize (b) Recall} 
\vspace{3pt}\\
\includegraphics[width=1.65in]{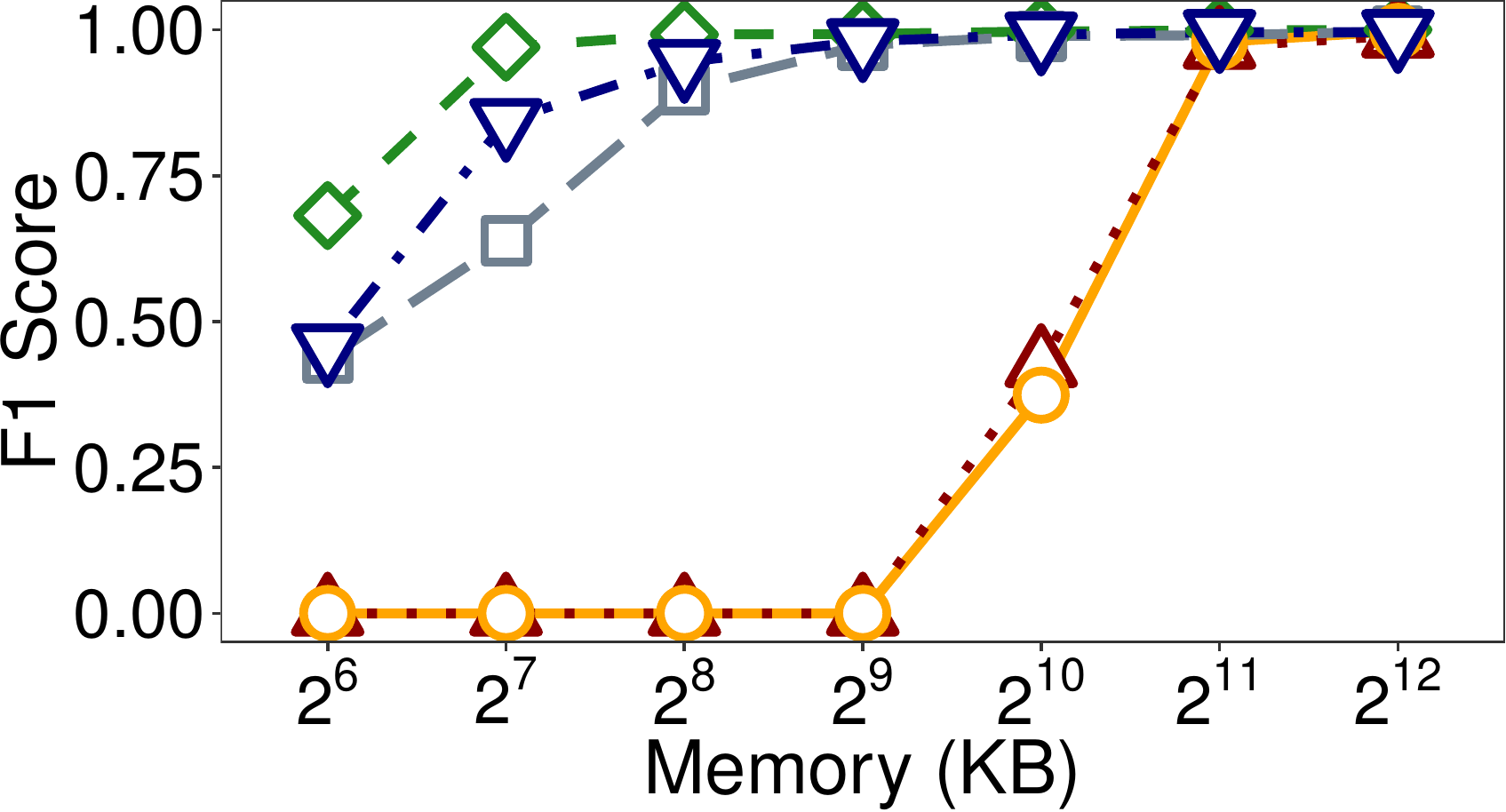} & 
\includegraphics[width=1.65in]{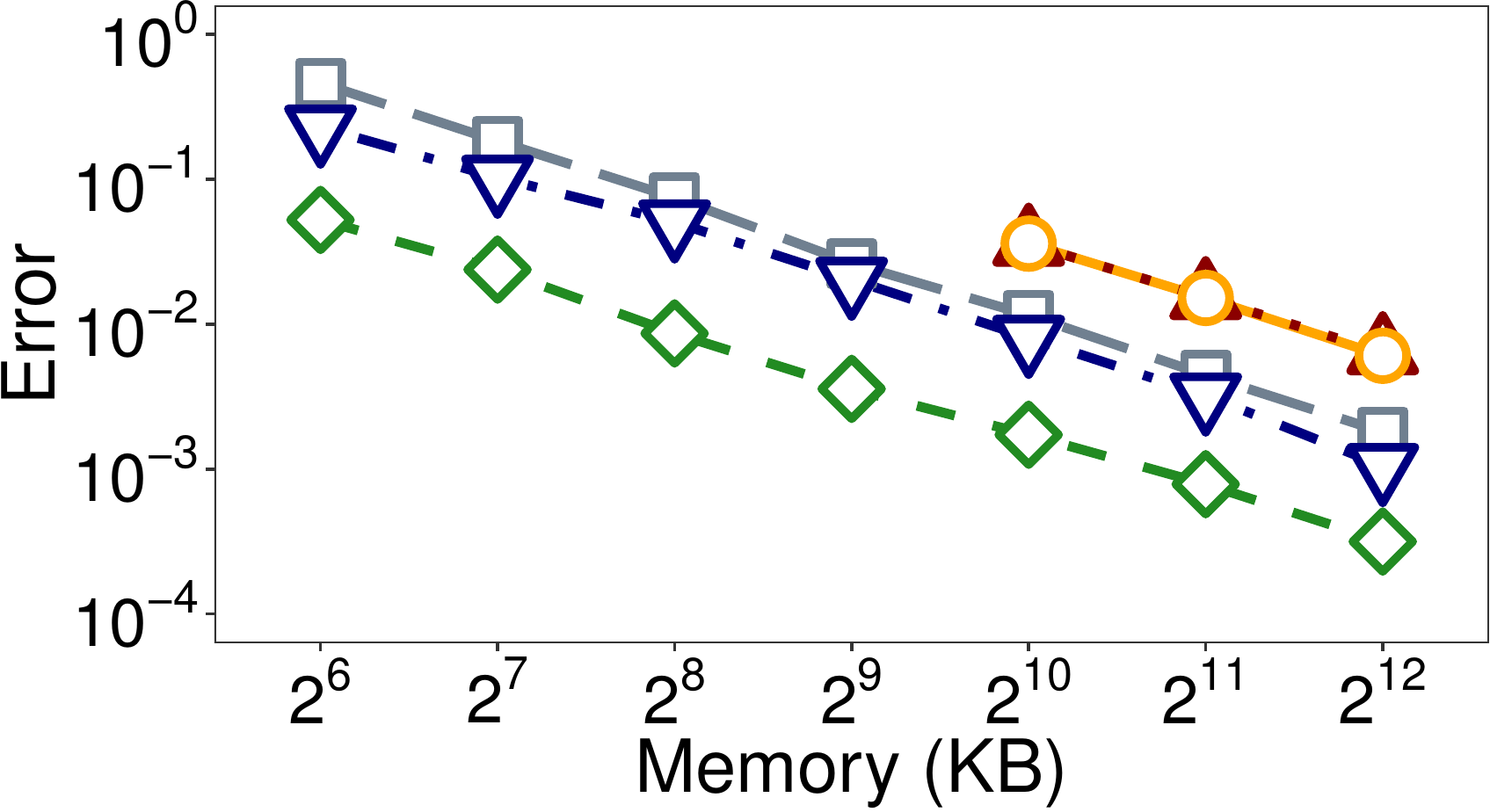} 
\vspace{-3pt}\\
{\footnotesize (c) F1 score} & 
{\footnotesize (d) Relative error} 
\end{tabular}
\vspace{-6pt}
\caption{Experiment 2 (Accuracy for heavy changer detection). }
\label{fig:exp_hc}
\vspace{-8pt}
\end{figure}

\para{Experiment 2 (Accuracy for heavy changer detection).} 
Figure~\ref{fig:exp_hc} compares the accuracy of \sysname with that of 
other sketches in heavy changer detection.  Both DEL and FAST again have
almost zero precision and recall when the memory size is 512\,KB or less.  We
see that CMH has the highest F1 score and smallest relative error among
all sketches, yet its recall is below one for almost all memory sizes. On
the other hand, \sysname maintains a recall of one except when the memory size
is 64\,KB, but its precision is low when the memory size is 256\,KB or less.
The reason is that \sysname uses the estimated maximum change of a flow for
heavy changer detection, thereby having fewer false negatives but more false
positives; we view this as a design trade-off.  \sysname achieves both higher
precision and recall than LD when the memory size is 128\,KB or more. 

\begin{figure}[!t]
\centering
\begin{tabular}{c@{\ }c}
\multicolumn{2}{c}{\includegraphics[width=1.8in]{fig/legend.pdf}} \\
\hspace{-0.1in}
\includegraphics[width=1.65in]{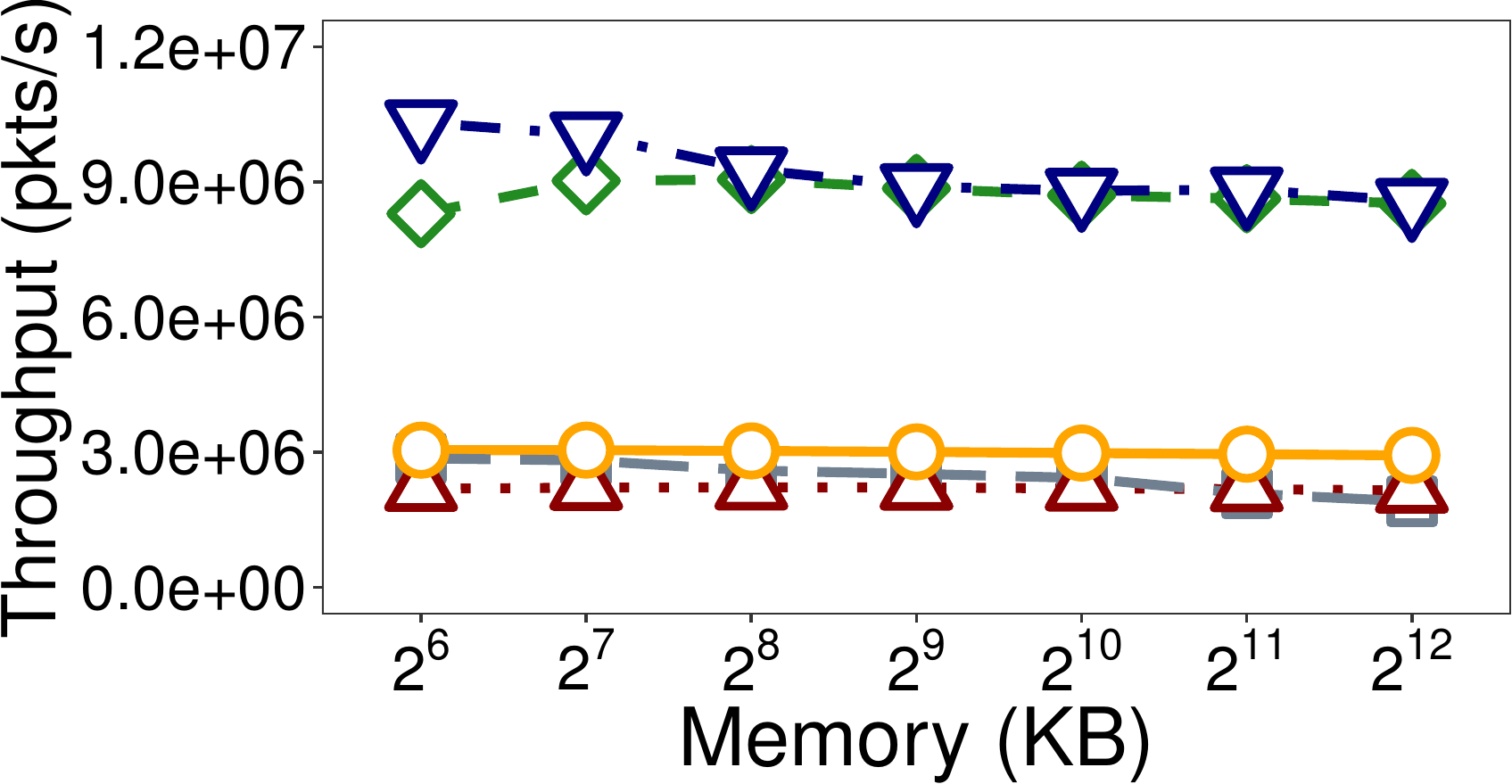} & 
\includegraphics[width=1.65in]{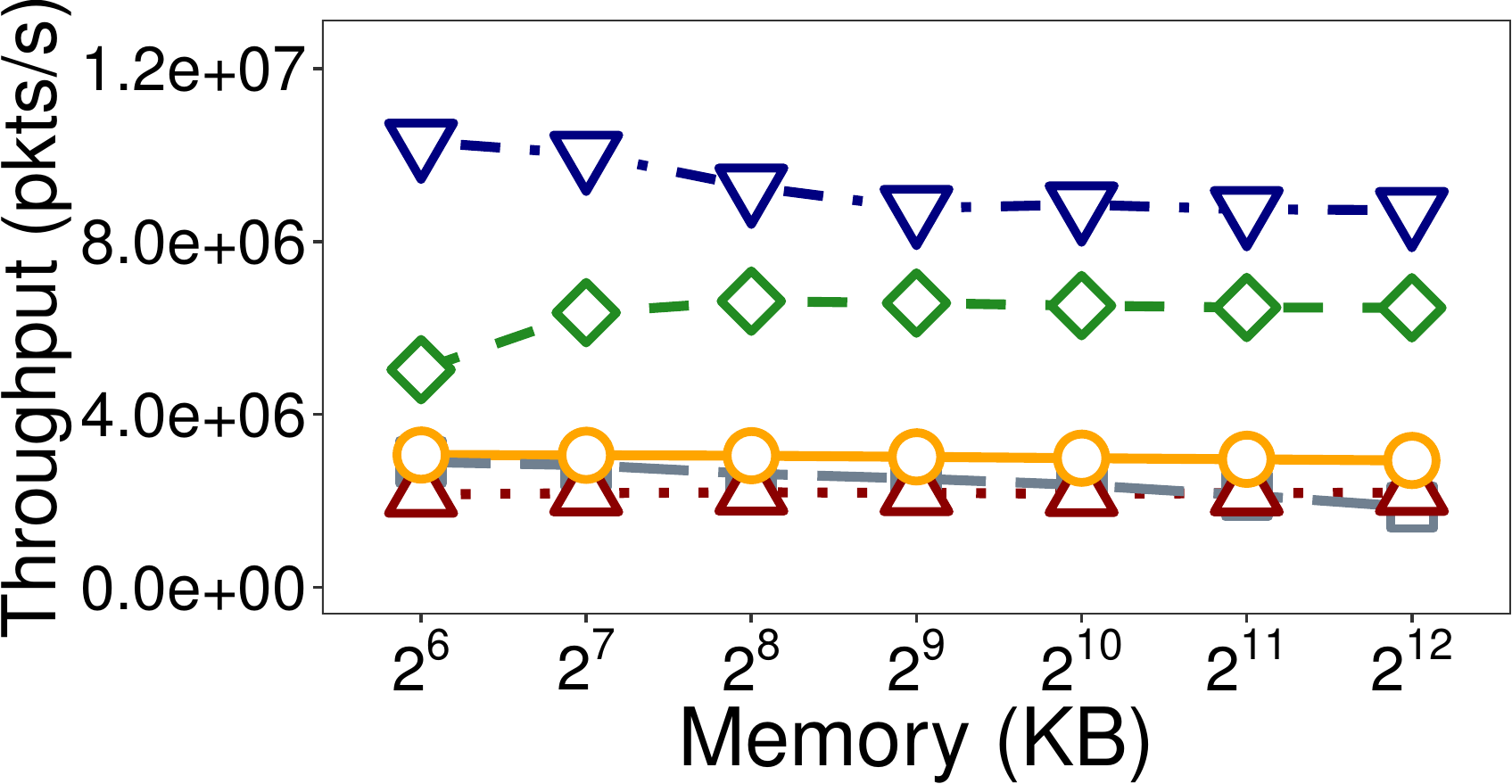} 
\vspace{-3pt}\\
{\footnotesize (a) Heavy hitter detection} &
{\footnotesize (b) Heavy changer detection}
\vspace{3pt}\\
\hspace{-0.1in}
\includegraphics[width=1.65in]{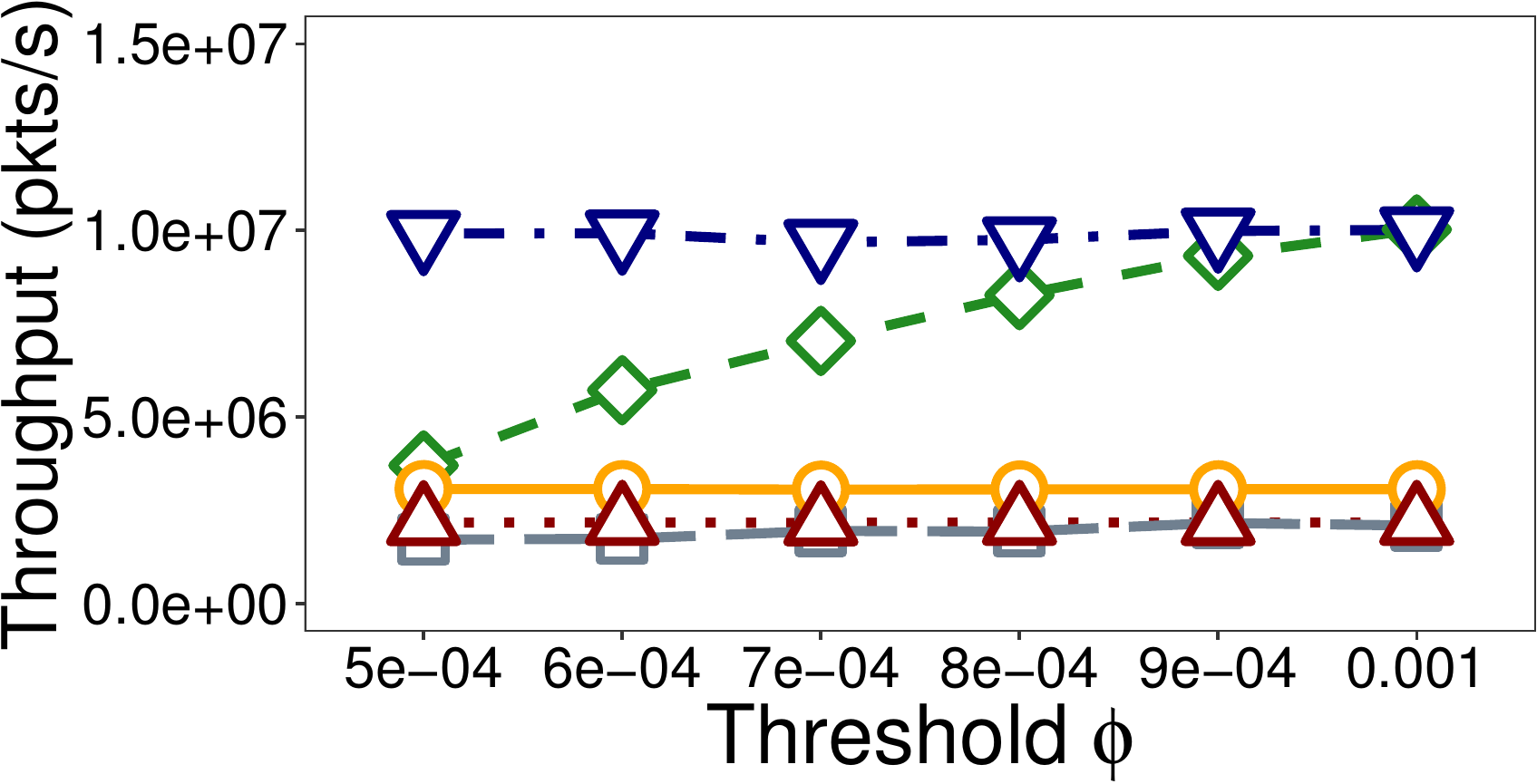} & 
\includegraphics[width=1.65in]{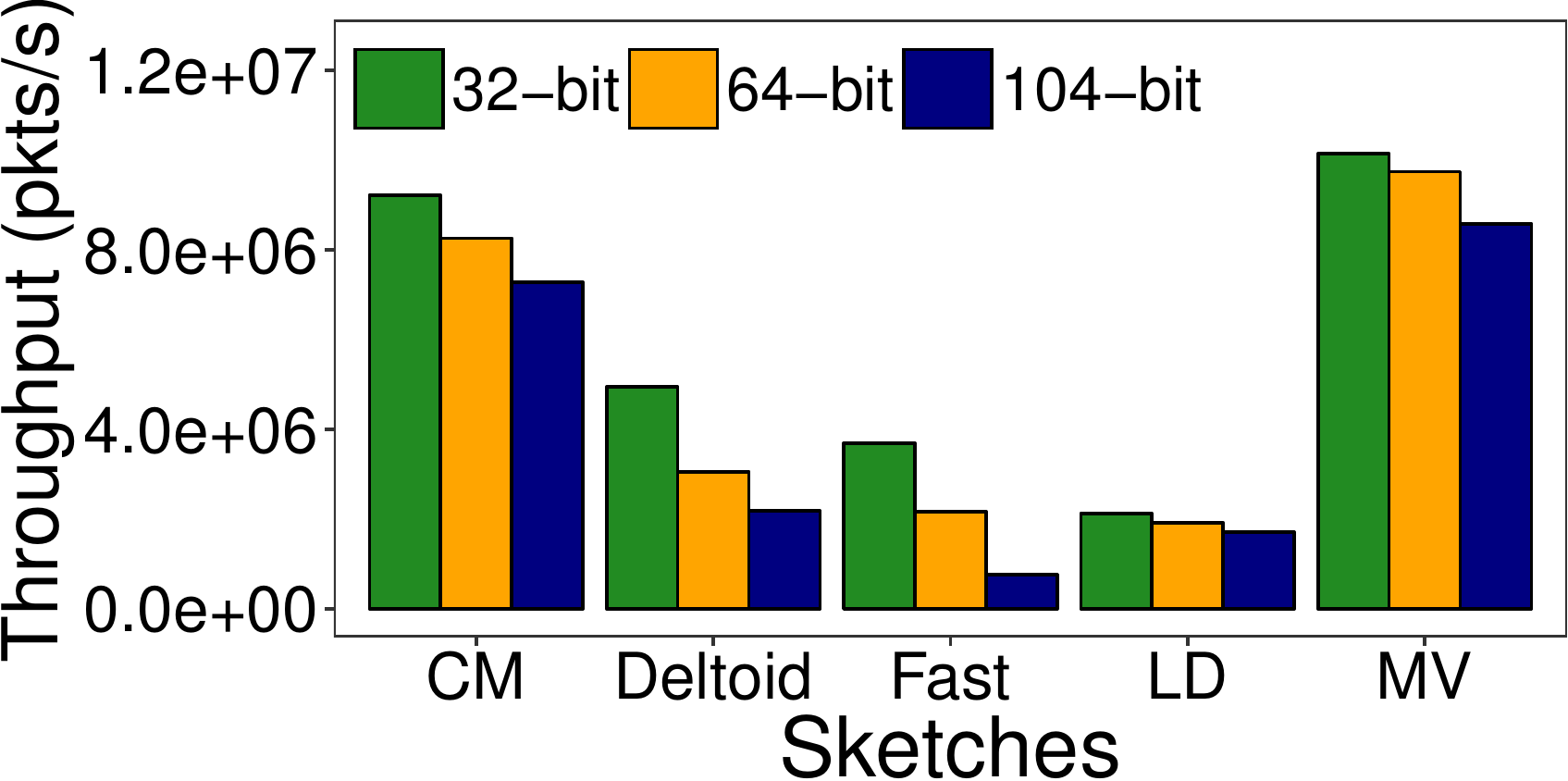}  
\vspace{-3pt}\\
{\footnotesize (c) Impact of threshold $\phi$} &
{\footnotesize (d) Impact of key length} 
\end{tabular}
\vspace{-6pt}
\caption{Experiment 3 (Update throughput).}
\label{fig:exp_hhspeed}
\vspace{-8pt}
\end{figure}

\para{Experiment 3 (Update throughput).}  We now measure the update
throughput of all sketches in different settings.  We present averaged results
over 10 runs.  We omit the error bars in our plots as the variances across
runs are negligible. 

Figure~\ref{fig:exp_hhspeed}(a) shows the update throughput of various
sketches in heavy hitter detection.  \sysname achieves more than $3\times$
throughput over LD, DEL, and FAST, and 24\% higher throughput than CMH when
the memory size is 64\,KB.  Note that \sysname (and other sketches as well)
sees a throughput drop as the memory size increases, since it cannot
be entirely put in cache and the memory access latency increases. The
throughput of CMH is much lower than \sysname, especially when the
memory size is 128\,KB or less, as it sees many false positives and incurs
memory access overhead in its heap. 

Figure~\ref{fig:exp_hhspeed}(b) shows the update throughput of various
sketches in heavy changer detection.  \sysname has the highest throughput,
which is 1.34--2.05$\times$ and 2.98--3.38$\times$ over CMH and other
sketches, respectively.  Note that CMH has lower throughput than in
Figure~\ref{fig:exp_hhspeed}(a) although we keep the same number (i.e., 80) of
heavy flows in both cases.  The reason is that compared to heavy hitter
detection, CMH needs to keep more candidates in the heap to guarantee that all
heavy changers can be found, thereby incurring higher memory access overhead. 

Figure~\ref{fig:exp_hhspeed}(c) shows the impact of the fractional threshold
$\phi$ on the update throughput. Here, we focus on heavy hitter detection and 
fix the memory size as 64\,KB.  \sysname maintains high and stable throughput
(above 9.8\,M pkts/s) regardless of the threshold value.  CMH has
slower throughput for smaller $\phi$ (i.e., more heavy hitters to be
detected).  For example, when $\phi=$~0.0005, the throughput of CMH is
3.7M~pkts/s only.  The reason is that the overhead of maintaining the
heap increases with the number of heavy flows being tracked. 

Figure~\ref{fig:exp_hhspeed}(d) shows the impact of the key length on the
update throughput, by setting the flow keys as source addresses (32 bits),
source/destination address pairs (64 bits), and 5-tuples (104 bits).
We again focus on heavy hitter detection and fix the memory size as 64\,KB.  As
the key length increases from 32 bits to 104 bits, the throughput drops of
\sysname, CMH, and LD are 15-21\%, while those of DEL and FAST are 55-80\%.
The reason is that the numbers of counters in DEL and FAST increase with the
key length, thereby incurring much higher memory access overhead. 

\begin{figure}[!t]
\centering
\begin{tabular}{c@{\ }c}
\multicolumn{2}{c}{\includegraphics[width=1.8in]{fig/legend.pdf}} \\
\includegraphics[width=1.65in]{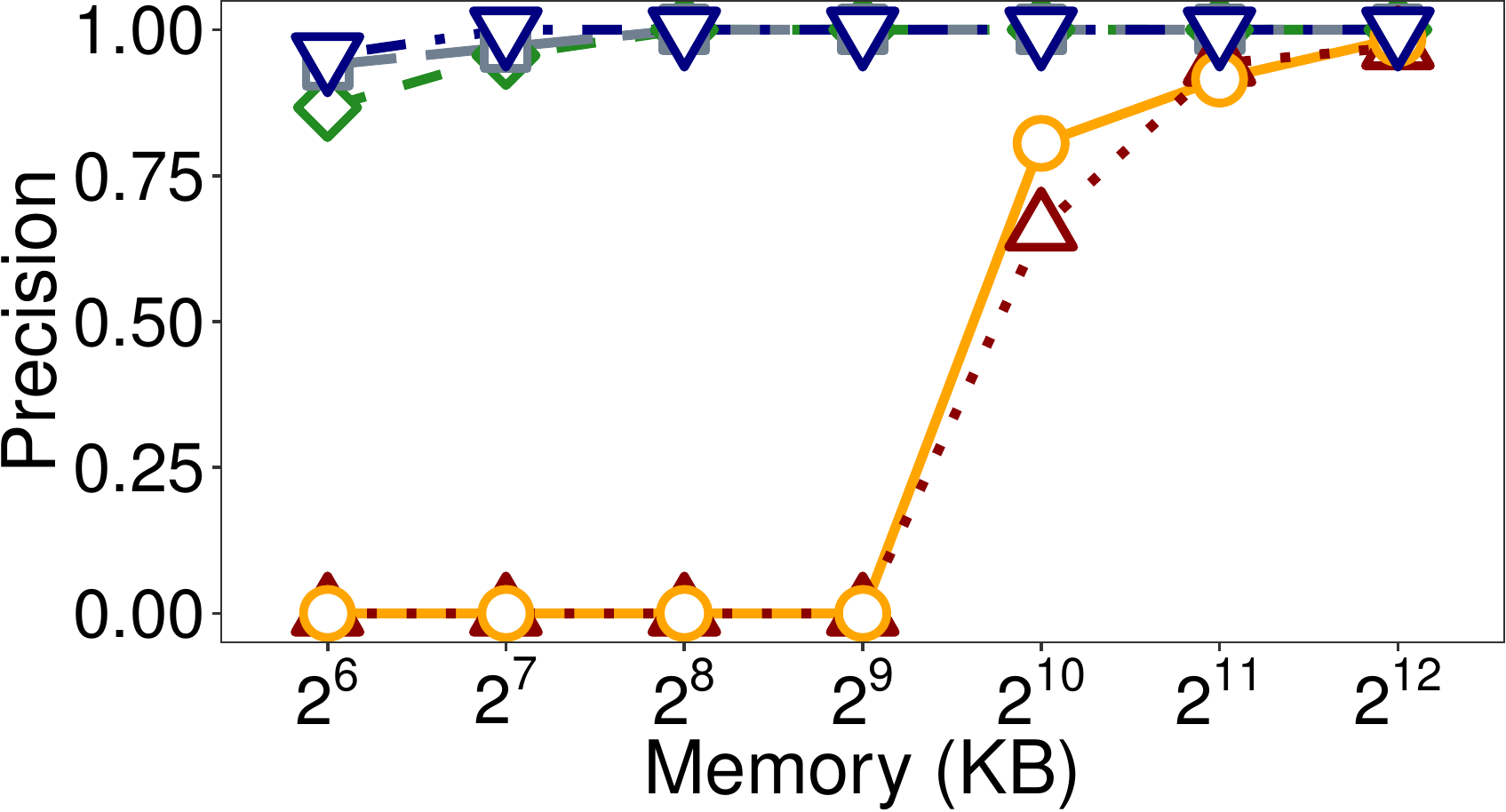} &
\includegraphics[width=1.65in]{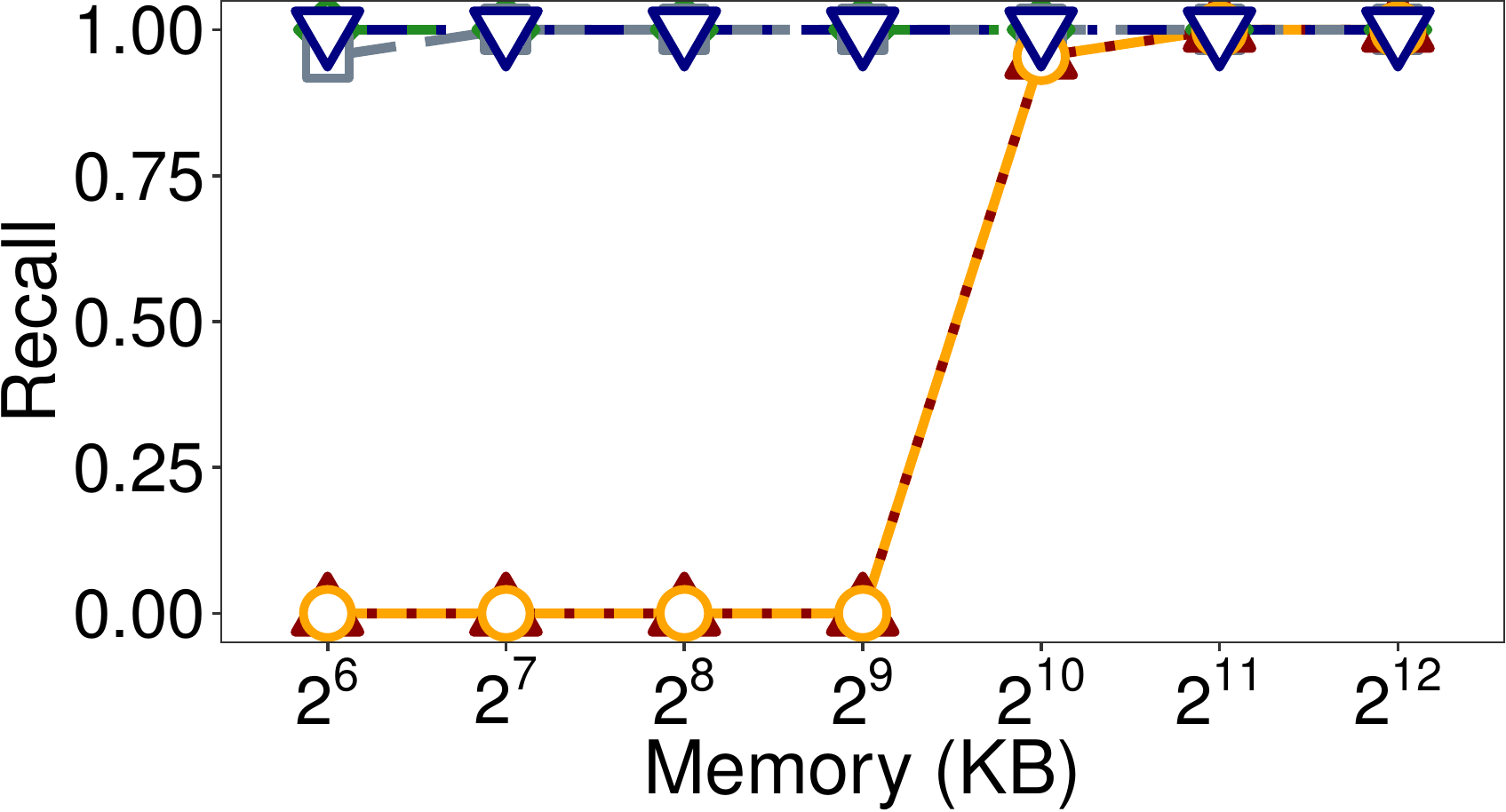} 
\vspace{-3pt}\\
{\footnotesize (a) Heavy hitter precision} & 
{\footnotesize (b) Heavy hitter recall} 
\vspace{3pt}\\
\includegraphics[width=1.65in]{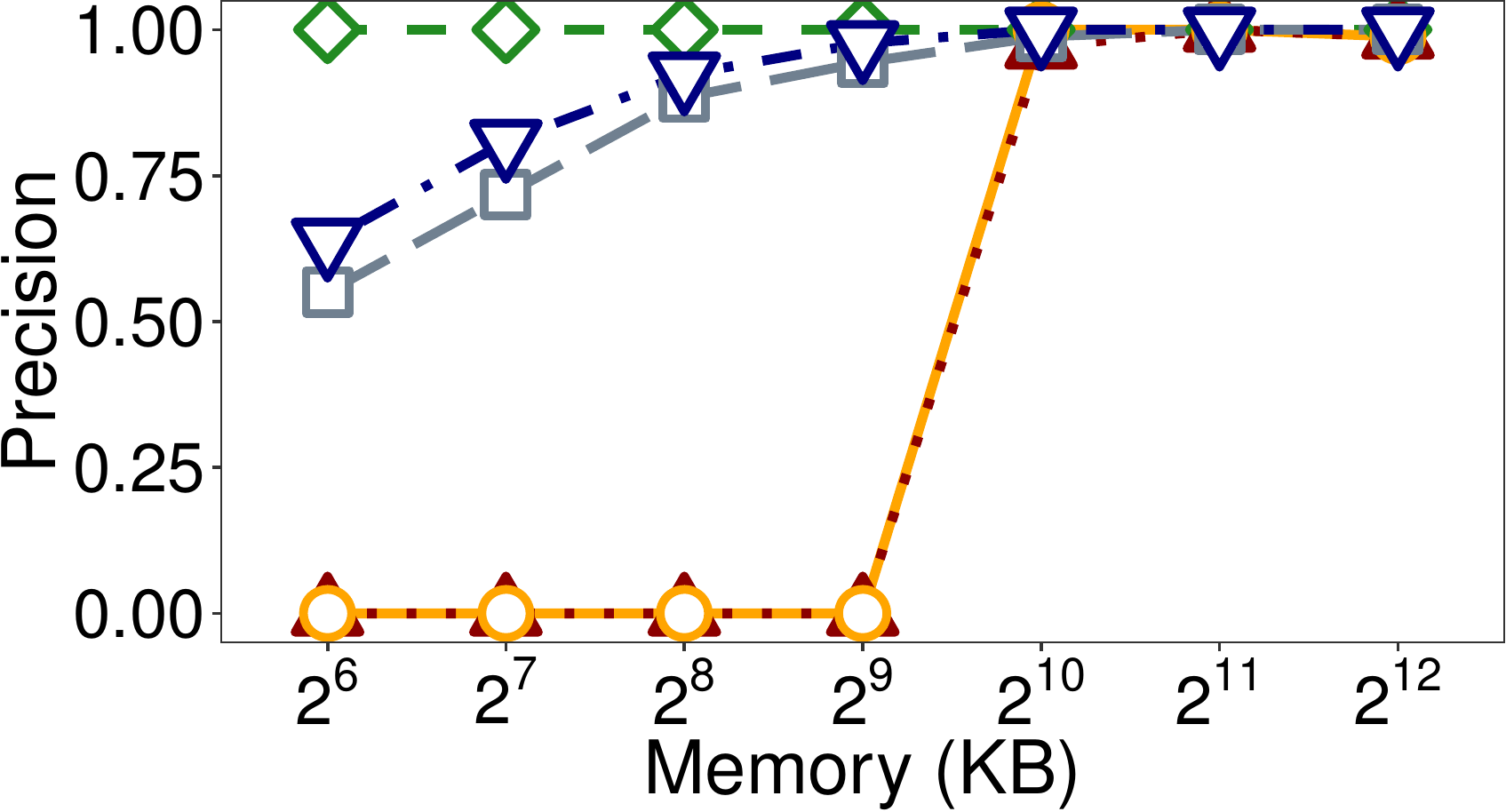} & 
\includegraphics[width=1.65in]{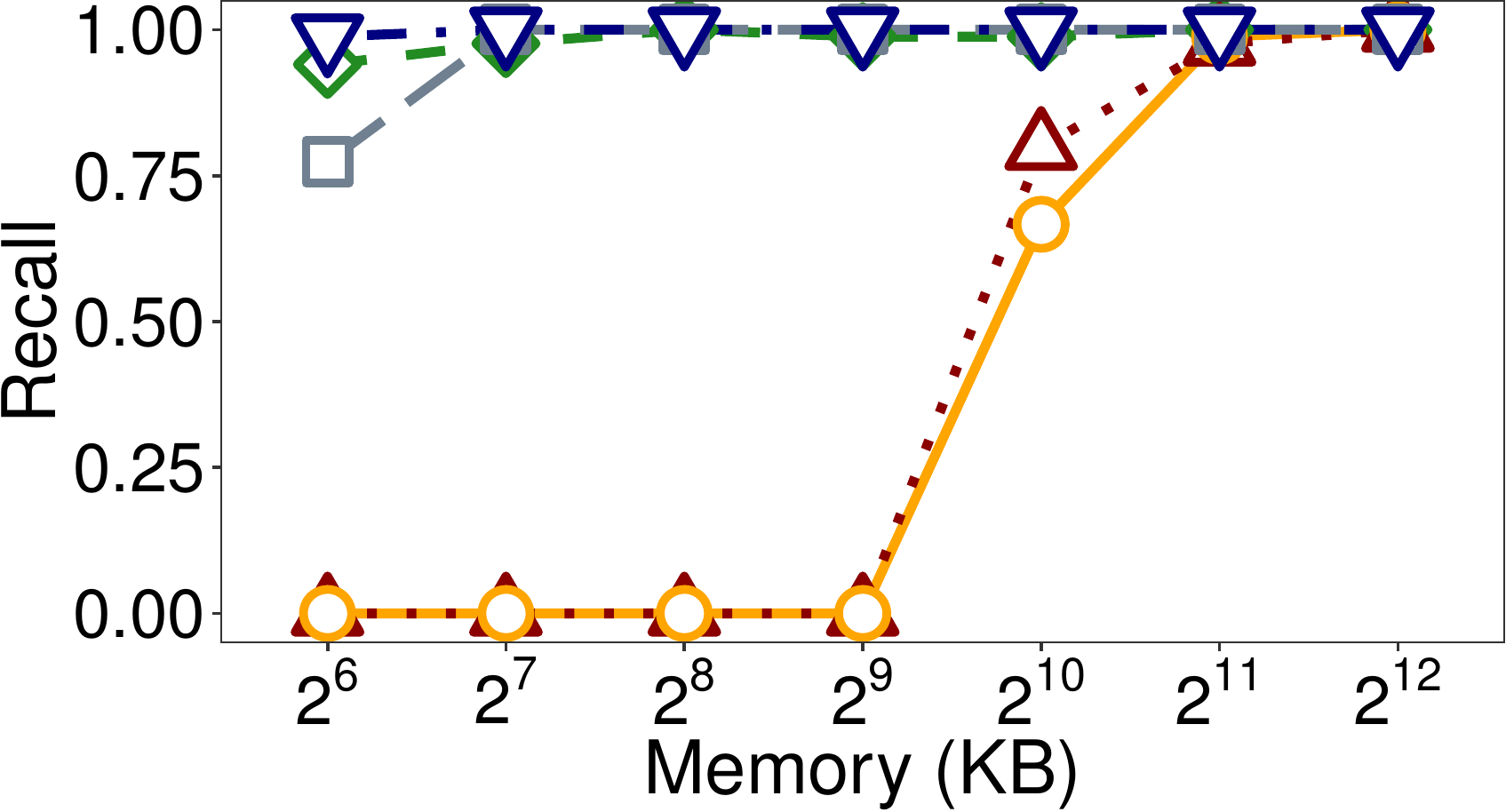} 
\vspace{-3pt}\\
{\footnotesize (c) Heavy changer precision} & 
{\footnotesize (d) Heavy changer recall}
\end{tabular}
\vspace{-6pt}
\caption{Experiment 4 (Accuracy for scalable detection).}
\label{fig:exp_dis}
\end{figure}

\para{Experiment 4 (Accuracy for scalable detection).}
Figure~\ref{fig:exp_dis} shows the precision and recall for scalable heavy
flow detection, in which we set $d=3$ and $q=5$.  We observe similar results
as in Experiments~1 and 2.  Note that we also conduct experiments with
different combinations of different settings of $d$ and $q$, and the results
show similar trends. 

\begin{figure}[!t]
\centering
\begin{tabular}{c@{\ }c}
\multicolumn{2}{c}{\includegraphics[width=1.8in]{fig/legend.pdf}} \\
\includegraphics[width=1.65in]{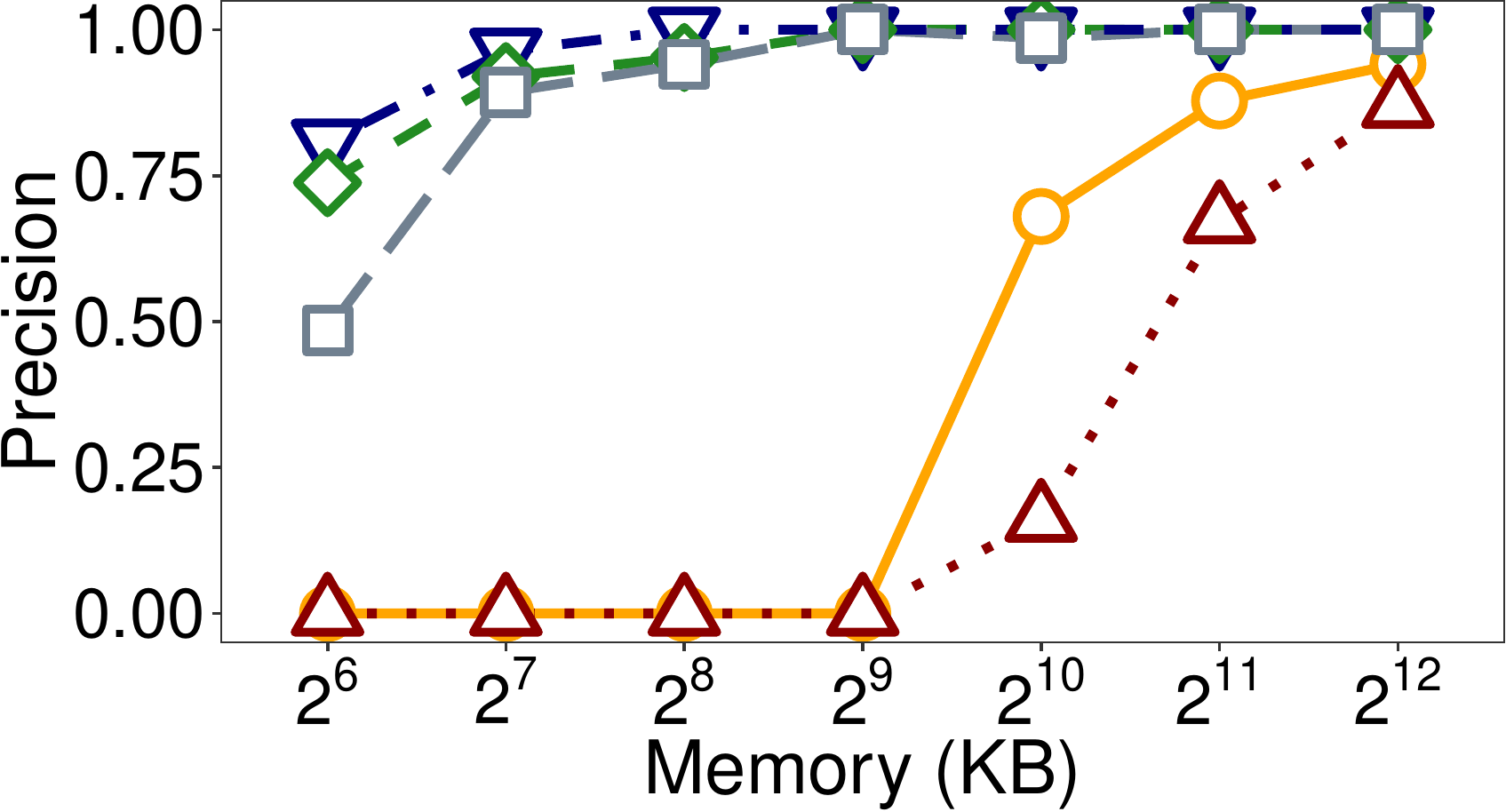} &
\includegraphics[width=1.65in]{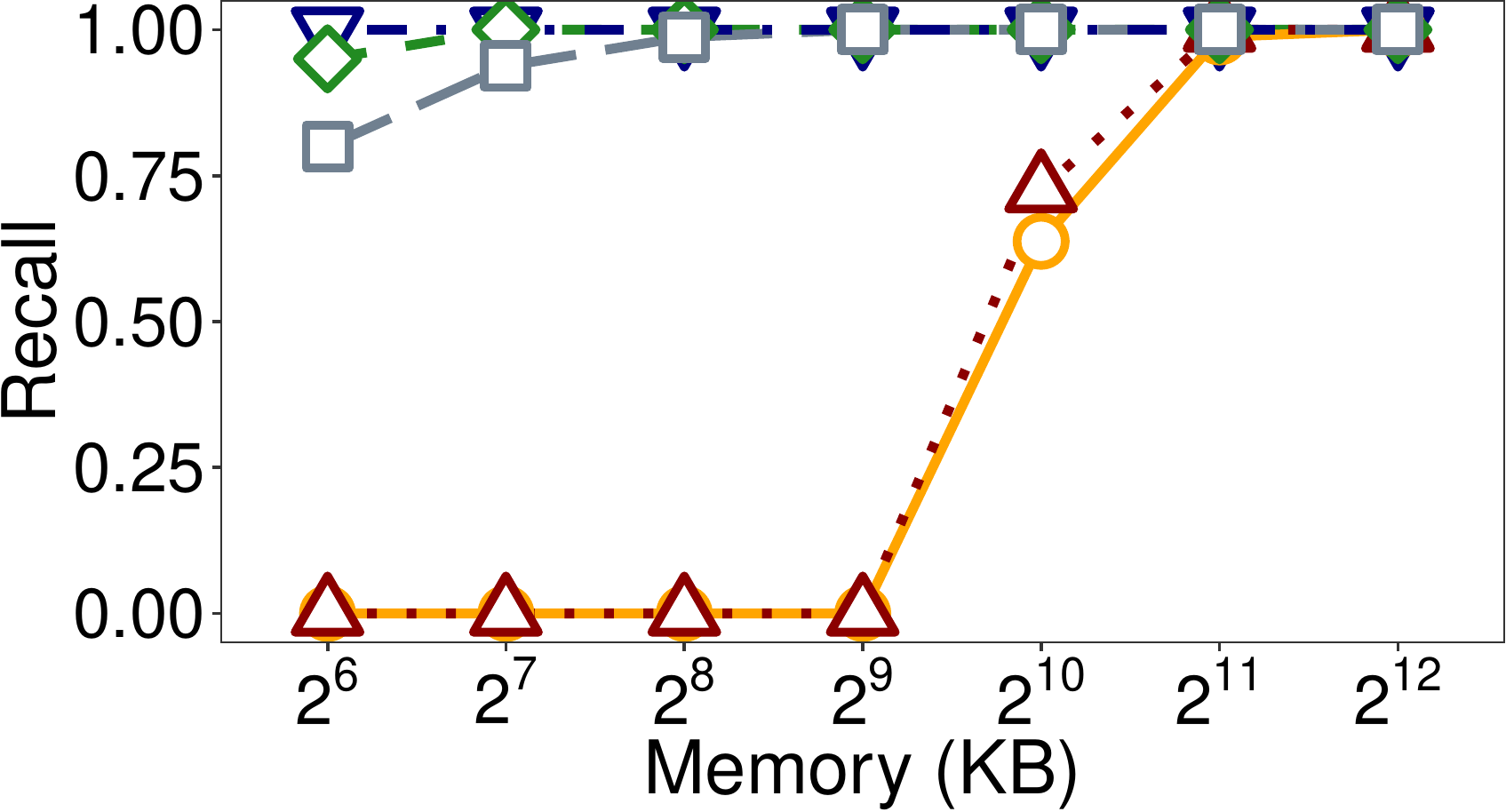} 
\vspace{-3pt}\\
{\footnotesize (a) Heavy hitter precision} & 
{\footnotesize (b) Heavy hitter recall} 
\vspace{3pt}\\
\includegraphics[width=1.65in]{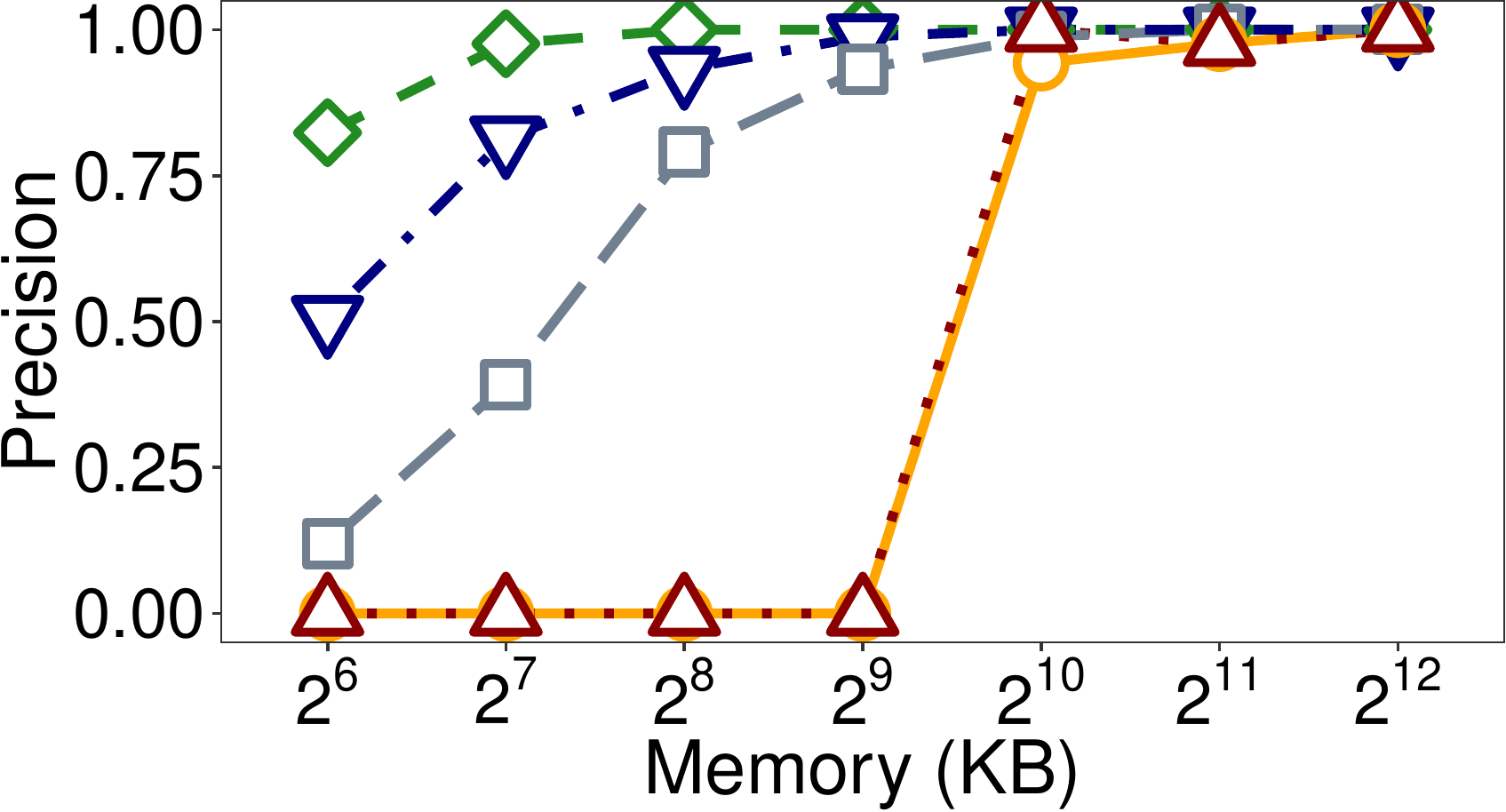} & 
\includegraphics[width=1.65in]{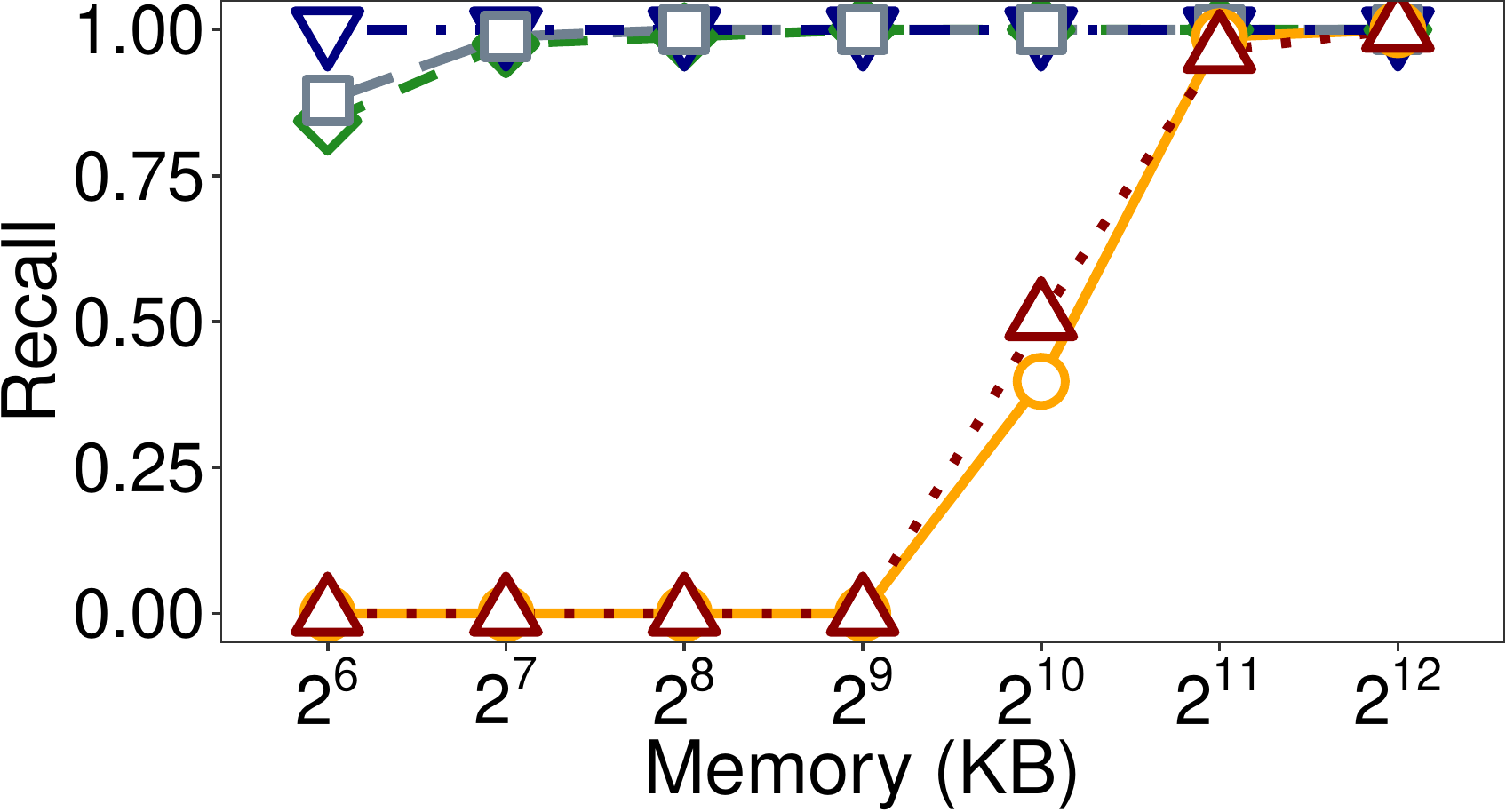} 
\vspace{-3pt}\\
{\footnotesize (c) Heavy changer precision} & 
{\footnotesize (d) Heavy changer recall}
\end{tabular}
\vspace{-6pt}
\caption{Experiment 5 (Accuracy for network-wide detection).}
\label{fig:exp_nw}
\vspace{-9pt}
\end{figure}

\para{Experiment 5 (Accuracy for network-wide detection).} We compare the
accuracy of all sketches in network-wide detection by varying the memory usage
in detectors. We randomly partition the 5-minute trace to six detectors, such
that the traffic of a flow is distributed in any non-empty subset of the six
detectors. Deltoid and Fast Sketch support network-wide detection inherently
due to their linear property, in which the counters of different sketch
instances with the same index can be added together. For Count-Min-Heap and
LD-Sketch, we obtain the estimated sum of each tracked flow key in every
detector and aggregate the estimated sums of each flow key.  We then use the
aggregates for heavy flow detection. 

Figure~\ref{fig:exp_nw} shows the results. Again, we observe similar results as
in Experiments~1 and 2. Note that the recall of \sysname is one in all memory
sizes for both the heavy hitter and heavy changer detection.  

\para{Experiment 6 (Performance optimizations of \sysname).}  We make a
case that \sysname can use SIMD instructions to process multiple data units in
parallel and achieve further performance gains.  Such performance
optimizations enable \sysname to address the need of fast network measurement
in software packet processing \cite{Liu2016,Li2016flowradar,Huang2017}. 

Here, we optimize the performance of the update operation
(Algorithm~\ref{alg:update}).  Specifically, we divide a hash value into $r$
parts (where $r=4$ in our case).  We use SIMD instructions to compute the
bucket indices of all $r$ rows, load the $r$ candidate heavy flow keys to a
register array, and compare the flow key with the $r$ candidate heavy flow
keys in parallel.  Based on the comparison results, we update the buckets 
(Algorithm~\ref{alg:update}).  For 64-bit keys, we use the AVX2 instruction
set to manipulate 256 bits (i.e., four 64-bit keys) in parallel.

Figure~\ref{fig:exp_simd} compares the original and optimized implementations
of \sysname.  The optimized version achieves 75\% higher throughput than the
original version on average.  Its throughput is above 14.88M~pkts/s in
most cases, implying that it can match the 10\,Gb/s line rate 
(Section~\ref{sec:introduction}). 

\begin{figure}[!t]
\centering
\includegraphics[width=2in]{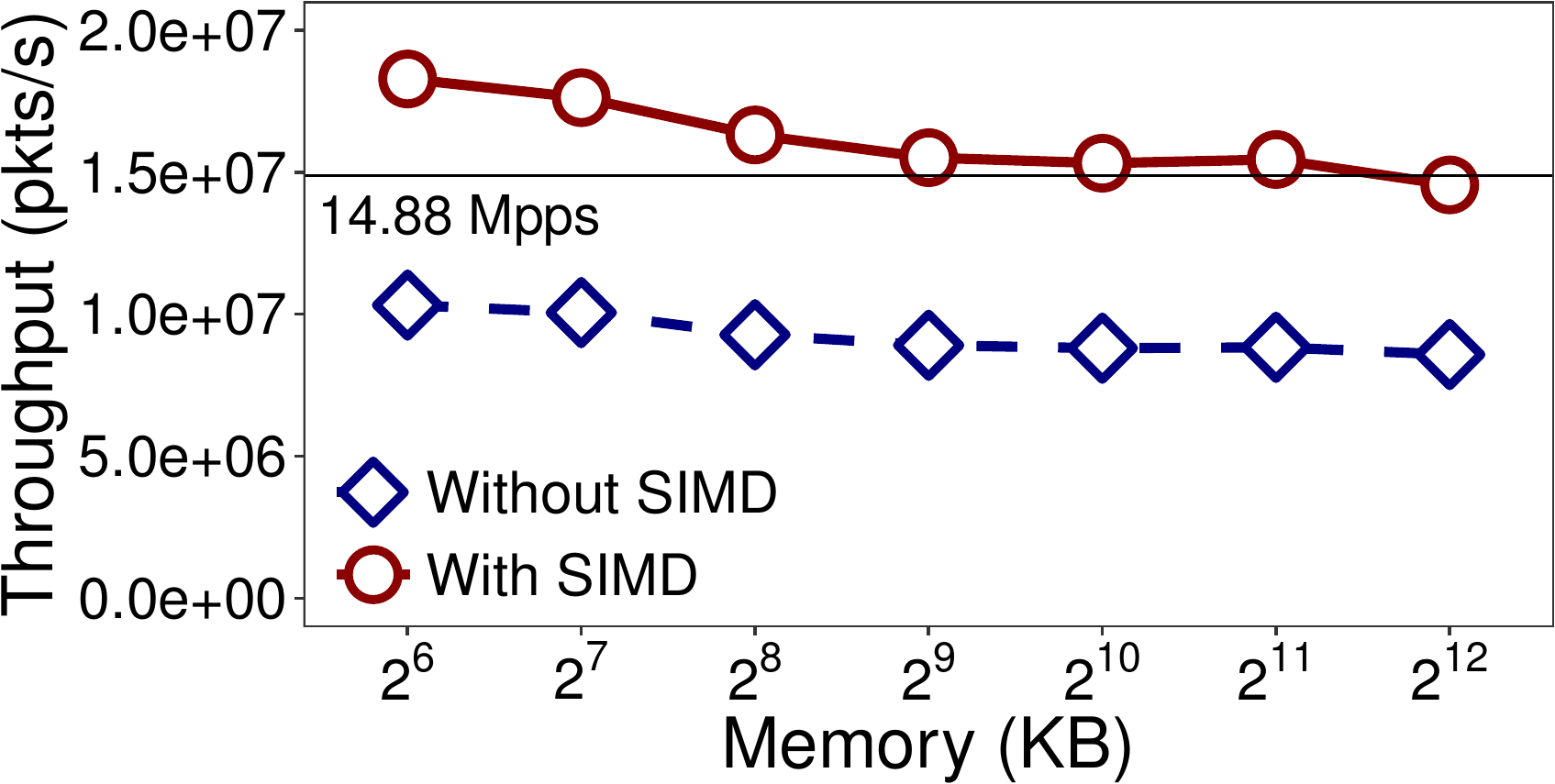} 
\vspace{-9pt}
\caption{Experiment 6 (Performance optimizations of \sysname).}
\label{fig:exp_simd}
\vspace{-9pt}
\end{figure}

\subsection{Evaluation in Hardware}
\label{subsec:evaluation_hw} 

We prototype \sysname in P4 \cite{Bosshart2014} and compile it to the Barefoot
Tofino chipset \cite{tofino}. 

\para{Testbed.} Our testbed consists of two servers and a Barefoot Tofino
switch. Each server has two 12-core 2.2\,GHz CPUs, 32\,GB RAM, and a 40\,Gbps
NIC, while the switch has 32 100\,Gb ports.  The two servers are connected via
the switch, where the traffic from one server is directly forwarded to the
other via the switch. 

\para{Methodology.}  We compare \sysname with PRECISION
\cite{Ben2018precision}, which is designed for heavy hitter
detection in programmable switches. PRECISION tracks heavy hitters by
probabilistically recirculating a small fraction of packets.  We compare
\sysname and PRECISION for both packet counting and size counting. We use the
same CAIDA trace as in Section~\ref{subsec:evaluation_sw}.
%with some probability that is inversely proportional to the minimal sum of
%the flow kept in it. 

We fix \sysname as $r=1$ row and 2,048 buckets. Our software evaluation
(Section~\ref{subsec:evaluation_sw}) shows that with such a configuration,
\sysname achieves an accuracy of above 0.9 for various epoch lengths.  We
configure PRECISION with 2-way associativity to balance between the accuracy
and the number of pipelined stages, and fix its memory usage to be the same
as that of \sysname. By default, we use source IPv4 addresses as flow keys.  

We use the built-in hash function CRC32 \cite{crc32} in our switch as the hash
function in sketches.  We find that both MurmurHash \cite{murmurhash} (used in
our software evaluation) and CRC32 have nearly identical accuracy results in
our evaluation, yet MurmurHash has much higher update throughput than CRC32.
Thus, we use MurmurHash in software evaluation, while using CRC32 here in
hardware evaluation.

\para{Experiment 7 (Switch resource usage).}  We measure the switch
resource usage of \sysname and PRECISION. We prototype PRECISION and each
version of \sysname (i.e., Algorithm~\ref{alg:mvgen} for 5-tuple flow
keys (MVFULL) and Algorithm~\ref{alg:mvsc} for size counting on 32-bit flow
keys (MVSC), and Algorithm~\ref{alg:mvpc} for packet counting on 32-bit flow
keys (MVPC)).  Note that PRECISION considers only 32-bit flow keys.

\begin{table}[!t]
\centering
\caption{Experiment~7: Switch resource usage (percentages in brackets are
fractions of total resource usage).}
\label{tab:tofino}
\renewcommand{\arraystretch}{1.05}
\vspace{-1em}
\scriptsize
\begin{tabular}{|c|c|c|c|c|c|}
\hline
    & MVFULL & MVSC & MVPC & PRECISION\\
\hline
\hline
    SRAM (KiB) &  144 (0.94\%) & 80 (0.52\%) &  80 (0.52\%) & 192 (1.25\%) \\ 
\hline 
No. stages &  4 (33.33\%) & 2 (16.67\%) & 1 (1.33\%) & 8 (66.67\%) \\ 
\hline 
No. actions & 10   & 5 &  3 & 15 \\
\hline 
No. ALUs & 3 (6.25\%) & 2 (4.17\%) & 2 (4.17\%) & 6 (12.5\%) \\    
\hline 
PHV (bytes) & 133 (17.32\%) & 108 (14.06\%) & 102 (13.28\%) & 137 (17.84\%) \\ 
\hline
\end{tabular}
\end{table}

Table~\ref{tab:tofino} shows the switch resource usage in terms of SRAM usage
(which measures memory usage), the numbers of physical stages, actions, and
stateful ALUs (all of which measure computational resources), as well as the
PHV size (which measures the message size across stages).  All the \sysname
implementations achieve less resource usage than PRECISION. 

\begin{figure}[!t]
\centering
\includegraphics[width=2in]{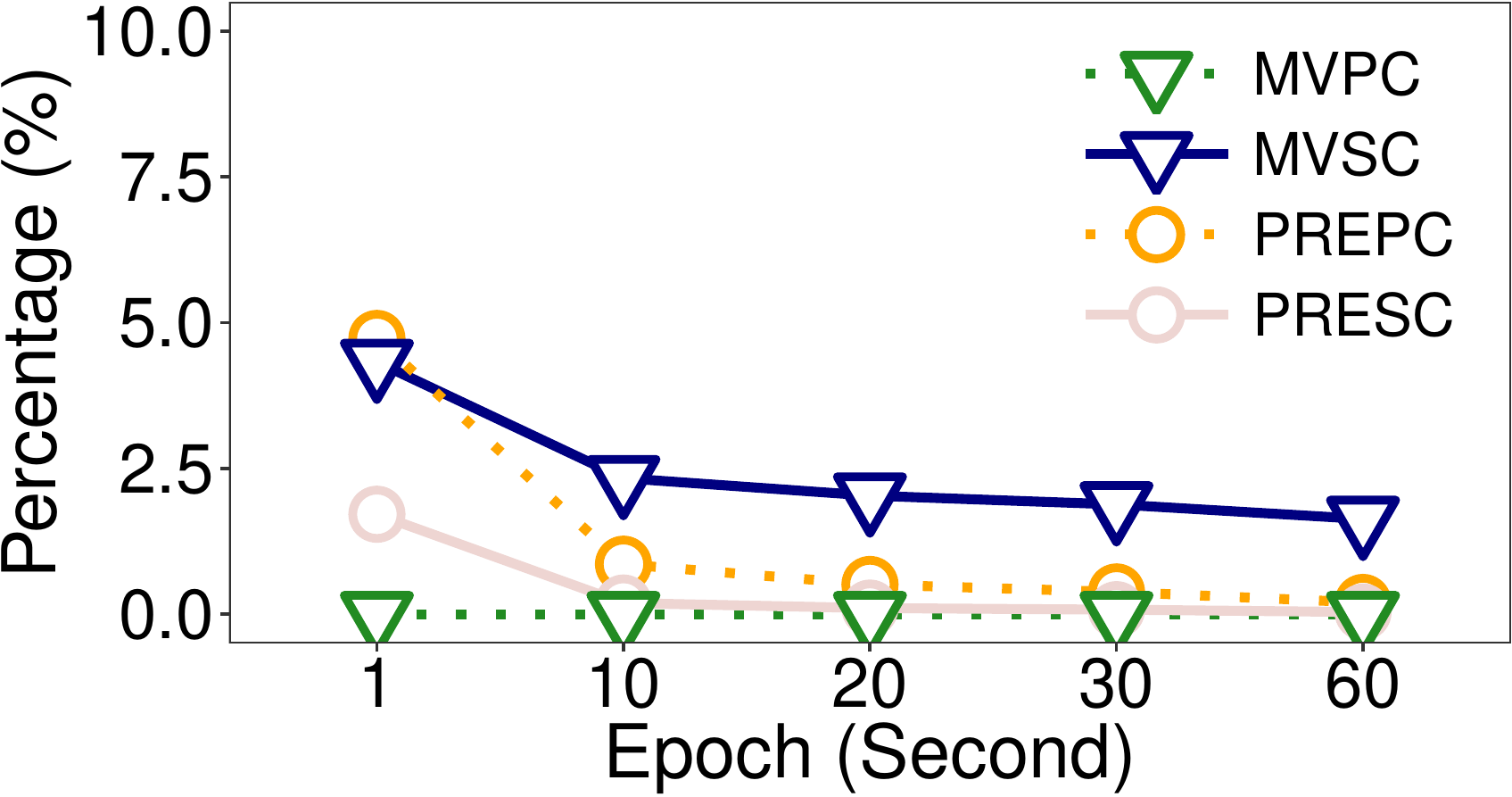} 
\vspace{-6pt}
\caption{Experiment 8 (Switch throughput analysis).  Here, we show the
percentage of packets being recirculated to the second pass of the switch
pipeline.}
\label{fig:exp_tofino_ratio}
\vspace{-12pt}
\end{figure}

\para{Experiment 8 (Switch throughput analysis).} We study the throughput
of \sysname and PRECISION. We split the trace into one-second epochs and
randomly select 50 epochs. We replay each epoch in one server with Pktgen-DPDK
\cite{pktgen} and compute the average receiving rate in another server with the
DPDK programs. 
We consider two implementation approaches of recirculating packets within the
switch pipeline: (i) using the {\em recirculate} primitive to recirculate
packets via the dedicated recirculation ports of the switch, and (ii) using
the {\em resubmit} primitive to recirculate packets within the ingress
pipeline via normal ports.  For the first approach, both \sysname and
PRECISION achieve the line rate in all cases.  For the second approach,
\sysname achieves the line rate for packet counting and 95\% of the line rate
in size counting, while PRECISION achieves around 98\% of the line rate for
both packet counting and size counting (not shown in figures).

To further examine the recirculation costs of \sysname and PRECISION in size
counting (denoted by MVSC and PRESC, respectively) and packet counting
(denoted by MVPC and PREPC, respectively), we conduct software simulation that
measures the
percentage of packets that are recirculated to the second pass of the switch
pipeline in an epoch for each approach.  Figure~\ref{fig:exp_tofino_ratio}
shows the results for different epoch lengths; since the variance of the
percentage for each approach is small, we omit the error bars here.  \sysname
has zero percentage for packet counting by design.  Also, it has a higher
percentage than PRECISION in size counting, yet the percentage is below 5\% in
all cases and shows a downward trend as the epoch length increases.  The
percentage of PRECISION in both packet counting and size counting is below
2\% in almost all cases.    

\para{Experiment 9 (Accuracy comparisons between \sysname and
PRECISION).} We study the accuracy of \sysname and
PRECISION for heavy hitter detection in software simulation by varying the
epoch lengths.  Figure~\ref{fig:exp_tofino_accuracy} shows the results.
\sysname achieves higher accuracy than PRECISION in size counting for all
epochs (in F1-score and relative errors), and has a comparable F1-score with
PRECISION in packet counting.  The relative error of \sysname is higher than
PRECISION in packet counting, yet the difference is small as the highest error
of \sysname is less than 1.6\% in our evaluation. PRECISION performs much
better in packet counting than in size counting. The reason is that the
counter value of each flow in size counting is much larger than that in packet
counting, and hence leads to a smaller recirculating probability and causes
PRECISION to miss more packets in size counting. 

\begin{figure}[!t]
\centering
\begin{tabular}{c@{\ }c}
\multicolumn{2}{c}{\includegraphics[width=2.5in]{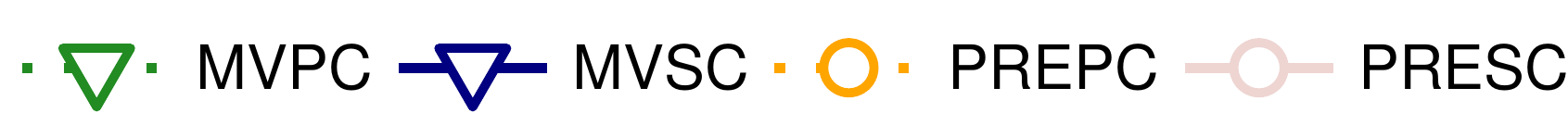}} \\
\includegraphics[width=1.6in]{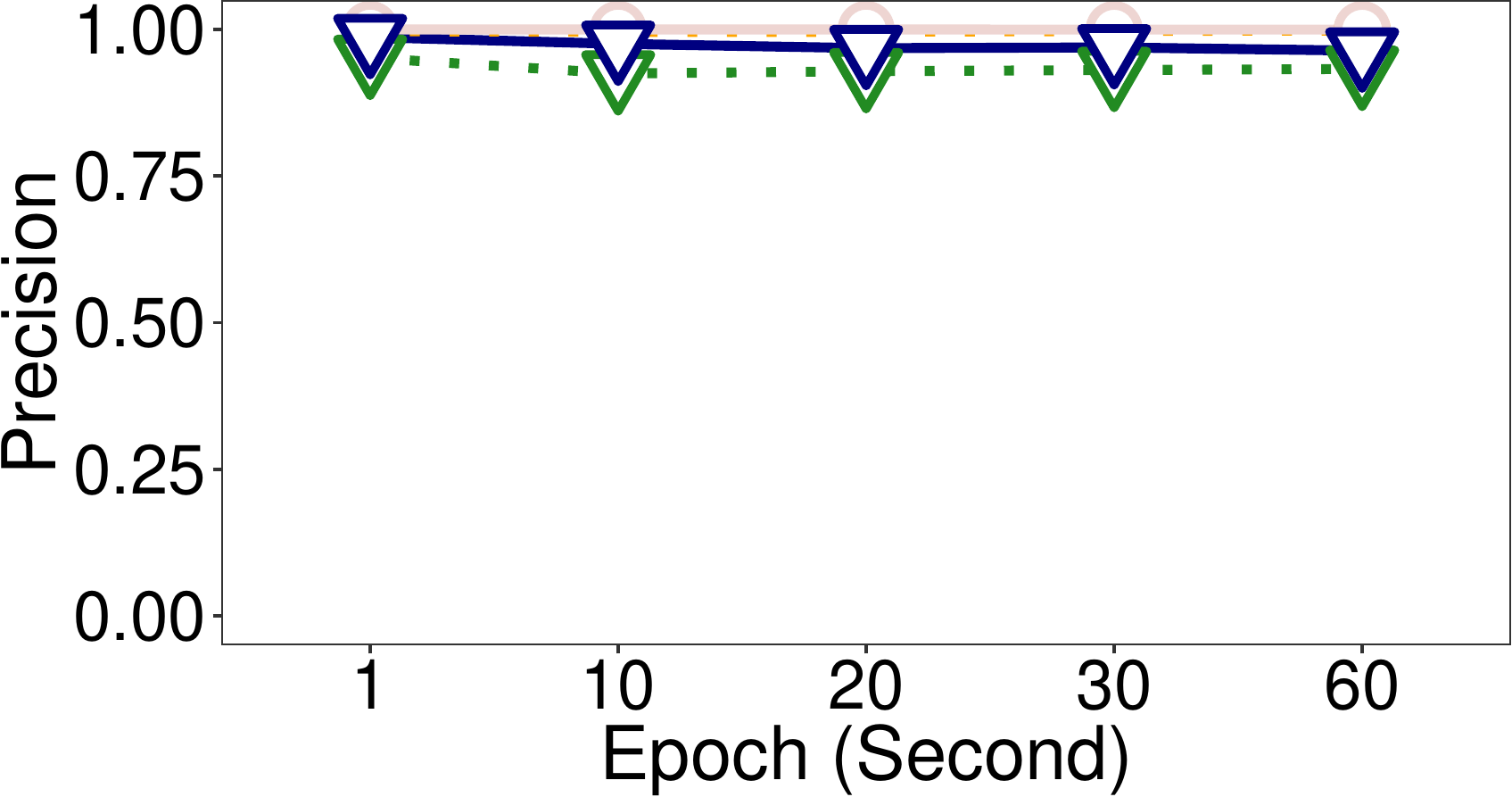} &
\includegraphics[width=1.6in]{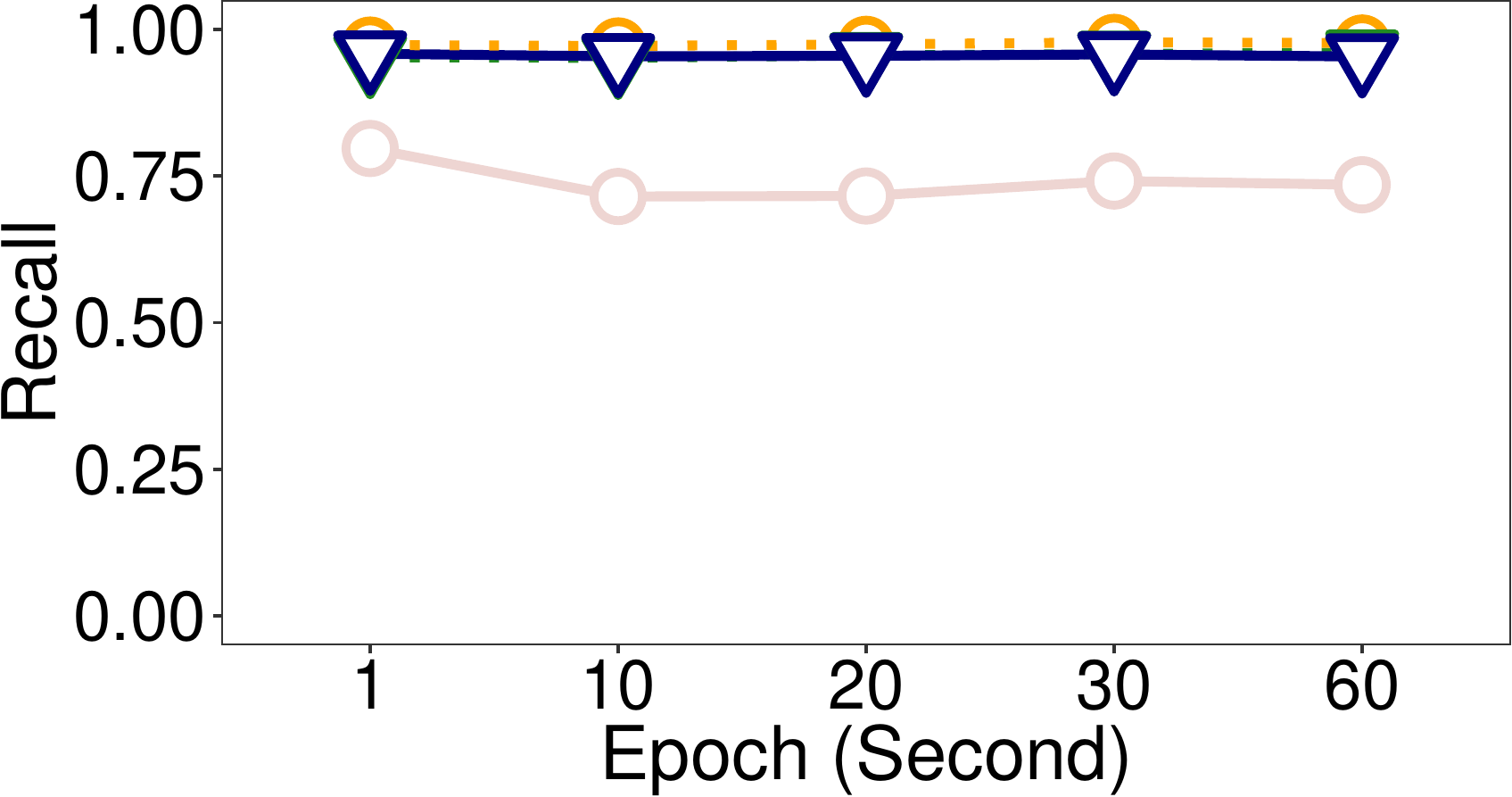} 
\vspace{-3pt}\\
{\footnotesize (a) Precision} & 
{\footnotesize (b) Recall}
\vspace{3pt}\\
\includegraphics[width=1.6in]{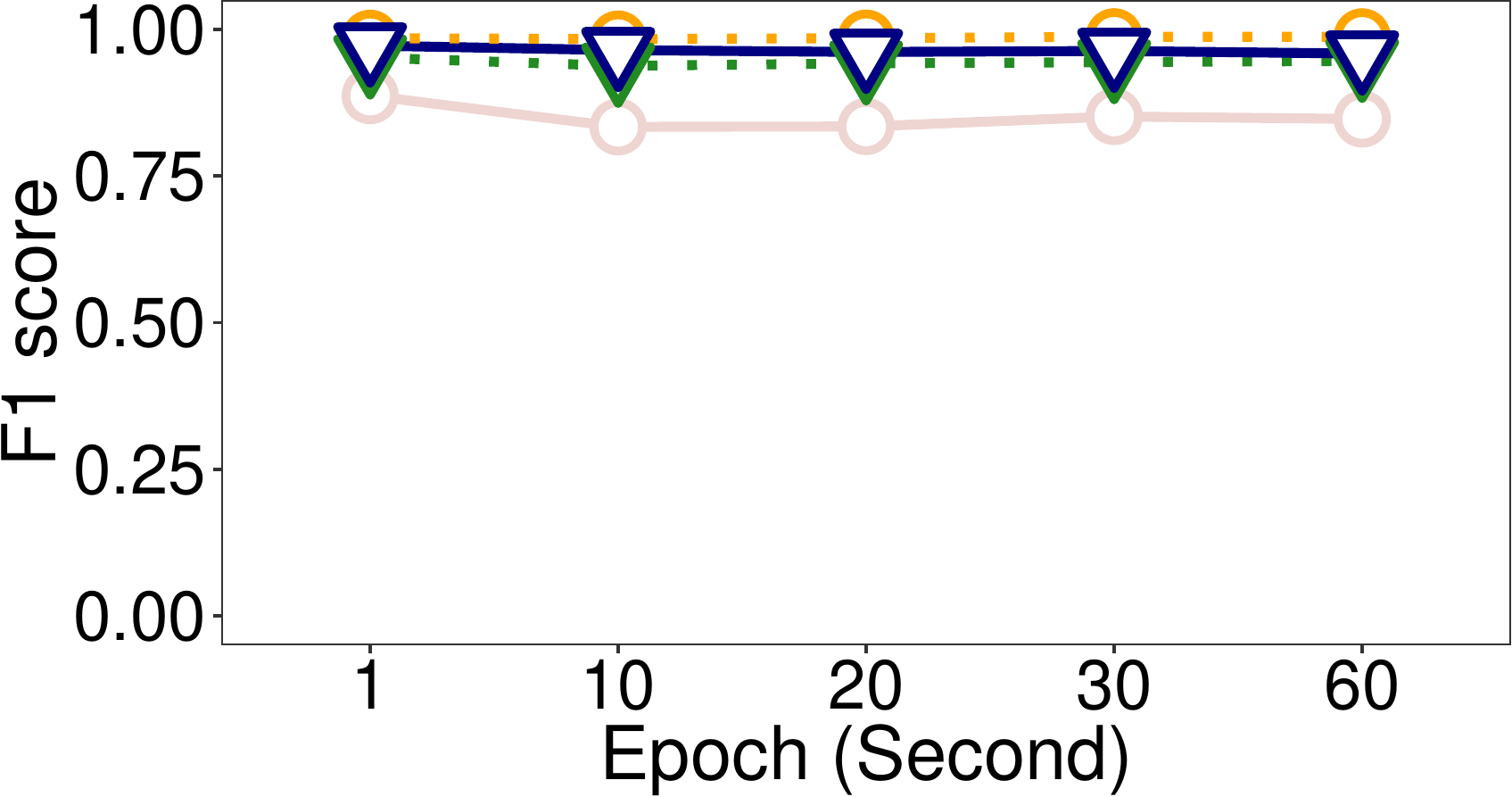} &
\includegraphics[width=1.6in]{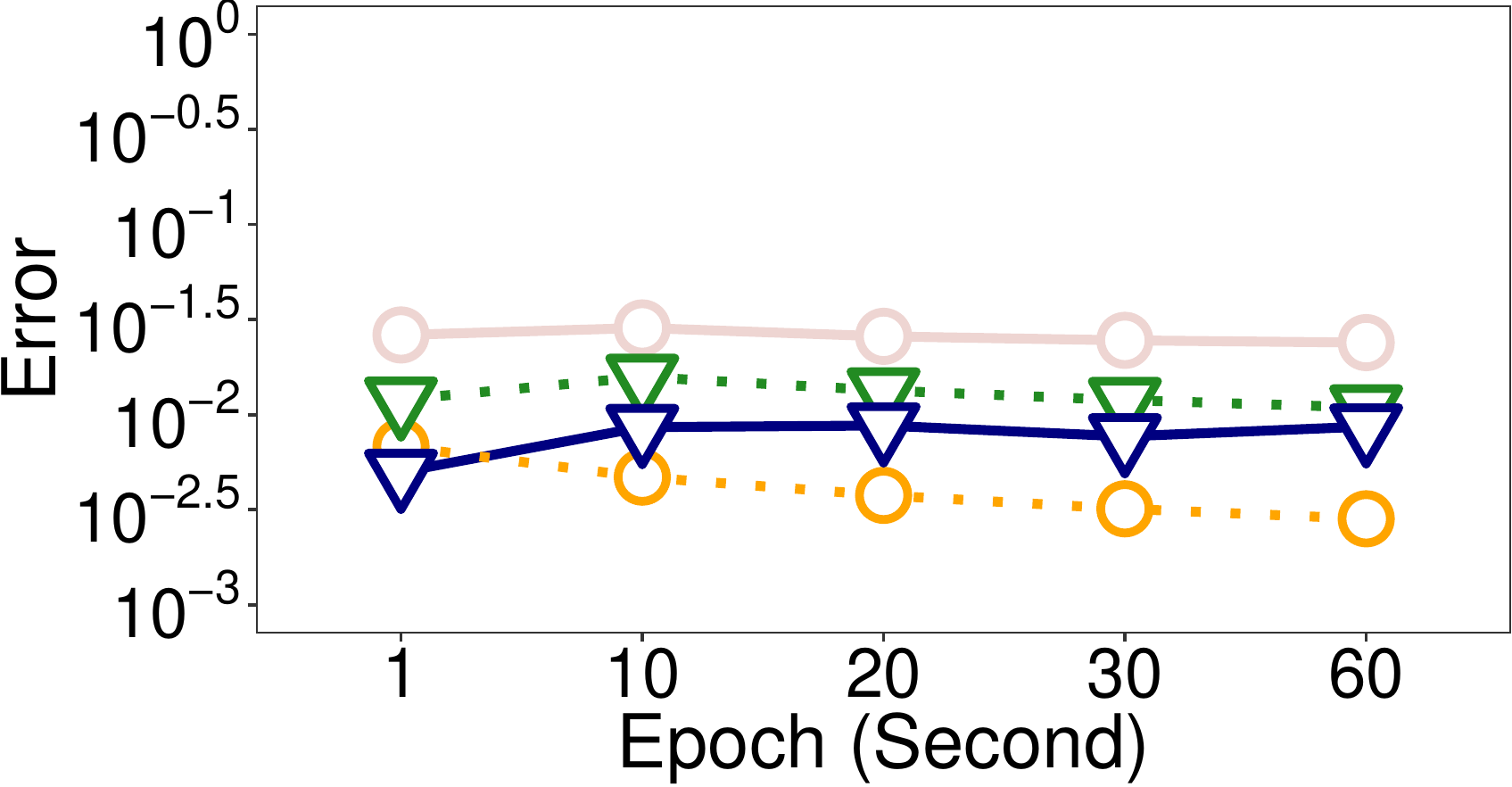} 
\vspace{-3pt}\\
{\footnotesize (c) F1 score} & 
{\footnotesize (d) Relative error} 
\end{tabular}
\vspace{-6pt}
\caption{Experiment 9 (Accuracy comparisons between \sysname and PRECISION). }
\label{fig:exp_tofino_accuracy}
\vspace{-9pt}
\end{figure}

%----------------------------------------------------------------------------
% Related Work
%----------------------------------------------------------------------------
\section{Related Work}
\label{sec:related} 

\para{Invertible sketches.}  In Section~\ref{subsec:sketches}, we review
several invertible sketches for heavy flow detection and their limitations.
Another related work extends the Bloom filter \cite{Bloom1970}
with invertibility \cite{Eppstein2011,Goodrich2015}.  In particular, the
Invertible Bloom Lookup Table (IBLT) \cite{Goodrich2015} tracks three
variables in each bucket: the number of keys, the sum of keys, and the sum of
values for all keys hashed to the bucket.  To recover all hashed keys, it
iteratively recovers from the buckets with only one hashed key and deletes the
hashed key of all its associated buckets (so that some buckets now have one
hashed key remaining).  FlowRadar \cite{Li2016flowradar} builds on IBLT for
heavy flow detection.  However, IBLT is sensitive to hash collisions: if
multiple keys are hashed to the same bucket, it fails to recover the keys in
the bucket.  

A closely related work to ours is AMON \cite{Kallitsis2016}, which applies
MJRTY in heavy hitter detection.  However, AMON and \sysname have different
designs: AMON splits a packet stream into multiple sub-streams and tracks the
candidate heavy flow for each sub-stream using MJRTY, while \sysname maps each
packet to the buckets in different rows in a sketch data structure.  \sysname
addresses the following issues that are not considered by AMON: (i) providing
theoretical guarantees on the trade-offs across memory usage, update/detection
performance, and detection accuracy; and (ii) addressing heavy changer
detection and network-wide detection. 

\para{Sketch-based network-wide measurement.} Recent studies
\cite{Yu2013,Moshref2015,Liu2016,Li2016flowradar,Huang2017,Yang2018,Huang2018}
propose sketch-based network-wide measurement systems for general measurement
tasks, including heavy flow detection.  Such systems leverage a centralized
control plane to analyze measurement results from multiple sketches in the
data plane.  Our work focuses on a compact invertible sketch design that
targets both heavy hitter and heavy changer detection. 

\para{Counter-based algorithms.}  Some approaches 
\cite{Misra1982,Karp2003,Sivaraman2017,Basat2017,Basat2017a, Yang2019}
track the most frequent flows in counter-based data structures (e.g., heaps
and associative arrays), which dynamically admit or evict flows based on
estimated flow sizes.  They target heavy hitter detection, but do not consider 
heavy changer detection and network-wide detection.

\para{Measurement in programmable switches.} Recent work focuses on
pushing measurement algorithms from end hosts to programmable switches 
\cite{Sivaraman2015,Jin2017,Sapio2017,Sivaraman2017,Harrison2018,Ben2018precision},
subject to the switch hardware constraints.  Our work demonstrates that
\sysname can be feasibly deployed in programmable switches to detect heavy
flows with limited resource overhead.

%---------------------------------------------------------------------
% Conclusion
%---------------------------------------------------------------------
\section{Conclusion}
\label{sec:conclusion} 

\sysname is an invertible sketch designed for fast and accurate heavy flow
detection.  It builds on the majority vote algorithm to enhance memory
management in two aspects: (i) small and static memory allocation, and (ii)
lightweight memory access in both update and detection operations.  It can
also be generalized for both scalable and network-wide detection.
Trace-driven evaluation in software demonstrates the throughput and accuracy
gains of \sysname.  We also show how the update performance of \sysname can be
boosted via SIMD instructions.  Evaluation in hardware demonstrates that
\sysname can be feasibly implemented in programmable switches with limited
resource overhead.

\bibliographystyle{abbrv}
\bibliography{paper}

%\vspace{-0.4in}
%\begin{IEEEbiographynophoto} {Lu Tang} is working toward the Ph.D. degree in
%Computer Science and Engineering at the Chinese University of Hong Kong. Her
%research interests are in network measurement. 
%\end{IEEEbiographynophoto}
%
%
%\vspace{-0.4in}
%\begin{IEEEbiographynophoto} {Qun Huang} received the Ph.D. degree in Computer
%Science and Engineering from the Chinese University of Hong Kong in 2015. He
%is now an Assistant Professor of the Department of Computer Science and
%Technology at Peking University.  His research interests are in distributed
%stream processing and network measurement. 
%\end{IEEEbiographynophoto}
%
%
%\vspace{-0.4in}
%\begin{IEEEbiographynophoto} {Patrick P. C. Lee} received the Ph.D. degree in
%Computer Science from Columbia University in 2008. He is now an Associate
%Professor of the Department of Computer Science and Engineering at the Chinese
%University of Hong Kong. His research interests are in various applied/systems
%topics including storage systems, distributed systems and networks, and cloud
%computing. 
%\end{IEEEbiographynophoto}

\end{document}